\newtheorem{corollary}{Corollary}
\newtheorem{lemma}{Lemma}
\newtheorem{proposition}{Proposition}
\newtheorem{remark}{Remark}
\newtheorem{conjecture}{Conjecture}
\newcommand{\setdiff}{{\!\setminus\!}}
\renewcommand{\iint}{{\int\!\!\!\!\int}}
\newcommand{\Ref}[1]{\eqref{#1}}
\newcommand{\sign}{\mathrm{sign}\,}
\newcommand{\arccot}{\mathrm{arccot}\,}
\newcommand{\eps}{\epsilon}
\newcommand{\veps}{\varepsilon}
\newcommand{\cosec}{\mathrm{cosec}}
\newcommand{\vect}[1] {\boldsymbol{{ #1}} }
\newcommand{\pV}{{\vect{p}}}           
\newcommand{\qV}{{\vect{q}}}           
\newcommand{\xV}{{\vect{x}}}           
\DeclareMathAlphabet{\mathpzc}{OT1}{pzc}{m}{it}
\newcommand\pzcE{{\mathpzc{E}}}
\newcommand{\uli}[1]{\underline #1 }
\newcommand{\Nset}{\mathbb{N}}
\newcommand{\Rset}{\mathbb{R}}
\newcommand{\Sset}{\mathbb{S}}
\newcommand{\Psp}{\mathfrak{P}}
\newcommand{\cC}{{\cal C}}
\newcommand{\cM}{{\cal M}}
\newcommand{\cE}{{\cal E}}
\newcommand{\cL}{{\cal L}}
\newcommand{\centered}{{$\Sset^2$-adjusted\ }}
\newcommand{\recentered}{{$\Sset^2$-re-adjusted\ }}
\newcommand{\re}{\mathop{\mathrm{Re}}}
\newcommand{\Pochhsymb}[2]{{\left(#1\right)_{#2}}}
\DeclareMathOperator{\DirichletL}{L}
\DeclareMathOperator{\HyperF}{F}
\newcommand{\Hypergeom}[5]{{\sideset{_#1}{_#2}\HyperF\!\left(\substack{\displaystyle#3\\\displaystyle#4};#5\right)}}
\DeclareMathOperator{\zetafcn}{\zeta}
\begin{document}


\title{``Magic'' Numbers in Smale's 7th Problem }

\author{\normalsize \sc{R. Nerattini$^{a}$, J. S. Brauchart$^b$, and M. K.-H. Kiessling$^c$}\\[-0.1cm]
	\normalsize $^a$ Dipartimento di Fisica e Astronomia, Universit\`a di Firenze, and \\[-0.1cm]
	\normalsize Istituto Nazionale di Fisica Nucleare (INFN), Sezione di Firenze, \\[-0.1cm]
	\normalsize  Via G. Sansone 1, Sesto Fiorentino (FI), I-50019, Italy;\\[-0.1cm]
	\normalsize $^b$ School of Mathematics and Statistics, University of New South Wales,\\[-0.1cm]
	\normalsize Sydney, NSW, 2052, Australia;\\[-0.1cm]
	\normalsize $^c$ Department of Mathematics, Rutgers University,\\[-0.1cm]
	\normalsize 110 Frelinghuysen Rd., Piscataway, NJ 08854, USA}
\vspace{-0.3cm}
\date{$\phantom{nix}$}
\maketitle
\vspace{-1.9cm}
\begin{abstract}
\vspace{-.1cm}

\noindent
        This paper inquires into the concavity of the map $N\mapsto v_s(N)$ from the integers $N\geq 2$ into the minimal average
standardized Riesz pair-energies $v_s(N)$ of $N$-point configurations on the sphere $\Sset^2$ for various $s\in\Rset$.
	The standardized Riesz pair-energy of a pair of points on $\Sset^2$ a chordal distance $r$ apart 
is $V_s(r)= s^{-1}\left(r^{-s}-1 \right)$, $s \neq 0$, which becomes $V_0(r) = \ln\frac1r$ 
in the limit $s\to 0$. 
        Averaging it over the {\tiny{$\left(\hskip-4pt\begin{array}{c} N\\2\end{array}\hskip-4pt\right)$}}
distinct pairs in a configuration and minimizing over all possible $N$-point configurations defines $v_s(N)$.
	It is known that $N\mapsto v_s(N)$ is strictly increasing for each $s\in\Rset$, and for $s<2$ also
bounded above, thus ``overall concave.''
	It is (easily) proved that $N\mapsto v_{-2}^{}(N)$ is even locally strictly concave, and 
that so is the map $2n\mapsto v_s(2n)$ for $s<-2$.
	By analyzing computer-experimental data of putatively minimal average Riesz pair-energies $v_s^x(N)$ 
for $s\in\{-1,0,1,2,3\}$ and $N\in\{2,...,200\}$, it is found that the  map $N\mapsto {v}_{-1}^x(N)$ is
locally strictly concave, while $N\mapsto {v}_s^x(N)$ is not always locally strictly concave for $s\in\{0,1,2,3\}$: 
concavity defects occur whenever $N\in\cC^{x}_+(s)$ (an $s$-specific empirical set of integers). 
	It is found that the empirical map $s\mapsto\cC^{x}_+(s),\ s\in\{-2,-1,0,1,2,3\}$, is set-theoretically 
increasing; moreover, the percentage of odd numbers in $\cC^{x}_+(s),\ s\in\{0,1,2,3\}$ is found to increase with $s$.
	The integers in $\cC^{x}_+(0)$ are few and far between, forming a curious sequence of numbers,
reminiscent of the ``magic numbers'' in nuclear physics.
	It is conjectured that these new ``magic numbers'' are associated with optimally symmetric optimal-log-energy $N$-point 
configurations on $\Sset^2$.
        A list of interesting open problems is extracted from the empirical findings, and some rigorous first steps toward their
solutions are presented.
        It is emphasized how concavity can assist in the solution to Smale's $7$th Problem, which asks for an efficient algorithm
to find near-optimal $N$-point configurations on $\Sset^2$ and higher-dimensional spheres. 
\end{abstract}

\vfill
\hrule
\smallskip\noindent
{\footnotesize
Typeset in \LaTeX\ by the authors. To appear in Journal of Statistical Physics under the title:
\emph{Optimal $N$-Point Configurations on the Sphere: ``Magic'' Numbers and Smale's 7th Problem} 
\\
In celebration of Doron Zeilberger's $(1+\surd\epsilon)\cdot60$-th birthday.

\noindent
\copyright 2014  The authors. This preprint may be reproduced for noncommercial purposes.}
\newpage

	\section{Introduction}
        In various fields of science, ranging from biology over chemistry and physics to computer science, 
one encounters $N$-point optimization problems for which the following one is archetypical.
	Consider $N\geq 2$ distinct points on the two-sphere  $\Sset^{2}$.
	Any such $N$-point configuration  will be denoted by $\omega_N \subset\Sset^2$.
	The positions of the $N$ points are conveniently given by vectors $\qV_{k}\in\Rset^3$ of 
Euclidean length $|\qV_{k}|=1$, $k=1,\dots,N$, so that $\left|\qV_{i}-\qV_{j}\right|$ is  
the \emph{chordal distance} between the two points in the pair $(i,j)$.
        Any pair $(i,j)$ is now assigned a 
\emph{standardized Riesz pair-energy}\footnote{Traditionally the Riesz pair-energy
		is defined as $\widetilde{V}_s(r)=r^{-s}$ for $s\neq 0$, and $\widetilde{V}_0(r)=-\ln r$ for $s=0$. 
		This has the disadvantages that $\widetilde{V}_0(r)\neq\lim_{s\to 0}\widetilde{V}_s(r)$, and
		that one has to seek energy-minimizing configurations for $s\geq 0$ yet 
		energy-maximizing ones for $s<0$.}
$V_{s}(\left|\qV_{i}-\qV_{j}\right|)$, with 
\begin{eqnarray}
	V_{s}(r) &\equiv&\label{RieszVs}
	s^{-1}\left(r^{-s}-1 \right), 	\qquad s\in \Rset, \quad s \neq 0;\\
	V_{0}(r) &\equiv& \label{limiteNULL}
	-\ln{r}                
\qquad\qquad\qquad \Big(\!\!=
\lim_{s\rightarrow 0} V_{s}(r)\Big).
\end{eqnarray}
	The \emph{average standardized Riesz pair-energy of a configuration} is given  by
\begin{equation} 
	\langle V_s\rangle (\omega_N)
\equiv\label{aveRIESZpairENERGY}
	\frac{2}{N(N-1)}\;\sum\sum\limits_{\hskip-.6truecm 1\leq i < j\leq N}^{} V_s(|\qV_i-\qV_j|),
\end{equation}		
and the \emph{minimal average standardized Riesz pair-energy} by\footnote{Our $v_s^{}(N)$ equals $2\veps_s(N)$, where
        $\veps_s(N)$ denotes the so-called ``pair-specific ground state energy'' in physics
        (cf. \cite{EyinkSpohn,KieRMP,KieJSPeNULL}).
        While $\veps_s(N)$ is indeed a physically meaningful quantity, its attribute ``pair-specific''
	is a misnomer --- it should actually refer to  the statistically meaningful $v_s^{}(N)$, for the 
	number of different pairs is $N(N-1)/2$.}
\begin{equation} 
	v_s^{}(N)
\equiv\label{MINaveRIESZpairENERGY}
	\inf_{\omega_N\subset\Sset^2} \langle V_s\rangle(\omega_N).
\end{equation}		
	The problem is to determine $v_s^{}(N)$ together with the minimizing configuration(s) $\omega_N^s$
(also known as $N$-tuple of elliptic $s$-Fekete points\footnote{Originally, 
		M. Fekete (cf. \cite{Fekete}) studied points from an infinite compact set in the complex plane 
		that maximize the product of all mutual distances, which is equivalent to minimizing the average 
		standardized Riesz pair-energy for $s\to 0$.})
whenever such exist.\footnote{By the lower semi-continuity of the standardized
	Riesz pair-energy and the compactness of the sphere, there always exist $N$ labeled points 
	(not necessarily pairwise different if $s \leq -2$) whose average pair-energy equals $v_s^{}(N)$.
	A minimizing set of $N$ labeled points is not a	proper minimizing $N$-{\em point configuration} 
        unless all points are pairwise different.}
For the convenience of the non-expert reader,
in our Appendix \ref{sec:appdx.A} we present a brief survey of this intriguingly beautiful and rich, but also challengingly hard 
mathematical problem to which nobody knows the general solution.
	Only for one distinguished value of $s$ has this problem been solved for all $N$, and only for
a few $N$-values has it been conquered for all $s$.
        Also computers are soon overwhelmed when $N$ becomes large.

	Indeed, as apparently first noticed in \cite{ErberHockneyTWO}, 
computer-assisted searches (see,
e.g.,
\cite{RSZa},\hskip2pt
\cite{RSZb},\hskip2pt
\cite{AWRTSDW},\hskip2pt
\cite{PerezGetal},\hskip2pt
\cite{ErberHockneyTWO}, \hskip2pt
\cite{BCNTlett},\hskip2pt
\cite{BCNTfull},\hskip2pt
\cite{BCEG}, 

\newpage
\noindent
\cite{WalesUlker}, \cite{WalesMackayAltshuler})
for the minimizing configuration(s) suggest that the number of local minimum energy configurations (most of which are not globally
minimizing) grows exponentially\footnote{The 
     growth rate should have a significance similar to ``the \emph{complexity} of the energy landscape,'' 
     see \cite{WalesBOOK}.
          Studies of the Riesz $s$-energy landscape for $N$-point configurations on $\Sset^2$ have only begun recently,
          see \cite{CalefETal2013} and references therein.}
with $N$.
         An exponential proliferation of local minimizers, and by implication of critical points, eliminates the possibility
of a polynomial-in-$N$ algorithm which first solves the algebraic problem of finding all critical points, then evaluates 
their energies, and finally picks the lowest energy configuration(s) amongst all critical points.\footnote{For an exponential 
  time algorithm which provides rational points on the sphere whose logarithmic energy differs from the optimal value by at most 1/9,
  see Proposition 1.11 in \cite{BeltranA}.}
	Since it is so difficult to find the optimizing configurations, one may need to settle for less and employ computer-assisted 
(random) searches.\footnote{A good collection of existing search algorithms can be found at the website \cite{BCM}.} 
        Unfortunately, with increasing likelihood when $N$ becomes large a practically feasible random search 
will find only one of these exponentially many non-global minima \emph{without} guarantee of its energy being close to the optimum.
        This is not good enough. 
	What one wants is a controlled approximation.
        Smale's 7th Problem \cite{Smale} is formulated in this spirit: 

\vskip-18pt$\phantom{nix}$

\begin{quote}
\emph{Find an algorithm which, upon input $N$, in polynomial time returns a configuration $\omega_N$ on $\Sset^2$
whose average standardized Riesz pair-energy does not deviate from the optimal value obtained with $\omega_N^s$ by more than a 
certain conjectured $s$-specific function of $N$.} 
\end{quote}

\vskip-24pt$\phantom{nix}$

\begin{remark}\label{rem:Smale7}
Smale's problem was originally posed for $s=0$, viz. $V_0(r)=-\ln r$, 
and then not for the average logarithmic pair-energy but for the total logarithmic energy of the $N$-point configurations, 
i.e. for ${\mathcal{E}}_0(N)=${\scriptsize{$\left(\hskip-4pt\begin{array}{c}N\\2\end{array}\hskip-4pt\right)$}}$v_0^{}(N)$. 
	The ``$s$-specific function of $N$'' in this original formulation is the fourth term of the partially proved, 
partially conjectured large-$N$ asymptotic expansion
of the optimal logarithmic energy of $N$-point configurations on $\Sset^2$~\cite{RSZa,RSZb},

\vskip-17pt
\begin{equation}
{\mathcal{E}}_0(N)
=\label{asympCONJlogS2}
 a N^2 + b N\ln N +cN + d \ln N + \mathcal{O}(1),
\end{equation}

\vskip-5pt\noindent
with $a= \textstyle{\frac{1}{4}}\ln\textstyle{\frac{e}{4}}$ and $b = - \textstyle{\frac{1}{4}}$ rigorously known, and
with rigorous upper and lower bounds on $c$, and numerical estimates for $d$, given in \cite{RSZa} (for an update, see 
 \cite{BrHaSa2012}).\footnote{In \cite{BrHaSa2012} it is conjectured
		that $c = \ln\big(2(2/3)^{1/4}\pi^{3/4}/\Gamma(1/3)^{3/2}\big)$.
		Recently, a rigorous determination of $c$ for weighted logarithmic Fekete problems in $\Rset^2$, to 
		which the logarithmic Fekete problem on $\Sset^2$ is related by stereographic projection,
		was given in \cite{SandierSerfaty}; unfortunately, their conditions on the weights barely 
		miss the weight obtained by stereographic projection. 
                (Note added: After submission of the revised version of our paper we were informed by Laurent B\'etermin that 
                in \cite{Betermin} the order-$N$ term in  (\ref{asympCONJlogS2}) 
                is proved with the Sandier--Serfaty method; see also \cite{BeterminZhang}.)}
	The coefficient ``$d$'' in Smale's problem is unspecified and allowed to be bigger than any asymptotically 
determined ``$d.\!$''\,\footnote{Currently, only numerical evidence is available for the fourth term  in the putative 
  asymptotic expansion, and it is also conceivable that this term is actually not truly asymptotic.}

        Subsequently Smale extended his problem to $s\in[0,2)$; and he remarked 
	that analogous problems can be formulated for higher-dimensional spheres 
$\Sset^{\mathrm{d}},{\mathrm{d}}=3,4,...$.
\end{remark}

\newpage
        To the best of our knowledge, no algorithm has yet been found which delivers what Smale is asking for.\footnote{For a
          state-of-the-art survey, see \cite{BeltranB}.}
        Instead, as already mentioned, when $N$ gets (too) large, random searches\footnote{For a link to random
          polynomials, see  \cite{ABS}; in particular see their Thm.0.2.}
 and educated guesses (inevitably also of the type ``trial and error'') are employed which produce many different 
local energy minimizers for the same $N$, amongst which the one with the lowest energy is a putatively energy-optimizing 
configuration --- until a better one is found eventually, perhaps.
        In this situation it becomes important to search for necessary criteria that can test those configurations, which
have produced the lowest energy amongst all empirically found configurations with a given $N$, for their potential optimality.
        We emphasize that such types of tests can only identify non-optimal data, but not confirm optimal ones.


        One such test, based on the strict monotonic increase of $N\mapsto {v}_s^{}(N)$ (see \cite{Landkof} for a 
proof),\footnote{When the monotonicity proof was recently rediscovered \cite{KieRMP,KieJSPeNULL}, that author remarked
		(\cite{KieJSPeNULL}, p.~276) that the ``[monotonic increase of $N\mapsto v_s^{}(N)$] 
		and its proof are quite elementary and presumably known, yet after a serious search 
		in the pertinent literature I came up empty-handed,...''. (cf. also \cite{KieRMP}, p.~1188).
                M.K. likes to thank Ed Saff for subsequently pointing out to him that the monotonicity result 
                and its proof were already given in \cite{Landkof}, indeed.
		Happily, the applications of the monotonicity presented in \cite{KieRMP} and \cite{KieJSPeNULL} were novel.}
was proposed and carried out successfully in \cite{KieJSPeNULL}.
   There, about two dozen experimentally found putatively minimal energies ${v}_s^x(N)$ were identified, 
in publically available lists of empirical data, at which the empirical map $N\mapsto {v}_s^x(N)$ failed to be
monotonically increasing --- hence, these data could not possibly be true minimal energies $v_s(N)$, making it plain that
it was / is worth an effort to do better.

    Of course, all empirical maps $N\mapsto v_s^{x}(N)$ which we analyzed in this paper we also tested for whether 
they strictly increase --- all data passed this first derivative test.

\begin{remark}\label{rem:vOFsISincreasing}
    Incidentally, also the map $s\mapsto v_s^{}(N)$ is strictly monotonically increasing (see Appendix~\ref{sec:appdx.C} for
a proof), supplying another necessary criterion for optimality of putative standardized Riesz pair-energy minimizers.
   All empirical maps $s\mapsto v_s^{x}(N)$ which we analyzed in this paper we also tested for whether 
they strictly increase --- all data passed this first derivative test, too.
\end{remark}

   The main purpose of the present paper is to report the results of our quest for additional tests in form of
necessary criteria for minimality, based not on the first, but on the second discrete derivative of $N\mapsto {v}_s^{}(N)$.
	Our point of departure was the observation that strict monotonic increase of $N\mapsto v_s^{}(N)$ 
in concert with its boundedness above  for $s < 2$ (a simple variational estimate) 
implies that the \emph{overall shape} of the graph $\{(N,v_s^{}(N)):N = 2,3,....\}$ must be ``concave in the large'' 
for each $s<2$.
        This raised the question whether this graph is perhaps even \emph{locally}, at each $N>2$, strictly concave when $s<2$.
        Moreover, although $v_s^{}(N)$ is not bounded above for $s\geq 2$, the leading-order terms of the asymptotic 
large-$N$ expansion of $v_s^{}(N)$, namely $v_2(N) \propto \ln N$ \cite{KuijlaarsSaff} and $v_s^{}(N) \propto N^{(s-2)/2}$ 
for $s > 2$ \cite{HardinSaffTWO}, are strictly locally concave for $2\leq s<4$, so that it was even conceivable that so was 
$N\mapsto v_s^{}(N)$.
	So the question we asked ourselves was whether the discrete second derivative of $v_s^{}(N)$, given by
\begin{equation} \label{DDOTvs}
\ddot{v}_s^{}(N) = v_s^{}(N-1) - 2v_s^{}(N) + v_s^{}(N+1),\qquad N >2,
\end{equation}
is perhaps strictly negative for all $N>2$ when $s<2$ (or possibly even when $s\leq 4$).
        Clearly, knowledge of any $s$-values for which the map $N\mapsto {v}_s^{}(N)$ is everywhere strictly concave will 
yield a necessary criterion for minimality that can be fielded as a test for lists of empirical data of those putatively 
minimal Riesz $s$-energies.
 
\begin{remark}\label{rem:S1concavity}
        As with Smale's 7th problem, analogous concavity questions can be raised about the minimal average Riesz $s$-energy 
$v_s^{}(N)$ for the higher-dimensional spheres $\Sset^{\rm d}$, $\rm d =3,4,...$, and also for the circle $\Sset^1$
(although no optimality test for putative minimizers on $\Sset^1$ is needed --- all optimizers with $N\geq 2$ and $s\geq -2$ 
are explicitly known, see \cite{BrHaSa2009}).
        Interestingly enough, for the problem on $\Sset^1$ the $N$-dependence of $N\mapsto v^{}_{s}(N)$ is polynomial for
special $s$-values \cite{Br2011arXiv} --- in particular, the map $N\mapsto v^{}_{2}(N)$ is affine linear; furthermore, our own 
partly analytical / partly numerical studies of the explicit finite sum formulas for ${v}_s^{}(N)$ strongly support the conjecture 
that $N\mapsto v^{}_{s}(N)$ is locally strictly concave for $s\in (-2,2)$ and locally strictly convex for $s> 2$, indeed!
\end{remark}

        What we found out about the minimal average Riesz $s$-energy ${v}_s^{}(N)$ for $\Sset^2$ --- mostly empirically yet 
partly rigorously --- went beyond our expectations, including not only strict concavity of the theoretical map 
$N\mapsto {v}_s^{}(N)$, respectively the empirical map $N\mapsto {v}_s^{x}(N)$, for some $s$-values, but also hints 
at a monotonic increase with $s$ of the set of convexity points whenever concavity of ${v}_s^{x}(N)$ failed, suggesting 
novel and unexpected test criteria for optimality!
        In particular, the hypothesis that the empirically suggested monotonicities are factual has led us to discover 
in a published data list three non-optimal data points which had not previously been detected.

        More precisely, we readily affirmed the strict concavity of $N\mapsto {v}_{s}^{}(N)$ for the special value $s=-2$, 
for which the problem of the elliptic $s$-Fekete points is exactly solvable for all $N$; cf. Appendix \ref{sec:appdx.A}.
        Thus, simply by differentiating the expression \Ref{vSUBminusTWO} for $v_{-2}(N)$ twice one gets a strictly
negative expression for the second discrete derivative.
	Furthermore, twofold discrete differentiation of $v_s^{}(2n)$ when $s<-2$ (see \Ref{vsFORsBELOWminusTWO})
showed that also $2n\mapsto v_s^{}(2n)$ is strictly locally concave for $s<-2$;
of course, this does not prove that $N\mapsto v_s^{}(N)$ is strictly concave for all $N>2$ when $s<-2$, but it suggests 
that one may be able to prove it rigorously.
        This already exhausts the $s$-values for which we were able to rigorously prove some strict local 
concavity result.
        Unfortunately, the regime $s\leq -2$ is of rather academic interest --- in particular, for the
exactly solvable case $s=-2$ one does not need any necessary criteria for minimality to test putative optimizers!
        The practically interesting regime is $s>-2$.

	In the absence of any closed form expressions of $v_s^{}(N)$ for $s>-2$ we decided 
to gather some experimental input and turned to the empirical data published in \cite{ErberHockneyTWO,Sloanetal,RSZb,Ca2009}, 
and \cite{WalesUlkerDATAbase}, and to those publicly available at the website \cite{BCM} (some of which we generated ourselves).
        By carefully scrutinizing these data lists we discovered that strict local concavity of $N\mapsto v_s^{}(N)$ may in fact
hold for $s=-1$, but not for $s\in\{0,1,2,3\}$ ---  we then proved, quasi-rigorously for $s\in\{0,1,2,3\}$, rigorously for 
$s\geq 10$, that local concavity of $N\mapsto v_s^{}(N)$ \emph{fails}.

        We then inspected the sets of the $N$-values at which $\ddot{v}_s^x(N)\geq 0$ more closely.
        Empirically we found that these sets were set-theoretically increasing with $s$ --- except for
two relatively large $N$-values, namely $N=177$ and $N=197$, when $s$ was 2 or 3.
        Since the empirical data become less trustworthy the larger $N$ becomes, and also the larger $s$ becomes,
we used the ``Thomson applet'' at \cite{BCM} to see whether we could find configurations with lower energy using 
an ``educated guess'' as input  configuration for $N=177$ and $N=197$. 
        Happily we succeeded, and when we used these better energy data points for $N=177$ and $N=197$ at $s=2$, respectively 3, 
those $N$-values were no longer exceptions to the empirical overall monotonic increase of the sets of the $N$-values at which 
$\ddot{v}_s^x(N)\geq 0$.

        To summarize, we have collected empirical evidence for the following conjectures: 

\begin{conjecture}\label{conj:SconcavityFORsMINone}
The map $N\mapsto v_{-1}^{}(N)$ is locally strictly concave for all $N>2$.
\end{conjecture}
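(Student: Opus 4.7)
Because $V_{-1}(r)=1-r$ is \emph{affine} in the chordal distance, one has $v_{-1}^{}(N) = 1 - M_1(N)/\binom{N}{2}$, where
\begin{equation*}
M_1(N) := \max_{\omega_N \subset \Sset^2}\sum_{1\leq i<j\leq N}|\qV_i-\qV_j|
\end{equation*}
is the classical maximum sum of chordal distances. Strict local concavity of $N\mapsto v_{-1}^{}(N)$ is therefore equivalent to strict local \emph{convexity} of the maximal average chordal distance $N\mapsto M_1(N)/\binom{N}{2}$, which, after clearing denominators, becomes
\begin{equation*}
N(N+1)\, M_1(N-1) + (N-1)(N-2)\, M_1(N+1) > 2(N+1)(N-2)\, M_1(N), \qquad N>2.
\end{equation*}

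My first move would be to push the Landkof-type averaging argument to see how far it takes us. Averaging the sum of chordal distances over the $N+1$ sub-configurations of $\omega_{N+1}^*$ obtained by deletion of one point gives $M_1(N) \geq \tfrac{N-1}{N+1}M_1(N+1)$, and likewise $M_1(N-1)\geq \tfrac{N-2}{N}M_1(N)$. A short substitution shows that if both Landkof inequalities were \emph{saturated}, they would force \emph{exact equality} in the three-term inequality above, reducing to affine behaviour in $N$. So the Landkof bounds are barely insufficient; strict concavity is precisely equivalent to a quantitative lower bound on the ``Landkof gap'', i.e.\ the amount by which the average sub-configuration is strictly sub-optimal at the smaller size. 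This reformulation is, to my mind, the key conceptual step.

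For the asymptotics, Stolarsky's invariance principle expresses $\tfrac{4}{3}-2M_1(N)/N^2$ as (a positive constant times) the minimum squared spherical-cap $L^2$-discrepancy $\mathcal{D}^2_{\min}(N)$; combined with Beck's two-sided bound $\mathcal{D}^2_{\min}(N)=\Theta(N^{-3/2})$ on $\Sset^2$, one obtains
\begin{equation*}
v_{-1}^{}(N) = -\tfrac{1}{3} - \tfrac{4}{3(N-1)} + \Theta(N^{-3/2}).
\end{equation*}
The Frostman part $-\tfrac{4}{3(N-1)}$ alone contributes $-8/[3(N{-}2)(N{-}1)N]$ to $\ddot v_{-1}^{}(N)$, which is strictly negative, so if the Stolarsky remainder were smooth in $N$ its second discrete difference would be only $O(N^{-7/2})$ and concavity would follow for large $N$. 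The catch, however, is that $\mathcal{D}^2_{\min}(N)$ is not known to depend smoothly on $N$: its value may jump at integers admitting exceptional symmetries, and its second discrete difference could in principle be as large as $\Theta(N^{-3/2})$, dwarfing the Frostman term.

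Controlling these discrepancy fluctuations is the main obstacle, and it must be done in a way that discriminates $s=-1$ from $s\in\{0,1,2,3\}$, where the analogous concavity statement \emph{fails} at precisely those $N$ where $\mathcal{D}^2_{\min}(N)$ exhibits ``magic-number'' behaviour. The strategy I would pursue hardest is an intrinsic coupling argument that exploits the affinity of $V_{-1}$ in $r$: given optimal $\omega_{N-1}^*$ and $\omega_{N+1}^*$, construct a randomized pair of $N$-point configurations whose expected combined sum of chordal distances strictly exceeds $M_1(N-1)+M_1(N+1)$ by a quantitative margin, directly delivering the three-term inequality. Complementing this with a finite verification for $N\leq 200$ against the empirical $v_{-1}^{x}(N)$ data lists used in the paper would then close the conjecture. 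Building such a coupling --- quantitative, configuration-adapted, and specific to the linear kernel $V_{-1}$ --- is the step I do not currently see how to realize, and is what makes the conjecture both plausible and genuinely deep.
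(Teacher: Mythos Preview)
This statement is Conjecture~1 in the paper, and the paper does \emph{not} prove it. The authors' support for it is purely empirical: they compute $\ddot v_{-1}^x(N)$ from the putatively optimal energies of \cite{RSZb} for $N\le 200$ and observe that all values are negative (their Fig.~2 and the data tables). In Section~4 they develop rigorous upper and lower bounds on $\ddot v_s(N)$ for $s<0$ (Propositions~1, 4, 5), but they state explicitly that these bounds are ``not strong enough to prove negativity of $\ddot{v}_s^{}(N)$ uniformly in $N>2$ for any $s>-2$.'' The conjecture remains open.

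Your proposal is, accordingly, not a proof either---and you say so. What you have written is a sound research sketch: the reformulation via $v_{-1}(N)=1-M_1(N)/\binom{N}{2}$ and the three-term inequality are correct; your observation that the Landkof deletion inequalities saturate to exact affinity, so that concavity is precisely the ``Landkof gap,'' is essentially the content of the paper's Remark following Proposition~4 (where they note that \eqref{TWOvsESTIMATE} can be an equality and that improving the step to \eqref{TWOvsESTIMATEmin} is what is needed). Your asymptotic analysis via Stolarsky's invariance and Beck's discrepancy bounds parallels the paper's Section~5, and you correctly identify the obstruction: the $\Theta(N^{-3/2})$ remainder is not known to vary smoothly in $N$, so its second difference could in principle swamp the Frostman contribution. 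The paper makes the same point in its discussion of $\ddot\Omega_s(N)$ after Corollary~1. Your proposed coupling construction is a reasonable line of attack, but as you acknowledge, it is the missing step. In short, there is no gap in your reasoning---only the honest admission that the conjecture is not yet within reach, which matches the paper's own position.
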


\begin{conjecture}\label{conj:noSconcavityFORsIN0123}
When $s\in\{0,1,2,3\}$ the map $N\mapsto v_s^{}(N)$ is not locally strictly concave, 
and the set of $N$-values at which strict concavity fails is expanding with $s$.
\end{conjecture}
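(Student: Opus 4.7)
The statement naturally splits into two sub-claims: (a) for each $s\in\{0,1,2,3\}$ there exists at least one $N_\star=N_\star(s)>2$ with $\ddot v_s^{}(N_\star)\ge 0$, and (b) the collection of such $N_\star$ nests monotonically as $s$ increases through $\{0,1,2,3\}$. For (a), the plan is to exploit a ``magic'' $N_\star$ at which symmetry forces the optimal configuration to be unusually efficient, so that $v_s^{}(N_\star)$ dips below the chord joining $v_s^{}(N_\star-1)$ to $v_s^{}(N_\star+1)$. The prime candidate is $N_\star=12$: the icosahedron has only two non-antipodal inter-vertex chord-distances, $2\sin(\pi/5)$ and $2\sin(2\pi/5)$, each with a known multiplicity, so that $\langle V_s\rangle(\omega^{\mathrm{ico}})$ is a closed-form rational expression in $s$ that serves as a sharp upper bound on $v_s^{}(12)$---known to be tight in a range of $s$ via the Cohn--Kumar universal-optimality machinery. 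For the neighbors $N_\star\pm 1$, one uses perturbed icosahedra (one vertex deleted, respectively one vertex added at a face centroid, followed by a brief Newton relaxation) to obtain explicit upper bounds, and then combines them with Delsarte--Yudin linear-programming lower bounds to close the inequality $v_s^{}(11)+v_s^{}(13)-2v_s^{}(12)\ge 0$ for each $s\in\{0,1,2,3\}$.

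\textbf{Monotonicity in $s$.} For (b), the plan is to analyze the $s$-dependence of $\ddot v_s^{}(N)$ at fixed $N$. Writing $v_s^{}(N)=\langle V_s\rangle(\omega^s_N)$ for a minimizer $\omega^s_N$, the envelope theorem gives $\tfrac{\dd}{\dd s}v_s^{}(N)=\langle\partial_s V_s\rangle(\omega^s_N)$ wherever the minimizer is locally unique. Since $\partial_s V_s(r)$ amplifies the contribution of small chord-distances as $s$ grows, one expects that configurations which are already unusually efficient at some $s$ (e.g., the icosahedron at $N=12$) become \emph{more} efficient relative to their neighbors as $s$ increases. The plan is to convert this heuristic into a one-sided estimate: if $\ddot v_{s_0}^{}(N_\star)\ge 0$, then $\tfrac{\dd}{\dd s}\ddot v_s^{}(N_\star)\ge 0$ at $s=s_0$, hence the inequality propagates forward in $s$ inside any regime where the combinatorial type of the minimizers at $N_\star$ and $N_\star\pm 1$ is stable.

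\textbf{Main obstacle.} The hard part is obtaining matching lower bounds on $v_s^{}(N_\star\pm 1)$: without a closed-form optimizer there, one must fall back on Delsarte--Yudin linear-programming inequalities, whose quantitative sharpness at the specific small integers $s\in\{0,1,2,3\}$ is limited. This is presumably why the authors describe their argument as only ``quasi-rigorous'' in this range; in practice one would certify specific near-optimal configurations from the databases \cite{BCM,WalesUlkerDATAbase} by validated Newton iteration in interval arithmetic, upgrading the computer-assisted search to a machine-checkable proof. A secondary obstacle arises in part (b): ruling out symmetry-breaking bifurcations of the minimizer at $N_\star$ (and at $N_\star\pm 1$) as $s$ traverses $\{0,1,2,3\}$ requires detailed control of the $s$-dependent energy landscape, which is not currently available even near the universally-optimal icosahedral point $N=12$.
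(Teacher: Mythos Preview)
The statement you are attempting to prove is labelled a \emph{Conjecture} in the paper, and the paper does not prove it. Its support is (i) the empirical data analysis of Section~\ref{sec:data.analysis}, where the experimental sets $\cC_+^x(s)$ for $s\in\{-1,0,1,2,3\}$ and $N\le 200$ are tabulated and found to nest, and (ii) the quasi-rigorous small-$N$ computations of Section~\ref{sec:rigorsSTAR}, which establish that $\ddot v_s(N)>0$ at isolated points (e.g.\ $N=4$ for $s\in\{1,2\}$, $N=6$ for $s\in[0,2)$) \emph{conditional} on the bi-pyramids being optimal for $N\in\{5,7\}$. Neither the paper nor any known argument proves the full statement.

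Your part~(a) strategy is more complicated than what the paper actually does. The paper uses $N_\star=4$ and $N_\star=6$, not $N_\star=12$: the neighbours $N=3,5,7$ have explicit (conjectured) optimizers---equilateral triangle, triangular bi-pyramid, pentagonal bi-pyramid---so $\ddot v_s(4)$ and $\ddot v_s(6)$ reduce to closed-form elementary functions of $s$ (equations~\eqref{eq:DDvs.4} and \eqref{eq:DDvs.6}), whose positivity on $[0,2)$ is checked directly. Your choice $N_\star=12$ forces you to control $v_s(11)$ and $v_s(13)$, for which no closed-form optimizer is known; the LP/Yudin lower bounds you propose are not known to be sharp enough to close the gap, and the paper explicitly remarks (Subsection~\ref{sec:specific.DDvs.12}) that the available upper bound on $\ddot v_s(12)$ from crude trial configurations is \emph{positive for all $s$}, i.e.\ useless. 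So your route at $N=12$ is currently blocked, whereas the paper's route at $N=4,6$ succeeds (quasi-rigorously).

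Your part~(b) is not a proof, and you acknowledge as much. The envelope-theorem heuristic that ``efficient configurations become more efficient as $s$ grows'' does not yield the one-sided derivative inequality you want: $\ddot v_s(N)$ involves three different minimizers $\omega_{N-1}^s,\omega_N^s,\omega_{N+1}^s$, each with its own $s$-dependence, and there is no mechanism that forces their combined second difference to be monotone in $s$. The paper treats the monotone inclusion $\cC_+(0)\subset\cC_+(1)\subset\cC_+(2)\subset\cC_+(3)$ purely as an empirical observation (indeed, it was \emph{violated} by earlier published data until the authors found lower-energy configurations at $N\in\{177,197\}$), and poses it as the open Question~Q1. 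No argument in the paper---or, to my knowledge, elsewhere---establishes it.
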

\noindent
        Clearly, if proven true, these regularities will serve as useful test criteria for optimality of 
empirical large $N$-configurations at the above $s$-values.
        Yet our empirical findings also suggest a catalog of interesting new questions for general $s$-values.

\smallskip
\centerline{\textbf{ A Catalog of Interesting Questions (and some Partial Answers)}}
\smallskip

	To pose our questions sharply, we define several new quantities.
	First of all, for each $s$ we partition the integer subset $\{N > 2\}$ 
into three mutually disjoint subsets:
the \emph{set of strict local concavity} $\cC_{-}(s)$,  
the \emph{set of strict local convexity} $\cC_{+}(s)$,  and
the \emph{set of local linearity} $\cC_{0}(s)$, defined as
\begin{align}
\cC_{-}(s) &\equiv \{N>2| \ddot{v}_{s}(N) < 0 \}, \label{CONCsetOFs} \\
\cC_{+}(s) &\equiv \{N>2| \ddot{v}_{s}(N) > 0 \}, \label{CONVsetOFs} \\
\cC_{0\,}(s)  &\equiv \{N>2| \ddot{v}_{s}(N) =0 \}, \label{LINsetOFs}
\end{align}
respectively.
	We note that $\cC_{-}(s)\cup\cC_{0}(s)\cup\cC_{+}(s)= \{N > 2 \}$.
	We are now ready to raise our first mathematical question.

	Empirically we found that $\cC_{+}^{x}(0)\subset\cC_{+}^{x}(1)\subset \cC_{+}^{x}(2)\subset\cC_{+}^{x}(3)$. 
	Thus we ask:
\begin{enumerate}
\item[Q~1:] \hskip-2pt\emph{Is $s\mapsto\cC_{+}(s), s\!\in\!\Rset$, monotonic increasing in the sense of set-theoretic inclusion?}
\end{enumerate}

        If the answer to Q~1 is affirmative, then this set-theoretical monotonicity supplies a potentially useful test for optimality of
putative energy minimizers for any real $s$, not just for the handful of integer $s$-values which we have studied empirically.
        In particular, it would imply that $\cC_{+}(s) =\emptyset$ for all $s \leq -2$
(for $\cC_{+}(-2)=\emptyset$), and even for all $s \leq -1$ if Conjecture~\ref{conj:SconcavityFORsMINone} is true.
        More precisely, an affirmative answer to Q~1 would imply that for some $s_*^{}\geq-2$ one has $\cC_{+}(s) =\emptyset$ 
whenever $s \leq s_*^{}$.

        Incidentally, by itself this would not yet imply \emph{strict} local concavity of the map $N\mapsto v_s^{}(N)$ for $s < s_*^{}$, 
only local concavity, because we cannot conclude that $\cC_{0}(s)$ would be empty for all $s < s_*^{}$;\footnote{Of course, as a test 
  criterion for \emph{empirical} data concavity or strict concavity are equally fine.}
        yet we suspect that in fact $\cC_{0}(s)=\emptyset$ for all $s < s_*^{}$.

        To address the question of strict local concavity of $N\mapsto v_s^{}(N)$ independently of whether or not the answer to Q~1 is
affirmative, we define  $s_*^{}$ as follows,
\begin{equation}
s_*^{}  \equiv \label{sSTARdef}
\sup\{ s^\prime | \cC_{0}(s)\cup\cC_{+}(s)=\emptyset\ \forall\ s \leq s^\prime \}.
\end{equation}
	Note that the ``sup'' cannot be a ``max'' because for any fixed configuration $\omega_N$ the
map $s\mapsto \langle V_s\rangle(\omega_N)$ is a $C^\infty$ function, which implies that the map $s\mapsto v_s^{}(N)$ 
is a $C^0$ function. 
        ($C^\infty$ regularity of $s\mapsto v_s^{}(N)$ should hold almost everywhere, but for each $N> 2$ the minimizing 
configuration may change discontinuously at some $s$-value(s), at which only $C^0$ regularity can be guaranteed.)
	Alternatively, $s_*^{}$ is defined as 
\begin{equation}
s_*^{}
\equiv \label{sSTARdefALT}
\inf\{s| \text{ $\ddot{v}_s^{}(N) \geq 0$ for at least one $N > 2$} \},
\end{equation}
and ``$\inf{}=\min{}$'' if and only if $s_*^{}>-\infty$.
        Thus we ask:
\begin{enumerate}
\item[Q~2:] \emph{Is $s_*^{}>-\infty$?}
\item[Q~3:] \emph{If $s_*^{}>-\infty$, then what is the value of $s_*^{}$?}
\end{enumerate}

	Because of the strict local concavity of the maps $N\mapsto v_{-2}(N)$ and $2n\mapsto v_{s}(2n)$ for $s<-2$,
we not only suspect that $s_*^{}>-\infty$, but that $s_*^{}\geq -2$.
	In fact, the strict local concavity of the empirical map $N\mapsto v_{-1}^{x}(N)$ even suggests that $s_*^{}\geq -1$,
while the violations of strict local concavity by the empirical map $N\mapsto v_{0}^{x}(N)$ suggest that $s_*^{} <0$.
        If confirmed that $s_*^{}\in(-2,0)$, strict local concavity of $N\mapsto v_{s}^{}(N)$ can be fielded as rigorous test
criterion for optimality of putative optimizers for $s\in(-\infty,s_*^{})$, even though only $s\in(-2,s_*^{})$ would seem to be of 
practical interest.

\emph{If $s_*^{}>-\infty$, then} 
$\cC_{+}(s)=\emptyset$ for $s\leq s_*^{}$, and $\cC_{0}(s)=\emptyset$ for $s< s_*^{}$ but not for $s=s_*^{}$.  
	This leads us to now define the \emph{critical set of local linearity},
\begin{equation}
\cL_*
\equiv \label{magicSETsSTARdef}
\cC_{0}(s_*^{})\ (\mbox{if}\ s_*^{}>-\infty).
\end{equation}
        Assuming that $s_*^{}>-\infty$, we now expand our list of mathematical questions:
\begin{enumerate}
\item[Q~4:] \emph{Is the set $\cL_*$ finite or infinite?} 
\item[Q~5:] \emph{Can one explicitly compute the set $\cL_*$?}
\item[Q~6:] \emph{Which optimal configurations correspond to the set $\cL_*$?}
\end{enumerate}

	Again supposing that $s_*^{}>-\infty$, there are then also interesting questions to ask about the 
regime $s>s_*^{}$.
        In particular, since $\Nset$ is countable while $\Rset$ is not, the continuity of $s\mapsto \ddot{v}_s^{}(N)$ suggests
that $\cC_{0}(s)$ is empty almost everywhere (w.r.t. Lebesgue measure). 
        Thus it is natural to ask:
\begin{enumerate}
\item[Q~7:] \emph{Is the set of $s$-values for which $\cC_{0}(s)\neq\emptyset$ finite or infinite?}
\item[Q~8:] \emph{Can one compute the set of $s$-values at which $\cC_{0}(s)\neq\emptyset$?}
\end{enumerate}

As with $\cL_*$, one can raise similar questions for all non-empty $\cC_{0}(s)$, thus
\begin{enumerate}
\item[Q~9:] \emph{For each non-empty $\cC_{0}(s)$: is it finite or infinite?}
\item[Q10:] \emph{For each non-empty $\cC_{0}(s)$: can one compute it?}
\item[Q11:] \emph{For each non-empty $\cC_{0}(s)$: which configurations does it represent?}
\end{enumerate}

 Questions Q~2 -- Q11 do not presuppose that Q~1 is answered affirmatively.
 Yet if the answer to Q~1 is affirmative, then 
both $\lim_{s\to-\infty} \cC_{+}(s)\equiv\cC_{+}(-\infty)$ 
and  $\lim_{s\to\infty} \cC_{+}(s)\equiv\cC_{+}(\infty)$ exist.
   In particular, then $\cC_{+}(-\infty)=\emptyset=\cC_{+}(s)\;\forall s\leq s_*^{}$, 
with $s_*^{}\geq -2$, as we already know.
   As for the opposite limit, since the large-$N$ asymptotics of ${v}_s^{}(N)$ is 
locally strictly convex for $s>4$ \cite{HardinSaffTWO}, it is tempting to speculate whether there exists $s^*<\infty$ such that 
the map $N\mapsto {v}_s^{}(N)$ itself is locally strictly convex for all $s>  s^*$, which would mean that 
$\cC_{+}(\infty)=\Nset\setminus\{1,2\}= \cC_{+}(s)\;\forall\; s>  s^*$; equivalently, 
$\cC_{-}(s)=\emptyset$ for all $s\geq s^*$.
   However, we shall provide partly rigorous, partly numerical evidence for
$\cC_{+}(\infty)\neq\Nset\setminus\{1,2\}$.
   So $\cC_{+}(\infty)$, if it exists, is likely more complicated, and more interesting than the full set of
admissible integers $\Nset\setminus\{1,2\}$.
   Thus, under the hypothesis that Q~1 is answered affirmatively, we also ask:
\begin{enumerate}
\item[Q12:] \emph{Can one explicitly characterize $\cC_{+}(\infty)$?}
\end{enumerate}

    Finally, the question of $s^*$ persists:
\begin{enumerate}
\item[Q13:] \emph{Does there exist an $s^*< \infty$ such that $\cC_{+}(s)=\cC_{+}(\infty)$ for all $s>  s^*$?}
\item[Q14:] \emph{If yes, can one compute $s^*< \infty$?}
\end{enumerate}

        All these questions are presumably quite difficult to answer.
        In any event, it is reasonable to expect valuable insights even from partial answers.
        For example, to answer Q~2, and also Q~3 conditional on an affirmative answer for Q~2, one
may want to try proving a negative upper bound on $\ddot{v}_s^{}(N)$ for all $s$ below some critical value.
        Any upper bound on $\ddot{v}_s^{}(N)$ obtained in the process, even if not negative, would offer
a test criterion for optimality. 
        In this vein, in subsection \ref{sec:rigorous.upper.lower.bounds} of this paper we prove the following bounds:
\begin{proposition} \label{prop:dd.v.s.bounds.1}
	For $s < 0$ the second discrete derivative of $v_s^{}(N)$ is bounded above and below as follows,
\begin{equation*}
	\frac{2}{( N + 1 ) ( N - 2 )} \left( v_s^{}( N ) + \frac{1}{s} \right) 
\leq 
	\ddot{v}_s^{}(N) 
\leq 
	- \frac{2}{(N + 1) N} \left( v_s^{}( N ) + \frac{1}{s} \right).
\end{equation*}
\end{proposition}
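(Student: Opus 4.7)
The plan is to combine the Landkof monotonicity $v_s^{}(N-1) \leq v_s^{}(N) \leq v_s^{}(N+1)$ (already reviewed in the paper) with two complementary \emph{point-duplication} estimates. These estimates exploit the one feature special to the regime $s<0$: that $V_s(0)=-1/s$ is finite, so two points of a sphere configuration may be placed at a common location at finite energetic cost.

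For the first estimate I would take a minimizer $\omega_N^s = \{\qV_1^\ast,\ldots,\qV_N^\ast\}$ and, for each $k=1,\ldots,N$, form the degenerate $(N+1)$-point configuration obtained by duplicating $\qV_k^\ast$; its total pair-energy is $E_k = \mathcal{E}_s^{}(N) + V_s(0) + S_k$, where $\mathcal{E}_s^{}(N) = \binom{N}{2} v_s^{}(N)$ and $S_k = \sum_{i\neq k} V_s(|\qV_k^\ast - \qV_i^\ast|)$. Since $\sum_k S_k = 2\mathcal{E}_s^{}(N)$, the average of $E_k$ over $k$ equals $\tfrac{N+2}{N}\mathcal{E}_s^{}(N) - 1/s$, and the minimum is no larger than the average. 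To convert this degenerate-configuration energy into a genuine upper bound on $v_s^{}(N+1)$ I would (for $s\leq-2$) invoke the paper's convention that coincident points are admissible, and (for $-2<s<0$) separate the duplicated point by $\veps>0$ and use the continuity of $V_s$ at $r=0$ to pass $\veps\to 0$ inside the infimum. The outcome, rewritten in terms of $v_s$, is
\[
v_s^{}(N+1) - v_s^{}(N) \;\leq\; -\frac{2}{N(N+1)}\!\left(v_s^{}(N) + \frac{1}{s}\right).
\]
Applying the identical construction to a minimizer $\omega_{N-1}^s$ and duplicating one of its $N-1$ points yields the companion estimate
\[
v_s^{}(N-1) - v_s^{}(N) \;\geq\; \frac{2}{(N+1)(N-2)}\!\left(v_s^{}(N) + \frac{1}{s}\right).
\]

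To finish, I would split $\ddot{v}_s^{}(N) = [v_s^{}(N+1)-v_s^{}(N)] + [v_s^{}(N-1)-v_s^{}(N)]$. The upper bound of the proposition then follows by applying the first duplication estimate to the leading bracket together with the monotonicity $v_s^{}(N-1)-v_s^{}(N) \leq 0$ on the trailing one; the lower bound follows by using monotonicity $v_s^{}(N+1)-v_s^{}(N) \geq 0$ on the leading bracket and the second duplication estimate on the trailing one. The only delicate step, and the one place where the hypothesis $s<0$ is used essentially, is the limiting argument that admits coincident-point configurations into the defining infimum of $v_s^{}$ when $-2<s<0$; for $s\geq 0$ the duplication energy is infinite and this strategy collapses, consistent with the proposition's restriction to $s<0$.
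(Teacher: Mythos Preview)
Your proof is correct. The two duplication estimates you derive are precisely the paper's ``monotonicity complement'' inequality
\[
v_s^{}(N)\;\geq\;\frac{(N+1)N}{(N+2)(N-1)}\,v_s^{}(N+1)-\frac{2V_s(0)}{(N+2)(N-1)}
\]
and its $N\to N-1$ shift, both obtained there by the same averaging-over-duplicated-points computation you describe. Your lower-bound argument---combine the $(N-1)$-point duplication estimate with $v_s^{}(N+1)\geq v_s^{}(N)$---is verbatim the paper's.

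For the upper bound your route is more direct than the paper's. The paper develops a point-energy ``master identity,'' derives from it two inequalities for $v_s^{}(N)$ in terms of the point-energy of an individual site of $\omega_N^s$, then splits $2v_s^{}(N)=(1-\delta)v_s^{}(N)+(1+\delta)v_s^{}(N)$ with a free parameter $\delta\in[-1,1]$, and optimizes over $\delta$ and over the choice of site; the optimum $\delta=-2/(N+1)$ with a site of above-average point-energy yields the stated bound. You bypass all of this by simply adding the $N$-point duplication estimate for $v_s^{}(N+1)-v_s^{}(N)$ to the monotonicity bound $v_s^{}(N-1)-v_s^{}(N)\leq 0$. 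The paper's $\delta$-machinery is more flexible---it also produces the weaker Proposition with $v_s^{}(N-1)$ in place of $v_s^{}(N)$, the exact representations of $\ddot v_s^{}(N)$ in the subsequent remarks, and the configuration-specific upper bound that follows---but for the present proposition it yields nothing beyond what your two-line combination already gives.

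One small simplification: the $\varepsilon\to 0$ limiting step for $-2<s<0$ is not really needed. Since $V_s$ is continuous on $[0,2]$ for $s<0$, the average pair-energy functional extends continuously to labelled $N$-tuples with coincidences, so the infimum over proper $N$-point configurations equals the infimum allowing coincidences; the paper uses this implicitly when it writes $\langle V_s\rangle(\omega_N^s\cup\{\qV_\ell^s\})\geq v_s^{}(N+1)$.
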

       Of course, our Proposition \ref{prop:dd.v.s.bounds.1} is not strong enough to offer an answer even to Q~2. 

       Coming to Q~3, we will provide some (quasi-)rigorous upper bounds on $s_*^{}$. 
       We already mentioned that the large-$N$ asymptotics of the map $N\mapsto {v}_s^{}(N)$ is strictly convex 
for $s>4$ \cite{HardinSaffTWO}, which implies that $s_*^{}\leq 4$; yet, empirically from the analysis of putative 
minimizers we expect that $s_*^{}<0$. 
       Indeed, with the help of the partly empirically / partly rigorously known optimizers for $N\in\{5,7\}$ in the range $s\in[0,2)$
and the rigorously known optimizer for $N=6$, in subsection \ref{sec:specific.DDvs.6} we will prove the following:
\begin{proposition}\label{prop:N6BOUNDonSstar}
	Under the assumption that for $N\in\{5,7\}$ the optimizing configurations 
in the range $s\in[0,2)$ are given by the regular triangular and pentagonal bi-pyramids, respectively, one has
$
s^{}_* < 0 .
$
\end{proposition}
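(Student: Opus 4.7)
The plan is to exhibit an $s_0 < 0$ at which $\ddot{v}_{s_0}^{}(6) \geq 0$, which by \Ref{sSTARdefALT} yields $s_*^{} \leq s_0 < 0$. My strategy is first to prove the stronger strict inequality $\ddot{v}_0^{}(6) > 0$ in closed form from the hypothesized optimal configurations at $s=0$, and then to extend it to a one-sided neighborhood of $s=0$ by continuity of $s \mapsto v_s^{}(N)$.

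For the closed-form step, under the stated hypothesis the triangular bipyramid (TBP), the regular octahedron (classically optimal at $N = 6$, $s = 0$), and the pentagonal bipyramid (PBP) realize $v_0^{}(N)$ for $N = 5, 6, 7$. Their pair-distance multisets are: one pair at chord distance $2$, six at $\sqrt{2}$, and three at $\sqrt{3}$ for the TBP; three at $2$ and twelve at $\sqrt{2}$ for the octahedron; one at $2$, ten at $\sqrt{2}$, five at $2\sin(\pi/5)$, and five at $2\sin(2\pi/5)$ for the PBP. Using $V_0(r) = -\ln r$ and the identity $2\sin(\pi/5)\cdot 2\sin(2\pi/5) = \sqrt{5}$, averaging over the $\binom{N}{2}$ pairs gives
\begin{equation*}
v_0^{}(5) = -\tfrac{2}{5}\ln 2 - \tfrac{3}{20}\ln 3, \qquad v_0^{}(6) = -\tfrac{3}{5}\ln 2, \qquad v_0^{}(7) = -\tfrac{2}{7}\ln 2 - \tfrac{5}{42}\ln 5,
\end{equation*}
so that via \Ref{DDOTvs} one obtains the closed form
\begin{equation*}
\ddot{v}_0^{}(6) \;=\; \tfrac{18}{35}\ln 2 \,-\, \tfrac{3}{20}\ln 3 \,-\, \tfrac{5}{42}\ln 5.
\end{equation*}
Positivity, after clearing denominators by $420$, is equivalent to $2^{216} > 3^{63}\cdot 5^{50}$, which I verify from certified decimal bounds $\ln 2 > 0.693147$, $\ln 3 < 1.098613$, $\ln 5 < 1.609439$; these leave a margin of about $0.035$ in $216\ln 2 - 63\ln 3 - 50\ln 5$.

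With $\ddot{v}_0^{}(6) > 0$ in hand, the final step invokes the continuity of $s \mapsto v_s^{}(N)$ recorded immediately after \Ref{sSTARdef}. Hence $s \mapsto \ddot{v}_s^{}(6)$ is continuous at $s=0$, the strict positivity persists on an open interval $(-\delta,\delta)$ about $0$, and any $s_0 \in (-\delta,0)$ witnesses $s_*^{} \leq s_0 < 0$.

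The main obstacle is the tightness of the sign of $\ddot{v}_0^{}(6)$: its closed-form value is approximately $1.1\times 10^{-4}$, so sign verification demands sufficiently sharp certified bounds on $\ln 2$, $\ln 3$, $\ln 5$; cruder estimates fail. A minor secondary ingredient is the classical fact that the regular octahedron is the (up to rotation unique) logarithmic-energy minimizer at $N=6$, which supplies the exact value of $v_0^{}(6)$.
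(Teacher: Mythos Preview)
Your proof is correct and follows the same overall strategy as the paper: establish $\ddot{v}_0^{}(6)>0$ under the stated hypothesis on the $N=5,7$ optimizers, then invoke the continuity of $s\mapsto v_s^{}(N)$ (recorded after \Ref{sSTARdef}) to push the positivity to some $s_0<0$, yielding $s_*^{}\leq s_0<0$ via \Ref{sSTARdefALT}.

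The execution differs in one useful respect. The paper writes down the general formula \Ref{eq:DDvs.6} for $\ddot v_s^{}(6)$ on $(0,2)$ and then relies on a computer-assisted numerical evaluation of $\lim_{s\to 0}\ddot v_s^{}(6)\approx 0.000084$, together with the known sign $\ddot v_{-2}^{}(6)<0$, to locate a zero $s_1^{(6)}\in(-2,0)$. You instead evaluate $\ddot v_0^{}(6)$ directly at $s=0$ from the logarithmic pair-energy, obtaining the closed form $\tfrac{18}{35}\ln 2-\tfrac{3}{20}\ln 3-\tfrac{5}{42}\ln 5$, and certify its (tiny) positivity via the equivalent integer inequality $2^{216}>3^{63}\cdot 5^{50}$ checked with sharp rational bounds on $\ln 2,\ln 3,\ln 5$. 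This buys you a proof that is fully by hand rather than computer-assisted, and it dispenses with the auxiliary input $\ddot v_{-2}^{}(6)<0$; the price is that you must supply six-digit-accurate bounds on the logarithms because the margin is only about $10^{-4}$.
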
       
       We will rigorously prove Proposition \ref{prop:N6BOUNDonSstar}; yet the bound on $s_*^{}$  in
Proposition \ref{prop:N6BOUNDonSstar} is called quasi-rigorous, for the named configurations are not rigorously known to be
optimizers for all $s\in[0,2)$, although there can be hardly any doubt that they are.

 With the help of the rigorously known optimizing configurations for $N\in\{4,6\}$ and the only partly rigorously 
known, hence putatively optimizing configurations for $N=5$ in the respective ranges 
$s\in(-2,s^\dagger]$ and $s\in[s^\dagger,\infty)$, with $s^\dagger =15.04807...$, 
in subsection \ref{sec:specific.DDvs.5} we will also supply partly rigorous and partly numerical evidence for
\begin{conjecture}\label{prop:vsN5isCONCAVE} 
	Under the assumption that the optimizing configurations for $N =5$ in the range $s\in(-2,\infty)$ are given by
the regular triangular bi-pyramid (when $s\leq s^\dagger$) and the square pyramid with adjusted height (when $s\geq s^\dagger$),
we have that
$$
\forall\; s\in(-2,\infty):\quad \ddot{v}_s(5)<0.
$$
\end{conjecture}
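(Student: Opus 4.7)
The plan is to reduce the claim to explicit expressions for $v_s(N)$ at $N\in\{4,5,6\}$, then split the $s$-axis at $s^{\dagger}$. The regular tetrahedron (rigorously $s$-optimal for $N=4$ when $s>-2$) gives $v_s(4)=V_s(\sqrt{8/3})$; the regular octahedron (optimal for $N=6$) gives $v_s(6)=\tfrac{4}{5}V_s(\sqrt{2})+\tfrac{1}{5}V_s(2)$. The hypothesis of the conjecture then supplies $v_s(5)=\tfrac{1}{10}\bigl(3V_s(\sqrt{3})+6V_s(\sqrt{2})+V_s(2)\bigr)$ on $(-2,s^{\dagger}]$ (the TBP value) and $v_s(5)=E(z_s,s)$ on $[s^{\dagger},\infty)$, where
\begin{equation*}
10\,E(z,s)=4V_s\bigl(\sqrt{2(1-z)}\bigr)+4V_s\bigl(\sqrt{2(1-z^2)}\bigr)+2V_s\bigl(2\sqrt{1-z^2}\bigr)
\end{equation*}
is the square-pyramid energy as a function of apex-base height $z$, and $z_s$ is its minimizer.

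For the TBP range, substituting the three expressions into $\ddot{v}_s(5)=v_s(4)-2v_s(5)+v_s(6)$ and using the explicit formula $V_s(r)=(r^{-s}-1)/s$ cancels the $1/s$ constants and produces
\begin{equation*}
2v_s(5)-v_s(4)-v_s(6)\;=\;\frac{1}{s}\bigl[\tfrac{3}{5}\,3^{-s/2}+\tfrac{2}{5}\,2^{-s/2}-(8/3)^{-s/2}\bigr]\;\equiv\;\frac{g(s)}{s}\,.
\end{equation*}
Pulling out the positive factor $(8/3)^{-s/2}$, the required inequality $\ddot{v}_s(5)<0$ is equivalent to $\phi(s)-1$ having the same sign as $s$ on $(-2,s^{\dagger}]$, where $\phi(s)\equiv\tfrac{3}{5}(8/9)^{s/2}+\tfrac{2}{5}(4/3)^{s/2}$. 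Because $\phi$ is a positive combination of two strictly convex exponentials with distinct exponential rates, $\phi$ is itself strictly convex; direct computation gives $\phi(0)=1$ and $\phi'(0)=\tfrac{1}{10}\ln(2^{13}/3^8)>0$. For $s>0$, strict convexity yields $\phi(s)>\phi(0)+\phi'(0)\,s>1$; for $s\in(-2,0)$ I check $\phi(-2)=39/40<1$ by hand and then use the fact that a convex graph lies below its chord, together with the observation that the chord from $(-2,39/40)$ to $(0,1)$ stays strictly below $y=1$ throughout $(-2,0)$. This settles $\ddot{v}_s(5)<0$ fully rigorously on $(-2,s^{\dagger}]$ and, by continuity, also at $s=s^{\dagger}$ where the SP and TBP energies match.

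For the SP range $[s^{\dagger},\infty)$, $v_s(5)=E(z_s,s)$ with $z_s$ defined only implicitly by $\partial_z E(z_s,s)=0$, so no closed form is available. My plan is twofold: on a compact sub-interval $[s^{\dagger},s_{\max}]$ I would resolve $z_s$ numerically on a grid, verify $\ddot{v}_s(5)<0$ at the grid points, and then propagate to a continuous inequality by combining the envelope identity $dv_s(5)/ds=\partial_s E(z,s)\big|_{z=z_s}$ with an explicit modulus-of-continuity bound for $s\mapsto z_s$ obtained from $\partial_z^2 E$ at the minimizer; on the tail $[s_{\max},\infty)$ I would use a large-$s$ asymptotic expansion of $z_s$ around its ball-packing limit, together with the elementary large-$s$ asymptotics of $v_s(4)$ and $v_s(6)$, to close the argument. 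The main obstacle is precisely here: without a closed form for $z_s$, rigorous control of $\ddot{v}_s(5)$ on the entire SP range demands either certified numerics or a sharper analytical handle on the implicit $z_s$ than is currently in reach---which is why the statement is offered as a conjecture with partly rigorous, partly numerical evidence rather than as a theorem.
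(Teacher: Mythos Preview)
Your approach matches the paper's structure: reduce to an explicit elementary expression on the TBP range $(-2,s^\dagger]$ and fall back on numerics for the SP range $[s^\dagger,\infty)$. On the TBP range your convexity argument for $\phi(s)=\tfrac{3}{5}(8/9)^{s/2}+\tfrac{2}{5}(4/3)^{s/2}$ (tangent line at $0$ for $s>0$, secant from $-2$ to $0$ for $s<0$) is a clean alternative to the paper's split into an AM--GM estimate for $s>0$ and a Jensen estimate for $s\in(-2,0)$; both routes hinge on the same constant $2^{13}/3^8>1$, and your formulation has the virtue of being unified. One small gap: the equivalence ``$\ddot v_s(5)<0\iff\phi(s)-1$ has the sign of $s$'' breaks down at $s=0$, where the $1/s$ factor is singular. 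The paper treats $s=0$ separately via l'H\^opital, getting $\ddot v_0(5)=\tfrac{1}{10}\ln(3^8/2^{13})<0$; you should insert this step (or invoke continuity of $s\mapsto\ddot v_s(5)$). For the SP range your plan---envelope identity for $\partial_s v_s(5)$, certified numerics on a compact piece, large-$s$ asymptotics of $z_s$ for the tail---is more ambitious than the paper's treatment, which simply solves the height equation $(1+z)^{1+s/2}+(2+2^{-s/2})z=0$ by Newton iteration and verifies the inequality numerically for $15<s<10^6$, leaving rigorous asymptotic error control to future work.
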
       
\noindent
    Thus, while it is locally strictly concave for $s=-2$ and presumably for all $s\leq -2$, and possibly
for all $s\leq -1$, the map $N\mapsto {v}_s^{}(N)$ is \emph{very likely not locally strictly convex for any} $s$ --- even though 
its large-$N$ asymptotics is, when $s>4$.
    So as for Q12, very likely $\cC_{+}(\infty)$ (if it exists) is not $\Nset\setminus\{1,2\}$.

        We have mentioned large-$N$ asymptotics several times already. 
        We ourselves shall produce several well-motivated conjectures that relate the concavity of $N\mapsto {v}_s^{}(N)$ 
to its asymptotics at large $N$ for which even computer-assisted searches of the optimal $N$-point configurations are hopeless.
        We defer stating our conjectures to Section \ref{sec:asymptotics}, for we need technical preparations beyond the scope 
of this introduction.

        Our last remark makes it plain that we consider our mathematical questions to be of theoretical interest in their own right, too,
irrespective of whether some test for empirical data will ensue from their answers or not.
        In this spirit we also raise an intriguing question which cannot be so sharply formulated:

\smallskip
\centerline{\textbf{``Magic'' numbers: ``Optimally optimal'' configurations?}}\label{MAGIC}
	For $s=0$, the smallest $s$-value for which we found empirical violations of strict local 
$N$-concavity, i.e. for the logarithmic pair interaction invoked in the original formulation 
of Smale's $7$th Problem, the violations of strict local concavity were few and far between.
They occurred at the following  experimental sequence of integers:
\begin{equation}
\cC_{+}^x(0)
= \label{magicNforVnull}
\big\{\uli{6},\uli{{12}},\uli{{24}},32,\uli{{48}},\uli{{60}},67,\uli{{72}},
	80,104,\uli{{108}},122,\uli{{132}},137,... \big\}.
\end{equation}
Curiously, the majority of the numbers in the sequence \eqref{magicNforVnull} 
	are multiples of $6$ (underlined), or almost multiples (like $67$ and $137$) --- coincidence?

        We note that the logarithmic-energy minimizers for the first two
``integers of convexity,'' i.e. $N=6$ and $N=12$, are two ``optimally
symmetric'' configurations, namely Platonic polyhedra: the
octahedron ($N=6$) and icosahedron ($N=12$); also the (putative)
minimizers for $N\in\{24,48,60\}$ are highly symmetric configurations; in
particular, the one for $N=24$ is an Archimedean polyhedron
(also for $N\in\{48,60\}$ there are Archimedean polyhedra,
but these are NOT log-energy optimizers).
        To be sure, there is an integer inbetween which is not divisible
by $6$, namely $N=32$ (the highly symmetric optimizer is a Catalan
polyhedron), and also the ``odd-balls'' $N=67$ and (of all integers!)
$N=137$ show up.

        Yet, assuming that $\cC_{+}^{x}(0)=\cC_{+}^{}(0)$, it is an intriguing thought that the $N$-values in $\cC_{+}^{}(0)$ 
may correspond to $\log$-energy-optimizing configurations which are ``optimally symmetric'' in the following sense.
	Most of the $\log$-energy-optimizing configurations associated with $\cC_{+}^{x}(0)$
are separated by longer $N$-intervals in which $N\mapsto v_{0}^{x}(N)$ is strictly concave.
	This suggests that, perhaps, the configurations in an interval of concavity form
a family of more-and-more symmetric optimizers which better-and-better approximate a highly symmetric 
endpoint configuration.
	Once an endpoint configuration is reached, the addition of the next point inevitably will destroy 
a high amount of symmetry, for which an extra large amount of energy may be required.

These ``concave families'' would thus be vaguely analogous to the  ``periods'' in 
the so-called periodic table of the chemical atoms. 
	The endpoints of the periods are the chemically very inert noble gases which are
associated with highly symmetric {``electronic configurations''\footnote{Actually, what is symmetric is 
the structure of the wave function of the electrons.}} 
about the nuclei with charge number $Z\in\{2,10,18,...\}$.
	Incidentally, also the atomic nuclei seem to form something akin to ``periods,''
in the sense that the set of nucleon numbers $\{2, 8, 20, 28, 50, 82, 126,...?...\}$ 			
is associated with nuclei that have a particular high binding energy per nucleon.
	This set of nucleon numbers is known as the {\em Magic Numbers} of nuclear physics.\footnote{Since 
		there are protons and neutrons in the nucleus, some nuclei are ``doubly magic.''}
	By analogy, we call the set $\cC^{}_+(0)$ (for now: $\cC^{x}_+(0)$) the ``\emph{Magic Numbers of Smale's 7th Problem}.''
\smallskip

\newpage

\centerline{\textbf{The structure of the remaining sections}}

\vskip-25pt
$\phantom{nix}$

\begin{itemize}
\item 
	In Section~\ref{sec:data.analysis} we present the details of our analysis of the data of 
\cite{Sloanetal,RSZb,Ca2009,BCM,WalesUlkerDATAbase}, which induced us to formulate questions Q~1 -- Q14.
\item 
	In Section~\ref{sec:rigorsSTAR} we address Q~3, obtaining (quasi-)rigorous upper bounds on $s^{}_*$.
	More to the point, we give a computer-assisted proof of Proposition \ref{prop:N6BOUNDonSstar}.
	For this we explicitly compute the (putative) expressions for $s\mapsto\ddot{v}_s^{}(N)$ for $N\in\{3,4,5,6\}$ when $s$ runs 
through certain intervals in which these functions are elementary, easily discussed, and readily evaluated with {\sc{maple}}, 
{\sc{mathematica}}, or {\sc{matlab}}.
	With the help of the exactly computable $s\mapsto \ddot{v}_s^{}(3)$ we also show rigorously 
that $\cC_{+}(s)$ is nonempty whenever $s \geq 10> s^{}_*$.
\item 
	In Section~\ref{sec:rigorous.upper.bound} we prove various rigorous upper and lower bounds on  
$\ddot{v}_s^{}(N)$ which go to zero like a power of $N$ when $N\to\infty$ and $s<0$.
        In particular, we prove Proposition \ref{prop:dd.v.s.bounds.1}.
        We re-emphasize that our upper and lower bounds can serve for testing optimality of putative energy minimizers.
        While our rigorous bounds are not strong enough to prove negativity of $\ddot{v}_s^{}(N)$ uniformly in $N>2$ for any $s>-2$,
	we do rigorously prove negativity of $\ddot{v}_s^{}(N)$ for $N\in\{4,6\}$ in some regime of negative $s$-values $>-2$
by taking advantage of the explicitly known optimizers for these $N$-values (and for $N=2$).
	We also obtain a rigorous upper bound for $s\mapsto\ddot{v}_s^{}(12)$, but this bound is positive because the control of the
opitimizers for $N\in\{11,13\}$ is too weak.
	We also vindicate Conjecture \ref{prop:vsN5isCONCAVE}.
\item 
	In Section~\ref{sec:asymptotics} we present an asymptotic analysis of $N\mapsto\ddot{v}_s^{}(N)$ for the large-$N$ regime and
produce several well-motivated conjectures that relate the concavity of $N\mapsto {v}_s^{}(N)$ to the large-$N$ asymptotics.
       	The character of the asymptotics depends on whether $s$ is in the potential regime $s\in (-2,2)$, in the hypersingular 
regime $s>2$, or exactly inbetween --- at the singular $s=2$.
	A discussion of the ``degeneracy regime'' $s\leq -2$ will be left for some future work. 

\item 
	In Section~\ref{sec:summary} we summarize our findings and suggest future inquiries.
\item 
        In Appendix \ref{sec:appdx.A} we briefly survey some distinguished minimal Riesz $s$-energy problems 
	for $N$-point configurations on $\Sset^2$ and the pertinent literature.
\item 
	In Appendix~\ref{sec:appdx.B} we prove relations~\eqref{eq:s.to.infty.A} and \eqref{eq:s.to.infty.B} 
        needed in Appendix \ref{sec:appdx.A}.
\item 
	In Appendix~\ref{sec:appdx.C} we prove the strict monotonic increase of $s\mapsto {v}_s^{}(N)$.
\item 
	In Appendix~\ref{sec:appdx.D} we include a brief study of data of spherical digital nets.
\end{itemize}

\newpage

	\section{Data analysis         
          for $s\in\{-1,0,1,2,3\}$ and $N\leq 200$} \label{sec:data.analysis}
%
\vskip-7pt
        There are many studies of putatively minimal standardized Riesz $s$-energies, but only a few feature
data lists for \emph{consecutive} $N$-values which are sufficiently long for our purposes.
        In \cite{ErberHockneyTWO} the first 110 consecutive data for the Thomson problem ($s=1$) 
are reported; similarly, on the website \cite{Sloanetal} the first 130 consecutive data for $s=1$ are listed,\footnote{Save 
  the exactly computable data for $N=2$ and $N=3$.}
and on the website \cite{WalesUlker}, 391 consecutive data for $s=1$, starting with $N=10$, are reported ---
all these are never worse than those of \cite{ErberHockneyTWO}.
        In \cite{RSZb} one finds the first 200 consecutive data for $s\in\{-1,0,1\}$, and these authors remark that their data for
$s=1$ agree with those of \cite{Sloanetal} for the same $N$-values.
        M. Calef in his thesis~\cite{Ca2009} lists 180 consecutive data for minimizing configurations, starting at $N=20$, 
covering the cases $s\in\{0,1,2,3\}$; he also identified the number of ``stable'' configurations observed during many trials. 
	In $16$ cases the obtained results for $s = 0$ and $1$ differed by more than $10^{-6}$ compared to
results in \cite{RSZb} ($s = 0$ and $1$) and \cite{MoDeHo1996} ($s = 1$), some lower and some higher than those of these
other two works.
	On the interactive website \cite{BCM} putatively minimal Riesz energies are reported for 
$s\in\{0,1,\dots,12\}$, 
yet for variously many consecutive $N$-values. 
	Although there one finds consecutive data for $s=0$ and $s=1$ up to $N$ in the thousands, data
become less trustworthy with increasing $N$;\footnote{For $N\gg 200$ failures 
  of monotonicity were spotted in some data lists at \cite{BCM}; cf. \cite{KieJSPeNULL}.}
yet whenever a user finds a lower-energy than previously observed, this new record holder is substituted for the old one.

        We chose to work with about 200 numerical data each for $s\in\{-1,0,1,2,3\}$.
        In each case, we selected the lowest-energy data available from any of the mentioned lists. 
        Thus, for $s\in\{-1,0,1\}$ we worked with the data from \cite{RSZb}, except that for $s=0$ and $1$ we replaced a few data 
points by lower-energy data from \cite{Ca2009}; for $s=1$ they agree with those at \cite{WalesUlker}.
        For  $s\in\{2,3\}$ and $N\in\{20,...,200\}$, we used the data from \cite{Ca2009}, supplemented by data from \cite{BCM} 
for $s\in\{2,3\}$ and\footnote{We have completed all lists by computing $v_{s}(2)=(1/s)(2^{-s}-1)$ whenever necessary.}
$N\in\{3,...,19\}$; however, for $s=3$ consecutive data were available at \cite{BCM} only for $N\in\{3,..,12\}$,
together with data for $N\in\{16,18,19\}$. 
       So we used the applet \cite{BCM} to create our own experimental $s=3$ data for $N\in\{13,14,15,17\}$.
       Moreover, we also used the applet \cite{BCM} to create our own experimental $s=2$ and $3$ data for $N=177$ and $197$,
improving over those reported in \cite{Ca2009}; see below.

       The experimental data $\cE^{x}_s(N)$ reported in \cite{RSZb,Ca2009,BCM} (and in the 
other above-cited publications) have been computed with the conventional Riesz $s$-energy.
       We converted the data into putatively minimal \emph{average standardized} 
Riesz pair-energies, using the formula $v^{x}_{s}(N)=\frac1s\big(\frac{2}{N(N-1)}\cE^{x}_s(N)-1\big)$ for $s\in\{-1,1,2,3\}$; 
when $s=0$ only multiplication by $2/N(N-1)$ was required to obtain $v^{x}_{0}(N)$.

	Since the forward derivative $N\mapsto\dot{v}_s^+(N)\equiv {v}_s^{}(N+1)-{v}_s^{}(N)$ is strictly increasing,
for each $s\in\{-1,0,1,2,3\}$ we first checked whether $v^{x}_{s}(N+1)>v^{x}_{s}(N-n)$ for all $0\leq n\leq N-2$.
	All data that we pooled together\footnote{The data lists are given in a \emph{supplementary section} after the bibliography.  
	}
for each $s$-value passed this test.

	\subsection{Plots of ${v}_s^{x}(N)$ and $\ddot{v}_s^{x}(N)$ vs. $N$} \label{sec:figs}


        A first impression was gained by plotting $v_s^x(N)$ versus  $N$
for $N\in\{2, \dots, 200\}$ and $s\in\{-1,0,1,2,3\}$.\footnote{For $s=0$ and $1$ see, respectively, also Fig.s~1 and 2 in \cite{KieJSPeNULL}.}
        The easy-to-prove ordering (see Appendix~\ref{sec:appdx.C})
\begin{equation}\label{eq:vOfsISincreasing}
v_s^{}(N) 
>
v_t^{}(N)\quad \mbox{for}\quad s>t
\end{equation}
suggested to us to plot all five graphs $(N,v_s^x(N))$, $s\in\{-1,0,1,2,3\}$, of the \emph{empirical data} 
jointly into a single figure; see Fig.~\ref{vmin1to3vsN}, which shows 
(a) the graph $(N, v_{-1}^{x}(N))$ computed with consecutive data for $\cE^{x}_{-1}(N)$ from \cite{RSZb},
(b) the graphs $(N,v_{0}^{x}(N))$ and $(N,v_{1}^{x}(N))$, both computed with consecutive data for $\cE^{x}_{0}(N)$,
respectively $\cE^{x}_{1}(N)$, pooled from \cite{RSZb,Ca2009}, and
(c) the graphs $(N,v_{2}^{x}(N))$ and $(N,v_{3}^{x}(N))$, both computed with consecutive data for $\cE^{x}_{s}(N)$, $s\in\{2,3\}$,
pooled from \cite{Ca2009} and \cite{BCM} (all data points with the same $s$-value are joined by thin solid lines to guide the eye).
Fig.~\ref{vmin1to3vsN} reveals that $v^{x}_{-1}(N)<v^{x}_{0}(N)<v^{x}_{1}(N)<v^{x}_{2}(N)< v^{x}_{3}(N)$; thus,
all empirical data that entered  Fig.~\ref{vmin1to3vsN} pass the test implied by \eqref{eq:vOfsISincreasing} 
(cf. Remark~\ref{rem:vOFsISincreasing}).

\begin{figure}[H]
\centering
\includegraphics[scale=.9]{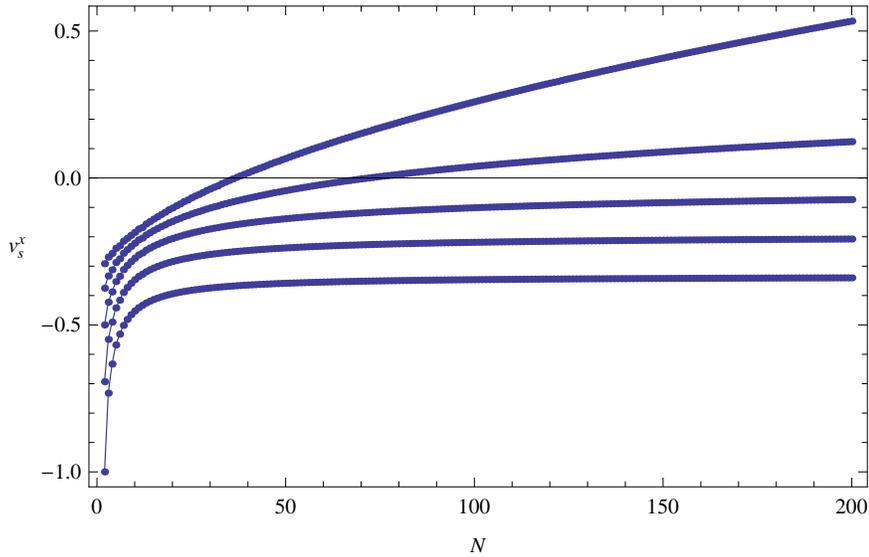}
\caption{\footnotesize{The empirical functions $v^{x}_{s}(N)$ vs. $N$ for $s\in\{-1,0,1,2,3\}$. }}\label{vmin1to3vsN}
\end{figure}
\noindent

Fig.~\ref{vmin1to3vsN} reveals also that the five empirical graphs $N\mapsto v_{s}^x(N)$,  $s\in\{-1,0,1,2,3\}$, 
are strictly increasing (as already mentioned above) and overall concave (as they should).
Furthermore, \emph{to the human eye} the empirical map $N\mapsto v_{s}^x(N)$ appears to be strictly locally concave for $s\in\{-1,0\}$, 
while convexity defects seem to occur for $s\in\{1,2,3\}$; interestingly, the online version of Fig.~\ref{vmin1to3vsN} allows one to 
zoom in, re-enforcing the impression about strict concavity vs. convexity defects.
	Of course, the human eye can only distinguish so much,\footnote{In 
    addition, the limited resolution of the plotting programs can yield deceptive plots.}
so we next plotted the second discrete derivative $\ddot{v}_s^x(N)$ for $s\in\{-1,0,1,2,3\}$, and $N\in\{3,...,199\}$.

  Shown in Fig.~\ref{ddotvmin1vsN} is the graph $(N,\ddot{v}^{x}_{-1}(N))$ for $2<N<200$, together with a zoom-in of the 
domain $100\leq N < 200$.
  (Since the amplitude of $\ddot{v}_{-1}^x(N)$ decays from about $10^{-1}$ to less than $10^{-6}$ when $N$ ranges
from 3 to 199, no single plot can show the global structure of the graph as well as its fine structure.)
  The data $\ddot{v}^{x}_{-1}(N)$ in Fig.~\ref{ddotvmin1vsN} remain below the $N$ axis, 
confirming the impression that $N\mapsto {v}^{x}_{-1}(N)$ is strictly locally concave.
\begin{figure}[H]
\centering
\includegraphics[scale=.9]{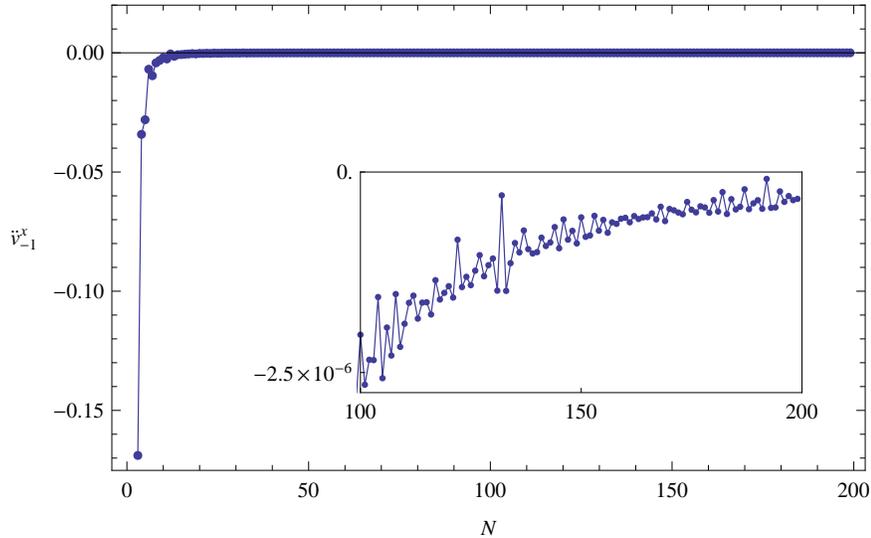}
\caption{\footnotesize{$\ddot{v}^{x}_{-1}(N)$ as a function of $N$. The solid  line is drawn to guide the eye.}}\label{ddotvmin1vsN}
\end{figure}

\vspace{-5pt}
        While Fig.s~1 and 2 do show strict local concavity of $N\mapsto v_{-1}^{x}(N)$, 
we also confirmed the optical impression of the strict local concavity of the empirical function $N\mapsto v^{x}_{-1}(N)$ with a 
refined data analysis; see below.
	The strict local concavity of $N\mapsto v_{-1}^x(N)$ for $2\leq N\leq 200$ can be taken as mild
empirical support for our \emph{Conjecture} \ref{conj:SconcavityFORsMINone} that $N\mapsto v_{-1}(N)$ is strictly locally concave.

	Also the plot of $N\mapsto v_{0}^x(N)$ in  Fig.~\ref{vmin1to3vsN} 
does look pretty much strictly concave everywhere; however, the graph of $N\mapsto \ddot{v}_{0}^x(N)$, shown in Fig.~\ref{ddotv0vsN} 
for $N\in\{3,...,199\}$, reveals that the graph of $N\mapsto v_{0}^x(N)$ is not locally strictly concave!
	Despite the strictly concave \emph{optical appearance} of $N\mapsto v_{0}^x(N)$, tiny violations
of concavity occur every now and then, in fact already when $N$ is less than a dozen.
        The positivity of some small-$N$ data in the graph of $N\mapsto v_{0}^x(N)$ is clearly visible, 
but not for $N$ larger than a dozen (say), because the amplitude of $\ddot{v}_s^x(N)$ now decays from a value of order $10^{-1}$ 
to less than $10^{-5}$ when $N$ ranges from 3 to 199. 
        To aid the visualization in the global graph, we plotted positive data points with black filled diamonds, negative ones with 
blue filled circles (colors online); again, we also inserted a zoom-in for the domain $100\leq N< 200$ --- note the different vertical
scale.

\begin{figure}[H]
\centering
\includegraphics[scale=1]{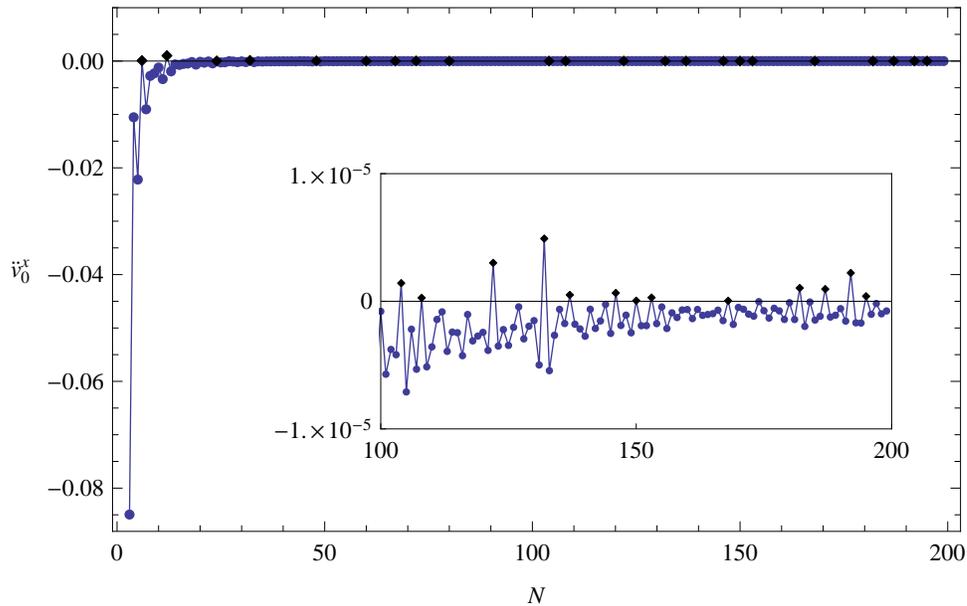}
\caption{\footnotesize{$\ddot{v}^{x}_{0}(N)$ as a function of $N$. The solid  line is drawn to guide the eye.}}\label{ddotv0vsN}
\end{figure}

\vspace{-5pt}
        We already mentioned that zooming into Fig.~\ref{vmin1to3vsN} reveals that neither of the graphs $N\mapsto v_s^{x}(N)$,
$s\in\{1,2,3\}$, appears locally strictly concave, not even to the human eye.
        Indeed, the graphs $(N,\ddot{v}^{x}_{s}(N))$ for $N\in\{3,...,199\}$ and $s\in\{1,2,3\}$ 
clearly cross the $\ddot{v}^{x}_{s}(N)=0$ axis many times; see
Fig.s~\ref{ddotv1vsN},~\ref{ddotv2vsN}, and~\ref{ddotv3vsN}, which are designed like Fig.~\ref{ddotv0vsN}.
\begin{figure}[H]
\centering
\includegraphics[scale=1]{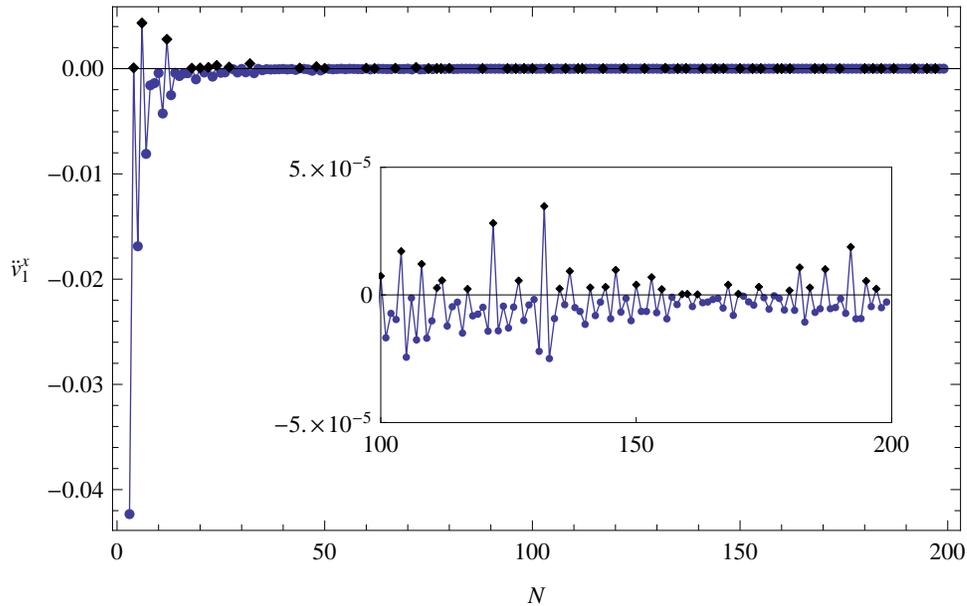}
\caption{\footnotesize{$\ddot{v}^{x}_{1}(N)$ as a function of $N$. The solid  line is drawn to guide the eye.}}\label{ddotv1vsN}
\end{figure}

\newpage

\begin{figure}[H]
\centering
\includegraphics[scale=1]{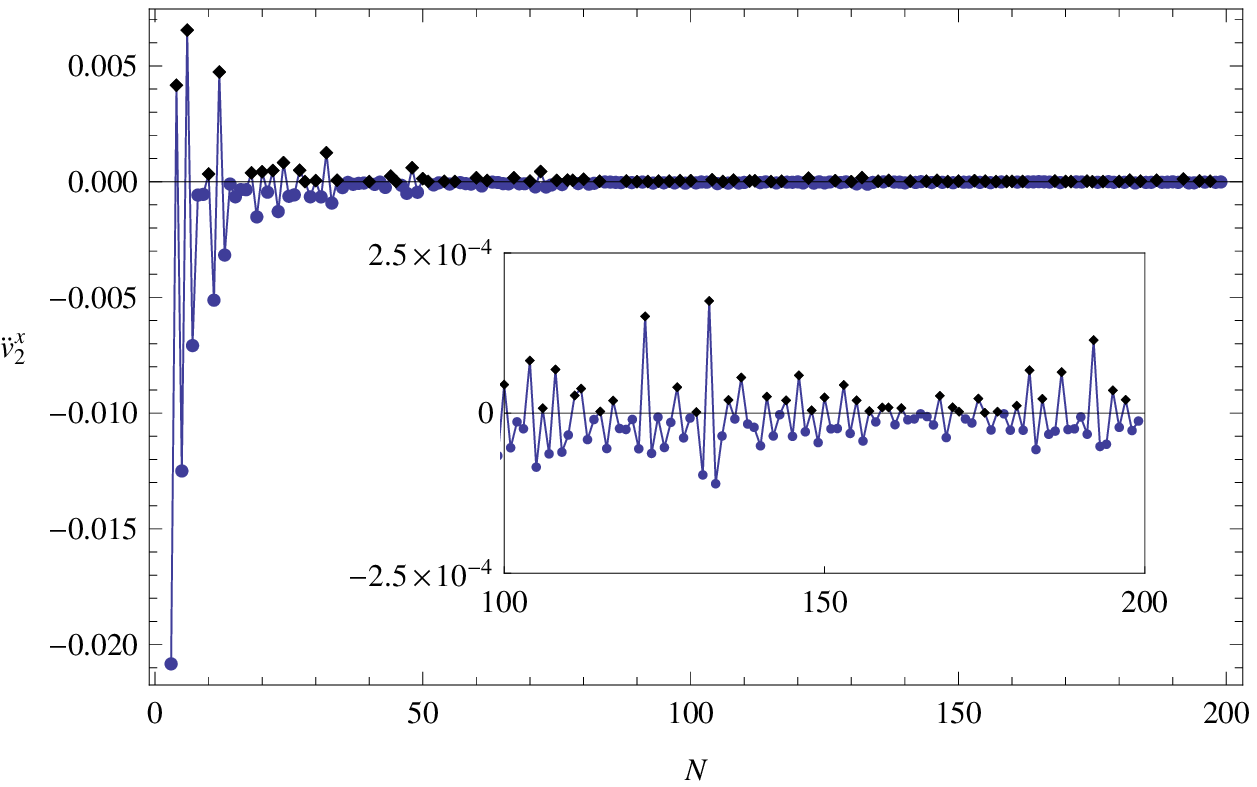}
\caption{\footnotesize{  $\ddot{v}^{x}_{2}(N)$ as a function of $N$. The solid  line is drawn to guide the eye.}}\label{ddotv2vsN}
\end{figure}
\begin{figure}[H]
\centering
\includegraphics[scale=1]{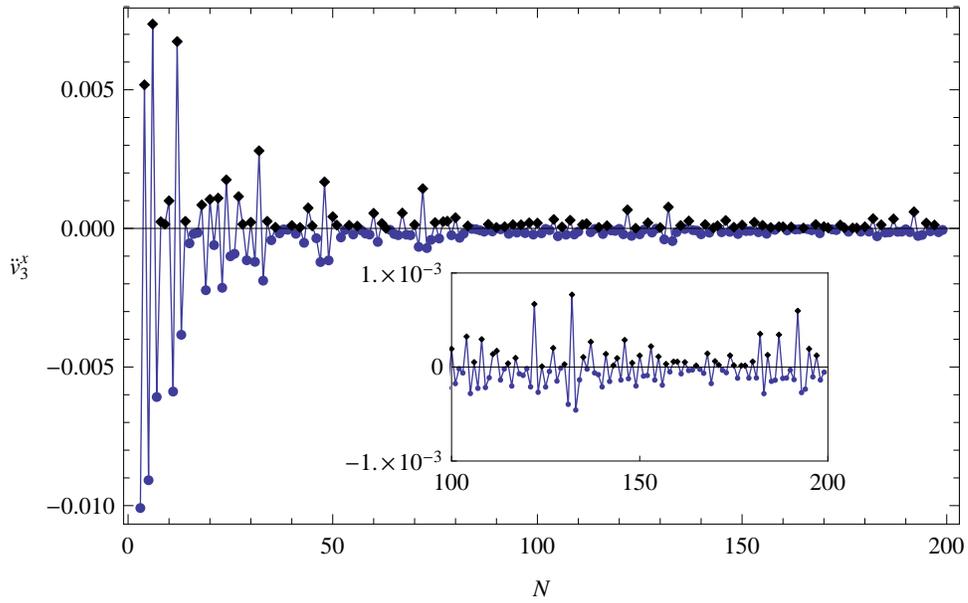}
\caption{\footnotesize{$\ddot{v}^{x}_{3}(N)$ as a function of $N$. The solid  line is drawn to guide the eye.}}\label{ddotv3vsN}
\end{figure}

	Since the minimizing configurations for $N$ less than a dozen or so have been determined numerically with a high 
degree of confidence, the empirical violations of strict local concavity for $s\in\{0,1,2,3\}$ at the smaller $N$ 
values do not seem to be due to incorrect computations of the minimal average pair-energies. 

\subsection{The sets $\cC_{+}^{x}(s)$}\label{allOURconvCs}

	We next inspected the numerical data of the second discrete derivative $\ddot{v}_{s}^{x}(N)\equiv \dot{v}_s^+(N)-\dot{v}_s^-(N)$ 
(where $\dot{v}_s^-(N)\equiv {v}_s^{}(N)-{v}_s^{}(N-1)$ is the first backward derivative) for  $N\in\{3,...,199\}$ and $s\in\{-1,0,1,2,3\}$.
	For each of these $s$-values we have collected the $N$-values at which $\ddot{v}_{s}^{x}(N)>0$ \footnote{We note that 
	it is futile to look for $N$-values for which $\ddot{v}_{s}^{x}(N)=0$ in the empirical data.} 
into the empirical set $\cC_{+}^{x}(s)$.

	Empirically, we thereby found the following \emph{experimental sets of convexity}:
\begin{eqnarray}
	&\cC_{+}^x(-1)  
= \label{C+xminONE}
	\emptyset,\phantom{nixxxxxxxxxxxxxxxxxxxxxxxxxxxxxxxxxxxxxxxxxx}\\
	&\cC_{+}^x(\;0\;) 
=  \label{C+xNULL}
	\big\{6,12,24,32,48,60,67,72,80,104,108,122,132,\phantom{nixxxxxxx} \hfill\\ 
&\nonumber\phantom{6}
	137,146,150,153,168,182,187,192,195,...?... \big\},
\\
&         \cC_{+}^x(\;1\;)
 = \label{C+xONE}
	\big\{4,6,12,18,20,22,24,27,32,44,48,50,60,62,67,72,75,\phantom{nix} \hfill\\ 
&\nonumber \phantom{equaaas}
        77,78,80,88,94,96,98,100,104,108,111,112,117,122, \\
&\nonumber \phantom{equalLs}
        127,132,135,137,141,144,146,150,153,155,159,160,\\
& \nonumber\phantom{equaLLLs}
        162,168,170,174,180,182,184, 187,192,195,197,...?... \big\},
\\
\!\!\!\!\!
&\cC_{+}^x(\;2\;)
 = \label{C+xTWO}
	\big\{4,6,10,12,18,20,22,24,27,28,30,32,34,40,44,45,48,\phantom{nix} \hfill\\ 
&\nonumber\phantom{equl}
		50,51,54,56,60,62,67,70,72,75,77,78,80,83,88,\\
&\nonumber\phantom{equals}
		90,92,94,96,98,100,104,106,108,111,112,115,117,\\
&\nonumber\phantom{equalls}
		122,127,130,132,135,137,141,144,146,148,150,153, \\
&\nonumber\phantom{equalls}
		155,157,159,160,162,168,170,171,174,175,177,180,\\
&\nonumber\phantom{equals}
		182,184,187,192,195,{\bf 197},...?... \big\},\phantom{nixxxxxxxxxxxx}
\\
&	\;\cC_{+}^x(\;3\;)
 = \label{C+xTHREE}
	\big\{4,6,8,9,10,12,14,18,20,22,24,27,28,30,32,34,36,40,42,\\
&\nonumber\phantom{equaLLL1}
		44,45,48,50,51,54,56,60,62,63,67,70,72,75,77,78,80,\\
&\nonumber\phantom{equaLLl}
		83,88,90,92,94,96,98,100,104,106,108,111,112,115,\\
&\nonumber\phantom{equaLs}
		117,122,124,127,130,132,135,137,141,143,144,146,\\
&\nonumber\phantom{equalls}
		148,150,153,155,157,159,160,162,165,168,170,171,\\
&\nonumber\phantom{equaLLLs}
		174,175,{\bf 177},178,180,182,184,187,192,195,{\bf 197},...?... \big\}.
\end{eqnarray}

\noindent
        Here, the numbers $N=177$, marked {\bf bold} in $\cC_{+}^{x}(3)$, and $N=197$, marked {\bf bold}
in $\cC_{+}^{x}(s)$ for $s\in\{2,3\}$, were absent when we used the data of \cite{Ca2009}, but appeared after we replaced the 
corresponding energy data from \cite{Ca2009} with lower energy data discovered ourselves by using the applet \cite{BCM}.
        The reason for why we became suspicious of those data in \cite{Ca2009} is explained next.

	Namely, inspection of the empirical sets $s\mapsto\cC_{+}^{x}(s)$ reveals a quite interesting property: 
the map $s\mapsto\cC_{+}^{x}(s)$ increases, set-theoretically!
        More precisely, before replacing the data of \cite{Ca2009} for $N=177$ when $s=3$, and for $N=197$ when 
$s\in\{2,3\}$, with lower-energy data obtained with the help of the applet \cite{BCM}, we had noticed that
the map $s\mapsto\cC_{+}^{x}(s)$ seemed to be ``mostly increasing,''  set-theoretically, with the exception of
exactly those three data points. 
        Thus we found that $\cC_{+}^{x}(0)\backslash\cC_{+}^{x}(1)=\emptyset$, whereas
\begin{equation}
\begin{split}
\cC_{+}^{x}(1)\backslash\cC_{+}^{x}(0)
=\label{C+xONEminC+xZERO} 
 \{&4,18,20,22,27,44,50,62,75,77,78,88,94,96,98,\\
 &100,111,112,117,127,135,141,144,155,159,\\
 &160,162,170,174,180,184,197\};
\end{split}
\end{equation}
and without the {\bf bold} data we found $\cC_{+}^{x}(1)\backslash\cC_{+}^{x}(2)=\{197\}$, whereas
\begin{equation}
\begin{split}
\cC_{+}^{x}(2)\backslash \cC_{+}^{x}(1)
=\label{C+xTWOminC+xONE} 
\{&10,28,30,34,40,45,51,54,56,70,83,90,92,\\
&106,115,130,148,157,171,175,177\};
\end{split}
\end{equation}
and finally, without the {\bf bold} data we found $\cC_{+}^{x}(2)\backslash\cC_{+}^{x}(3)=\{177\}$, whereas
\begin{equation}
\begin{split}
\cC_{+}^{x}(3)\backslash \cC_{+}^{x}(2)
= \label{C+xTHREEminC+xTWO} 
\{&8, 9, 14, 36, 42, 63, 124, 143, 165, 178\}.
\end{split}
\end{equation} 
        So, restricted to $N\leq 176$, and to $s\in\{-2,-1,0,1,2,3\}$, the map $s\mapsto\cC_{+}^{x}(s)$ increased set-theoretically.
        Also $\cC_{+}^{x}(s)\cap\{178,...,196\}$  increased set-theoretically.       
	Curiously, $N=177\in \cC_{+}^{x}(2)$ but $177\not\in \cC_{+}^{x}(3)$ without the {\bf bold} data;
even more curious was the fact that $N=197\in \cC_{+}^{x}(1)$ but $197\not\in \cC_{+}^{x}(2)$ and $197\not\in \cC_{+}^{x}(3)$
without the {\bf bold} data.
        Since these non-monotonic outliers occurred for quite large $N$-values, it was tempting to 
conjecture that $s\mapsto\cC_{+}(s)$ may actually increase set-theoretically. 
        To test this hypothesis we tried to beat Calef's $N=177$ data for $s=3$ and $N=197$ data for $s=2$ and $s=3$; cf. \cite{Ca2009}. 
         We achieved this goal by loading the (presumably optimal) $s=1$ configurations for $N=177$ and
$N=197$, respectively, and evaluated their Riesz $s$-energies for $s=2$ and $s=3$. 
        These new data are now available at \cite{BCM}.
        With these new {\bf bold} data in place for those of \cite{Ca2009}, the set-theoretical differences became
$\cC_{+}^{x}(1)\backslash\cC_{+}^{x}(2)=\emptyset$, as well as
$\cC_{+}^{x}(2)\backslash\cC_{+}^{x}(3)=\emptyset$, while the set-theoretical differences
$\cC_{+}^{x}(2)\backslash\cC_{+}^{x}(1)$ and 
$\cC_{+}^{x}(3)\backslash\cC_{+}^{x}(2)$ remained as shown in \eqref{C+xTWOminC+xONE} and \eqref{C+xTHREEminC+xTWO}.

        We summarize: with the best available data in place, the empirical map $s\mapsto\cC_{+}^{x}(s),s\in\{-1,0,1,2,3\}$,
increases strictly, set-theoretically.
 \emph{This is the basis of our Conjecture \ref{conj:noSconcavityFORsIN0123} that the actual map $s\mapsto\cC_{+}(s), s\in\Rset$, 
increases set-theoretically.}

	The increase of the  map $s\mapsto\cC_{+}^{x}(s)$ becomes easier discernable 
by collecting the sets $\cC^{x}_{+}(s)\subset\{3,...,199\}$ for $s\in\{-1,0,1,2,3\}$ into the following table. 
\newpage

\begin{center}

\end{center}

\vskip-0.5truecm
	It is also of interest to supplement the qualitative statement, that  $s\mapsto\cC_{+}^{x}(s)$ increases  
monotonically with $s\in\{-1,0,1,2,3\}$, by quantitative information about bulk properties of these increasing sets 
$\cC_{+}^{x}(s)$ as $s$ varies through $\{-1,0,1,2,3\}$.
	Thus, the percentage of integers in $\{3,...,199\}$ belonging to $\cC_{+}^{x}(s)$ increases with $s$ as follows:

\centerline{$\cC_{+}^{x}(-1)$ contains $\ 0\%$ of the integers from  $\{3,...,199\}$;} 

\centerline{$\cC_{+}^{x}(\;0\;)$ contains $11\%$ of the integers from  $\{3,...,199\}$;}

\centerline{$\cC_{+}^{x}(\;1\;)$ contains $27\%$ of the integers from  $\{3,...,199\}$;}

\centerline{$\cC_{+}^{x}(\;2\;)$ contains $38\%$ of the integers from  $\{3,...,199\}$;}

\centerline{$\cC_{+}^{x}(\;3\;)$ contains $43\%$ of the integers from  $\{3,...,199\}$.}
\medskip

\noindent
        These percentages suggest that  $\cC_{+}^{x}(s)$ contains more and more integers when $s$ increases
continuously, but they do not reveal that $\cC_{+}^{x}(s)$ increases monotonically in~$s$.
        
	Similarly, it is readily seen that:

\centerline{\quad 5 out of 22 integers in $\cC_{+}^{x}(\;0\;)$ are odd, or $23\%$;}

\centerline{ 16 out of 54 integers in $\cC_{+}^{x}(\;1\;)$ are odd, or $30\%$;}

\centerline{ 24 out of 75 integers in $\cC_{+}^{x}(\;2\;)$ are odd, or $32\%$;}

\centerline{ 28 out of 85 integers in $\cC_{+}^{x}(\;3\;)$ are odd, or $33\%$.}

\noindent
	This second table of percentages reveals yet another monotonicity: 
\emph{the percentage of odd integers in $\cC_{+}^{x}(s)$ increases monotonically with $s$}.
        This raises the question whether such a  monotonicity is perhaps a property of the theoretical 
map $s\mapsto\cC_{+}(s)$ for $s\in\Rset$.
        If the percentage of odd numbers in $\cC_{+}^{}(s)$ is indeed increasing, then --- since it cannot increase beyond
100\% --- it will converge to some limit when 
$s$ increases (provided the percentage remains meaningful, i.e. as long as $\cC_{+}^{}(s)$ is not empty),
and also when $s\downarrow s_*^{}<0$, if such a $s_*^{}$ exists.
        In particular, it makes one wonder whether only even integers will remain in  $\cC_{0}^{}(s_*^{})$.
	\subsection{Further visualization of the increase of the sets $\cC_{+}^{x}(s)$} \label{sec:figsB}

        To further aid the visualization of the set-theoretical differences for $s_1^{}<s_2^{}$, we defined 
``signed indicator functions'' $\Delta_{s_1}^{s_2}(N)\equiv {\bf I}_{\cC_{+}^{x}(s_2^{})}(N) - {\bf I}_{\cC_{+}^{x}(s_1^{})}(N)$,
where ${\bf I}_{\cC_{+}^{x}(s)}(N)=1$ if $N\in {\cC_{+}^{x}(s)}$, and ${\bf I}_{\cC_{+}^{x}(s)}(N)=0$ if $N\not\in {\cC_{+}^{x}(s)}$.
        Thus, 
\begin{equation}
\Delta_{s_1}^{s_2}(N)
=
\begin{cases} 
 \ \ 1&\mbox{\footnotesize{if}} \;N\not\in\cC_{+}(s_1^{})\; \mbox{\footnotesize{and}}\; N\in\cC_{+}(s_2^{}),\\
 \ \ 0&\mbox{\footnotesize{if}}\;N\in\cC_{+}(s_1^{})\cap\cC_{+}(s_2^{})\;
      \mbox{\footnotesize{or}}\; 
				N\not\in\cC_{+}(s_1^{})\cup\cC_{+}(s_2^{}),\\
 - 1&\mbox{\footnotesize{if}} \;N\in\cC_{+}(s_1^{})\; \mbox{\footnotesize{and}}\; N\not\in\cC_{+}(s_2^{}).
\end{cases}
\end{equation}
        Whenever a convexity point $N'$ is lost by passing from $s_1^{}$ to $s_2^{}$, the signed indicator function
$N\mapsto\Delta_{s_1}^{s_2}(N)$ will take the value $-1$ at $N=N'$; otherwise $N\mapsto\Delta_{s_1}^{s_2}(N)$ is non-negative. 
        This affords a convenient way of checking for set-theoretical monotonicity, compared to the painstaking sifting through
numerical tables.

	Fig.s~\ref{Deltas1s0}, \ref{Deltas2s1}, and \ref{Deltas3s2} below show 
$\Delta_{0}^{1}(N)$, $\Delta_{1}^{2}(N)$, and $\Delta_{2}^{3}(N)$ for $N\in\{3,...,199\}$.
	Since $\cC_{+}^{x}(0) \subset \cC_{+}^{x}(1)\subset \cC_{+}^{x}(2)\subset \cC_{+}^{x}(3)$,
these diagnostic functions only take the values $0$ and $1$. 

\vskip-.2truecm
\begin{figure}[H]
\centering
\includegraphics[width=11truecm,height=4truecm]{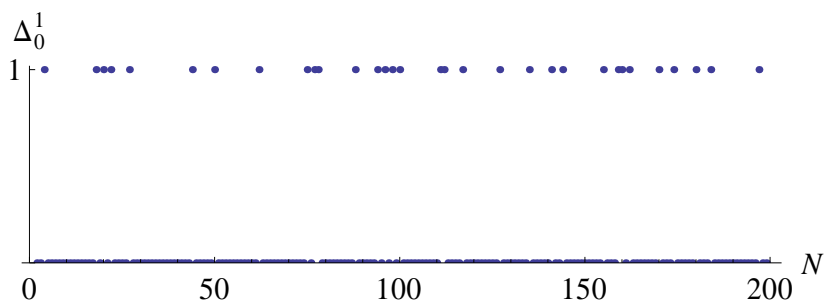}

\vskip-.3truecm
\caption{\footnotesize{$N\mapsto\Delta_{0}^{1}(N)$. See text.}}\label{Deltas1s0}
\end{figure}
\vskip-.7truecm
\begin{figure}[H]
\centering
\includegraphics[width=11truecm,height=4truecm]{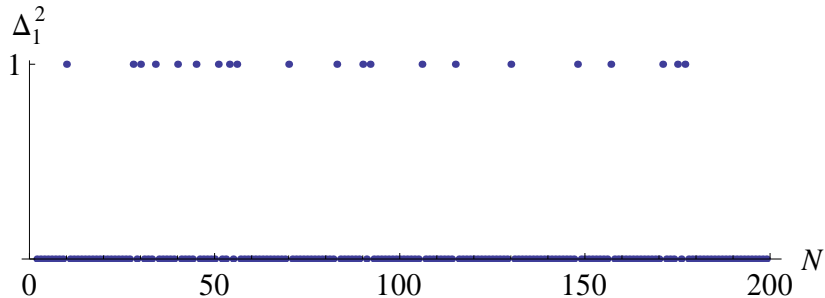}

\vskip-.3truecm
\caption{\footnotesize{$N\mapsto\Delta_{1}^{2}(N)$. See text.}}\label{Deltas2s1}
\end{figure}
\vskip-.7truecm
\begin{figure}[H]
\centering
\includegraphics[width=11truecm,height=4truecm]{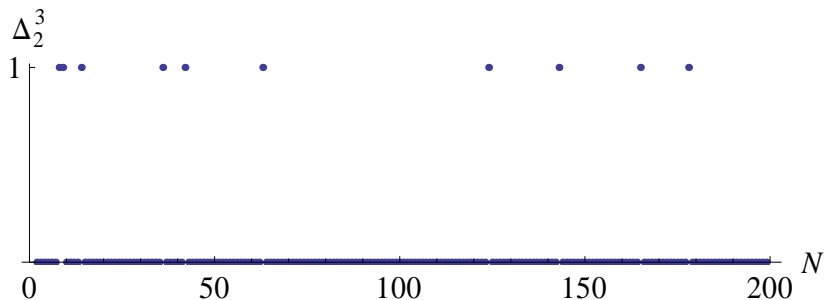}

\vskip-.3truecm
\caption{\footnotesize{$N\mapsto\Delta_{2}^{3}(N)$. See text.}}\label{Deltas3s2}
\end{figure}

\vskip-.3truecm
      This concludes our data analysis section. 
      We next present our theoretical results.

\newpage

	\section{(Quasi-)rigorous upper bounds on $s_*^{}$} \label{sec:rigorsSTAR}

	While it may not be so easy to obtain a rigorous lower bound on $s_*^{}$, it is 
relatively easy to find a rigorous upper bound on $s_*^{}$ by studying the $s$-dependence of $\ddot{v}_s^{}(3)$;
and with the help of computer
evaluations of $\ddot{v}_s^{}(N)$ for $N\in\{3,4,5,6\}$ 
we easily obtained better, though only ``quasi-rigorous,'' upper bounds on $s_*^{}$.

	To study the $s$-dependence of $\ddot{v}_s^{}(N)$ for $N\in\{3,4,5,6\}$ when $s>-2$,
we need to know the minimizing $N$-point configurations for  $N\in\{2,3,4,5,6,7\}$ if $-2\leq s< \infty$. 
	We begin by summarizing what we know about these.

	\subsection{${\omega}_N^s$ for $N\in\{2,3,4,5,6,7\}$ and $s\in(-2,\infty)$} \label{sec:configs.s.N}

	Below we list the four rigorously known optimizers $\omega_N^s$, $N\in\{2,3,4,6\}$,
together with the partly rigorously known but mostly computer-generated putative optimizers\footnote{Since the $s$-range has 
  not been --- and cannot be --- covered exhaustively with a computer, our list of 5-point and 7-point optimizers should 
  be seen as preliminary.}  
$\omega_5^s$ and $\omega_7^s$ for $s\in(-2,\infty)$.
	Note that for $N\in\{2,3,4,6\}$ each optimizer $\omega_N^s$ is independent of $s\in(-2,\infty)$,\footnote{Recall 
		that $\omega_N^s$ depends on $s\in(-\infty,-2]$ when $N$ is odd; recall also
		that at $s=-2$ the minimizer is not unique even after factoring out $SO(3)$, except when $N=2$.}
while the (putative) optimizers $\omega_5^s$ and $\omega_7^s$ display a non-trivial dependence on $s\in(-2,\infty)$:
\begin{eqnarray}
\omega_2^s&:&\label{omTWO}\quad
 	\mbox{two antipodal points}		\hskip55pt s\in(-2,\infty) ;
\\
\omega_3^s&:&\label{omTHR}\quad
	\mbox{equatorial equilateral triangle}  \hskip10pt s\in(-2,\infty) ;
\\
\omega_4^s&:&\label{omFOU}\quad
	\mbox{regular tetrahedron} 		\hskip62pt s\in(-2,\infty) ;
\\
\omega_5^s&:&\label{omFIV}
\begin{cases} 
	\mbox{triangular bi-pyramid}           &\quad\hskip20pt s\in(-2,15.04807...],
\\
	\mbox{square pyramid $(f=1)$}          &\quad\hskip20pt s\in[15.04807...,\infty);
\end{cases}
\\
\omega_6^s&:&\label{omSIX}\quad
	\mbox{regular octahedron}   		\hskip65pt s\in(-2,\infty) ;
\\
\omega_7^s&:&\label{omSEV}
\begin{cases} 
        C_2(1^12^3)(f=5) 			& \quad\hskip24pt s\in(-2,0], 
\\
	\mbox{pentagonal bi-pyramid} 		& \quad\hskip24pt s\in[0,2],  
\\
	C_2(1^12^3)(f=5) 			& \quad\hskip24pt s\in[2,5],
\\
	C_{2v}(1^14^12^1)(f=3) 			& \quad\hskip24pt s\in[5,5.5979...],
\\
	C_{3v}(1^13^2)(f=2)			& \quad\hskip24pt s\in[5.5979...,\infty).
\end{cases}
\end{eqnarray}
	In the above we use the  Sch\"onflies point group notation; 
$f$ is the number of degrees of freedom of the configuration; 
e.g., for the square pyramid at $N=5$ the height varies with $s$.
	The configurations for $N\in\{5,7\}$ are taken from \cite{MeKnSm1977} for $s\geq 1$,
and from \cite{BermanHanes} for $s=-1$.
	 As to the $C_2$-symmetric $N=7$ configuration at $s=-1$, we quote from \cite{BermanHanes}: ``It consists of two points 
almost antipodal and the remaining five points sprinkled around an equatorial band.''
	This suggested to us that this configuration belongs to a family
which bifurcates off of the pentagonal bi-pyramid, which we then computed numerically ourselves to happen at $s = 0$.
	Moreover, by the nearly complete degeneracy of the $s=-2$ problem, this family of configurations
will have another bifurcation point at $s=-2$.
	Intriguingly, geometrically this is the ``same'' $C_2$ family of configurations which 
was discovered by \cite{MeKnSm1977} to bifurcate off of the pentagonal bi-pyramid at $s=2$ and 
to merge with the $C_{2v}$ configuration at $s=5$ (note, though, that \cite{MeKnSm1977} did not 
carry out a complete bifurcation analysis); yet,  presumably the five degrees of freedom in the family will 
be differently optimized in each family, in the sense that for most (if not all) $s\in(-2,0)$         
there is no $s\in(2,5)$ with the same configuration.

	\subsection{Explicit maps $s\mapsto {v}_s^{}(N)$ for $N\in\{2,3,4,5,6,7\}$} \label{sec:specific.v.s.N}
	Based on the table shown in Subsection~\ref{sec:configs.s.N} we computed explicit expressions for
$v_s^{}(N)$ in all situations where we did not have to numerically optimize any of the degrees of freedom. 
	Thus, for $N\in\{2,3,4,6\}$ and $s\in(-2,\infty)$ one has
\begin{align}\label{eq:vs.2.3.4}
	{v}_s^{}(2)
&= 
	\textstyle\frac{1}{s}\left( \big(\frac{1}{2}\big)^{s} -1 \right),\\
	{v}_s^{}(3)
&= 
	\textstyle\frac{1}{s}\left( \big(\frac{1}{3}\big)^{\frac{s}{2}}	-1 \right),\\
	{v}_s^{}(4)
&= 
	\textstyle\frac{1}{s}\left( \big(\frac{3}{8}\big)^{\frac{s}{2}}-1 \right),
\end{align}
and
\begin{equation}\label{eq:vs.6}
	{v}_s^{}(6)
= 
	\textstyle\frac{1}{s}\left(\frac{1}{5} \big(\frac{1}{2}\big)^{s}
				+ \frac{4}{5}  \big(\frac{1}{2}\big)^{\frac{s}{2}}
				-1 \right),
\end{equation}
while for $v_s^{}(5)$ an explicit expression is available for $s\in(-2,15.04807...]$, viz.
\begin{equation}
	{v}_s^{}(5)
= 
	\textstyle\frac{1}{s}\left(\frac{1}{10} \big(\frac{1}{2}\big)^{s}
				+  \frac{3}{10} \big(\frac{1}{3}\big)^{\frac{s}{2}}
				+  \frac{3}{5}  \big(\frac{1}{2}\big)^{\frac{s}{2}}
				-1 \right),
\end{equation}
and for $v_s^{}(7)$ the following expression is valid for $s\in (0,2)$:             
\begin{equation}
	{v}_s^{}(7)
= 
	\textstyle\frac{1}{s}\left(\frac{1}{21}\big(\frac{1}{2}\big)^{s}
				+ \frac{5}{21}\!
		\left(\!\left(2\sin\frac{2\pi}{5}\right)^{-s}+\left(2\sin\frac{\pi}{5}\right)^{-s}\right) 
				+  \frac{10}{21}\big(\frac{1}{2}\big)^{\frac{s}{2}}
				-1 \right).
\end{equation}

	\subsection{Bounds on $s_*^{}$ from $s\mapsto\ddot{v}_s^{}(3)$ for $s\in(-2,\infty)$} \label{sec:specific.DDvs.3}
	Only for $N=3$ can we compute $\ddot{v}_s^{}(N)$ for all $s>-2$,  viz.
\begin{equation}
	\ddot{v}_s^{}(3)
=\label{eq:DDvs.3}
	\textstyle\frac{1}{s}\left(\big(\frac{1}{2}\big)^{s} +
 \big(\frac{3}{8}\big)^{\frac{s}{2}} - 2\big(\frac{1}{3}\big)^{\frac{s}{2}}\right).
\end{equation}
	The elementary function $s\mapsto\ddot{v}_s^{}(3)$ is sufficiently simple to allow a thoroughly rigorous discussion.
	Thus, inserting $s=-2$ we obtain $\ddot{v}_{-2}^{}(3)= -1/3$, but we already knew that
any function $s\mapsto\ddot{v}_s^{}(N)$ had to be strictly negative at $s=-2$.
	On the other hand, rewriting $\ddot{v}_s^{}(3)$ as
\begin{equation}
	\ddot{v}_s^{}(3)
=\label{eq:DDvs.3.asymp}
	\textstyle\frac{1}{s}\big(\frac{3}{8}\big)^{\frac{s}{2}}	
		\left(1 -\big(\frac{8}{9}\big)^{\frac{s}{2}}\left(1 - \big(\frac{3}{4}\big)^{\frac{s}{2}} \right)\right),
\end{equation}
we can extract the large-$s$ asymptotic behavior 
$\ddot{v}_s^{}(3)\asymp\frac{1}{s}\Big(\frac{3}{8}\Big)^{\frac{s}{2}}>0$.
	As a continuous function, $\ddot{v}_s^{}(3)$ therefore has to have an odd number of zeros in $(-2,\infty)$.
	Let $s_1^{(3)}$ denote the smallest zero of $\ddot{v}_s^{}(3)$ in $(-2,\infty)$.
	Then $s_*^{} \leq s_1^{(3)}$, rigorously.

        Of course, the bound $s_*^{}\leq s_1^{(3)}$ is not explicit.
	An explicit upper bound is obtained by inserting $s=10$ into (\ref{eq:DDvs.3}), or (\ref{eq:DDvs.3.asymp}),
which yields the strictly positive (tiny) value $\ddot{v}_{10}^{}(3)= 1289/79626240$. 
        Thus we have rigorously proved:
\begin{proposition}\label{prop:rig.UP.bound.on.s.STAR}
	The critical $s_*^{}$ satisfies the upper bound $s^{}_* \leq s_1^{(3)} <10$.
\end{proposition}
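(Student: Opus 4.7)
The plan is to exploit the elementary closed-form expression in \eqref{eq:DDvs.3},
\[
\ddot{v}_s(3) = \textstyle\frac{1}{s}\left(\left(\frac{1}{2}\right)^s + \left(\frac{3}{8}\right)^{s/2} - 2\left(\frac{1}{3}\right)^{s/2}\right),
\]
which is continuous (indeed real-analytic) in $s$ on $(-2,\infty)$. By the alternative characterization \eqref{sSTARdefALT} of $s_*^{}$, it suffices to exhibit any pair $(s,N)$ with $N>2$ and $\ddot v_s(N)\geq 0$; such an $s$ is automatically an upper bound on $s_*^{}$. I would fix $N=3$ throughout, thereby reducing the whole proposition to sign analysis of a single elementary scalar function of $s$.

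First I would verify by direct substitution that $\ddot v_{-2}(3) = -\frac{1}{3}<0$, consistently with the rigorously established strict concavity of $N\mapsto v_{-2}(N)$. Next I would evaluate $\ddot v_{10}(3)$ directly from the closed form; the computation is purely mechanical and gives the strictly positive rational value $1289/79626240$. Because $s\mapsto\ddot v_s(3)$ is continuous and changes sign between $s=-2$ and $s=10$, the intermediate value theorem supplies a smallest zero $s_1^{(3)}\in(-2,10)$, which in particular gives $s_1^{(3)}<10$.

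To conclude $s_*^{}\leq s_1^{(3)}$, I note that at $s=s_1^{(3)}$ one has $\ddot v_{s_1^{(3)}}(3)=0$, so $3\in\cC_{0}(s_1^{(3)})$, placing $s_1^{(3)}$ inside the set whose infimum is $s_*^{}$ in \eqref{sSTARdefALT}; hence $s_*^{}\leq s_1^{(3)}$, and combining with $s_1^{(3)}<10$ yields the proposition. The argument has essentially no obstacle: everything is driven by the closed forms for $v_s(2)$, $v_s(3)$, and $v_s(4)$ from \eqref{eq:vs.2.3.4}, themselves consequences only of the rigorously known antipodal-pair, equilateral-triangle, and regular-tetrahedron optimizers. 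The mildly delicate point is verifying by hand that $\ddot v_{10}(3)>0$, since the three summands in the parentheses nearly cancel; rewriting $\ddot v_s(3)$ in the factored form \eqref{eq:DDvs.3.asymp} and simply checking that $(8/9)^{5}\bigl(1-(3/4)^{5}\bigr)<1$ makes the positivity at $s=10$ transparent and frees the argument from any dependence on floating-point arithmetic.
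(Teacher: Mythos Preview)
Your proof is correct and follows essentially the same route as the paper: evaluate $\ddot v_{-2}(3)=-1/3<0$ and $\ddot v_{10}(3)=1289/79626240>0$ from the closed form \eqref{eq:DDvs.3}, invoke continuity to locate the smallest zero $s_1^{(3)}\in(-2,10)$, and then appeal to \eqref{sSTARdefALT}. Your version is actually slightly leaner, since the paper additionally invokes the large-$s$ asymptotics via \eqref{eq:DDvs.3.asymp} to show there are an odd number of zeros, whereas you observe that the single sign change between $s=-2$ and $s=10$ already suffices for the proposition as stated.
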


	With a little extra effort one can show that $\ddot{v}_s^{}(3)$ has exactly one zero in  $(-2,\infty)$.
	Namely, starting at $s=-2$ with a strictly negative value, $\ddot{v}_s^{}(3)$ then increases monotonically 
to a single --- strictly positive --- maximum, after which it decays monotonically to zero as $s\to\infty$.
	We illustrate this behavior with a {\sc{mathematica}} plot of $s\mapsto\ddot{v}_s^{}(3)$, see Fig.~\ref{bif-ddot-v3vsS}.
\begin{figure}[H]
\centering 
\includegraphics[scale=.8]{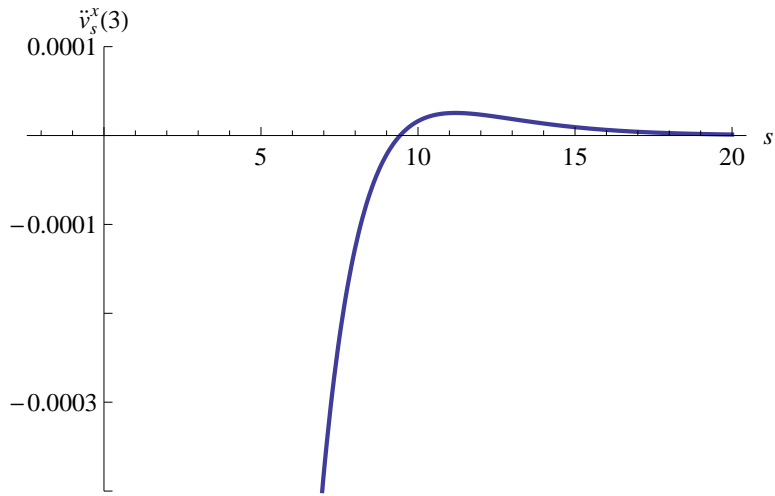}
\caption{\label{bif-ddot-v3vsS} \footnotesize{Behavior of $\ddot{v}_{s}(3)$ as a function of $s$.}}
\end{figure}
	Numerically we get $s_1^{(3)}\approx 9.4$, so that our explicit upper bound on $s_*^{}$ is not much
worse than this numerically computed bound on $s_*^{}$.

        The bound stated in Proposition \ref{prop:rig.UP.bound.on.s.STAR} is the only upper  bound on $s_*^{}$ which 
we were able to establish with complete rigor by analyzing explicitly known finite-$N$ optimizers.
	Quantitatively this bound is lousy. 
        In particular, as mentioned in the introduction, Hardin and Saff \cite{HardinSaffTWO} rigorously showed that
$N\mapsto {v}_s^{}(N)$ is asymptotically, for large $N$, strictly convex when $s>4$, which implies that $s_*^{}\leq 4$.

	In the ensuing subsections we will obtain several (much) better upper bounds, alas aided by a computer. 
	Some of these are stated as (conditional) propositions, which includes Proposition \ref{prop:N6BOUNDonSstar}.

	\subsection{Bounds on $s_*^{}$ via $s\mapsto\ddot{v}_s^{}(4)$, $-2< s\leq 15.048...$} \label{sec:specific.DDvs.4}
	For $N=4$ we can explicitly compute  $\ddot{v}_s^{}(N)$ when $-2< s\leq 15.04807...$, namely:
\begin{align}
	\ddot{v}_s^{}(4)
&=\label{eq:DDvs.4}
	\textstyle\frac{1}{s}\left(\frac{1}{10} \big(\frac{1}{2}\big)^{s}
				+  \frac{13}{10} \big(\frac{1}{3}\big)^{\frac{s}{2}}
				+  \frac{3}{5} \big(\frac{1}{2}\big)^{\frac{s}{2}}
				- 2 \big(\frac{3}{8}\big)^{\frac{s}{2}}	\right).
\end{align}
	The $s$-dependence of this elementary function is also simple enough to allow a thoroughly rigorous discussion, 
which reveals the following. 
	Thus, $\ddot{v}_{-2}^{}(4)$ is strictly negative, again as we know already it must.
	Also the large-$s$ asymptotics of r.h.s.(\ref{eq:DDvs.4}) can be worked out immediately,
r.h.s.(\ref{eq:DDvs.4})$\,\asymp\frac{1}{s}\frac{3}{5} \big(\frac{1}{2}\big)^{\frac{s}{2}}>0$,
yielding the information that r.h.s.(\ref{eq:DDvs.4}) has an odd number of zeros in $(-2,\infty)$.
	However, since the validity of the expression (\ref{eq:DDvs.4}) is restricted to 
$s\in (-2,15.04807...]$, the asymptotic result does not yield relevant information which ---
in concert with the strict negativity of $\ddot{v}_{-2}^{}(4)$ --- would allow us
to draw any conclusions about $s\mapsto\ddot{v}_s^{}(4)$ in $s\in (-2,15.04807...]$. 
	Instead we need to take a closer look.

	Inserting $s=2$, we easily find $\ddot{v}_2(4)=\frac{41}{120}>0$.
	As a continuous function, $\ddot{v}_s^{}(4)$ therefore has to have an odd number of zeros in $(-2,2)$.
	Let $s_1^{(4)}$ denote the smallest zero of $\ddot{v}_s^{}(4)$ in $(-2,2)$.
	Then $s_1^{(4)}$ is a (quasi-)rigorous\footnote{The reason for the prefix ``quasi'' is the absence of a
		rigorous proof that the triangular bi-pyramid is the $N=5$ optimizer for $s\in(-2,2+\epsilon)$.
	For such a small $N$-value it is reasonable, though, to  take the numerically found optimizers for granted.}
upper bound on $s^{}_*$.
        Also the bound $s^{}_*\leq s_1^{(4)}$ is not explicit, but $s_1^{(4)}< 2$ gives the
explicit (quasi-)rigorous upper bound $s^{}_* < 2$ which improves over $s^{}_* <10$, and also over 
$s_*^{}\leq 4$ from the asymptotic analysis \cite{HardinSaffTWO}.

	Also, with the aid of {\sc{maple}} (or such) the explicit quasi-rigorous upper bound can be improved to
$s^{}_* < 1$, for numerically ${\ddot{v}_1(4)= 0.0000745467...>0}$ (in remarkable agreement with 
the numerical data extracted from the computer experiments), so that $s_1^{(4)}< 1$.
        In fact, this can also be made quasi-rigorous.
\begin{proposition}\label{prop:rig.UP.bound.on.s.STAR.ALT}
	The critical $s_*^{}$ satisfies the upper bound $s^{}_* < 1$ as
\begin{equation}
	\ddot{v}_{1}^{}(4)
=\label{eq:DDv1.4}
	\textstyle \frac{1}{20} 
				+  \frac{13}{10} \big(\frac{1}{3}\big)^{\frac{1}{2}}
				+  \frac{3}{5} \big(\frac{1}{2}\big)^{\frac{1}{2}}
				-  \big(\frac{3}{2}\big)^{\frac{1}{2}} > 0.
\end{equation}
\end{proposition}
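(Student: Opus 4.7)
To establish Proposition~\ref{prop:rig.UP.bound.on.s.STAR.ALT} my plan has three stages: (a) derive the displayed expression for $\ddot{v}_1(4)$ from formula~\eqref{eq:DDvs.4}, (b) rigorously certify its strict positivity, and (c) deduce $s_*^{} < 1$. For (a), since $s = 1$ lies in the range $(-2, 15.04807\ldots]$ on which the (putative) $N = 5$ minimizer is the triangular bi-pyramid, formula~\eqref{eq:DDvs.4} applies directly; substituting $s = 1$ yields $\ddot{v}_1(4) = \tfrac{1}{20} + \tfrac{13}{10\sqrt{3}} + \tfrac{3}{5\sqrt{2}} - \sqrt{\tfrac{3}{2}}$, which is exactly the expression in the display.

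The heart of the argument is step (b). Numerically the value is of order $2 \times 10^{-4}$, so the certification is tight. My plan is to move $\sqrt{3/2}$ to the right-hand side (all quantities being positive) and square; after expansion and collection of the rational part this reduces to
\[
\tfrac{179}{240} + \tfrac{13\sqrt{3}}{300} + \tfrac{3\sqrt{2}}{100} + \tfrac{13\sqrt{6}}{50} > \tfrac{3}{2},
\]
which after clearing denominators becomes the surd inequality $26\sqrt{3} + 18\sqrt{2} + 156\sqrt{6} > 452.5$. I would discharge this with four-decimal rational lower bounds certified by squaring, e.g.\ $\sqrt{2} > 1.4142$ (as $1.4142^2 = 1.99996164 < 2$), $\sqrt{3} > 1.7320$ (as $1.7320^2 = 2.999824 < 3$), and $\sqrt{6} > 2.4494$ (as $2.4494^2 = 5.99956036 < 6$), giving $26 \cdot 1.7320 + 18 \cdot 1.4142 + 156 \cdot 2.4494 = 452.594 > 452.5$. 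The main obstacle is precisely this tight margin of roughly one part in $5000$: a coarser bound on any of the three surds would fail to close the inequality, so one must carry at least four correct decimals throughout.

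For (c), I would invoke continuity. Since $s \mapsto \langle V_s\rangle(\omega)$ is $C^\infty$ for each fixed labeled configuration $\omega$ and the sphere is compact, the envelope $s \mapsto v_s(N)$ is $C^0$ (as noted after~\eqref{sSTARdefALT}); hence so is $s \mapsto \ddot{v}_s(4)$. The strict positivity at $s = 1$ therefore persists on some open neighborhood $(1-\epsilon, 1 + \epsilon)$, placing $4 \in \cC_+(s)$ for every $s \in (1-\epsilon, 1]$. Definition~\eqref{sSTARdef} then forces $s_*^{} \leq 1 - \epsilon < 1$, as claimed.
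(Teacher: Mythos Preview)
Your proof is correct, and the certification in step~(b) is a genuinely different and more elementary route than the one in the paper. The paper bounds each surd separately via truncated binomial expansions: it writes $(1-x)^{1/2}$ as a partial sum plus a hypergeometric remainder, bounds the remainder uniformly on $[0,1]$, and with $K=20$ terms obtains rational lower bounds for $(1/3)^{1/2}$ and $(1/2)^{1/2}$ and (via the alternating-series estimate) a rational upper bound for $(3/2)^{1/2}$; the resulting rational combination is then evaluated (with computer assistance) to a positive fraction. Your approach instead moves the negative term across, squares once, and reduces the question to a single linear inequality in $\sqrt{2},\sqrt{3},\sqrt{6}$ that can be closed with four-decimal rational bounds verified by hand. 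The trade-off is that the paper's method is systematic and would generalize to other $s$-values or other combinations of surds without ad hoc algebra, whereas your squaring trick is shorter and needs no computer but is tailored to this specific expression and, as you note, leaves almost no slack. Step~(c) is slightly more explicit than the paper's one-line appeal to the definition of $s_*^{}$, but both are fine.
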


\begin{proof}
  First, we derive a sufficiently tight lower bound of $(1-x)^{1/2}$ in terms of a polynomial that gives a rational number if $x$ is rational.
 (The usual Taylor polynomial with Lagrange remainder term does not provide good enough bounds.) 
  Using the binomial formula, Pochhammer symbols, and Gauss hypergeometric functions, we get for every positive integer $K$
\begin{align*}
\left( 1 - x \right)^{1/2} 
&= \sum_{k=0}^\infty \binom{1/2}{k} ( - x)^k =
 \sum_{k=0}^{K-1} \frac{\Pochhsymb{-1/2}{k}}{k!} x^k + \sum_{k=K}^{\infty} \frac{\Pochhsymb{-1/2}{k}}{k!} x^k \\
&= \sum_{k=0}^{K-1} \frac{\Pochhsymb{-1/2}{k}}{k!} x^k + \frac{\Pochhsymb{-1/2}{K}}{K!} x^K \Hypergeom{2}{1}{K-1/2,1}{K+1}{x}.
\end{align*}
  Since $\Pochhsymb{-1/2}{K} = \Gamma(K-1/2) / \Gamma(-1/2) < 0$, and since the Gauss hypergeometric function is strictly increasing 
in $x$ on $[0,1]$, assuming the value $2K$ at $x = 1$ (cf. DLMF~Eq.~15.4.20\footnote{Digital Library of Mathematical Functions. 
$\backslash$url{http://dlmf.nist.gov/15.4.E20}}), we arrive at
\begin{equation*}
\left( 1 - x \right)^{1/2} > \sum_{k=0}^{K-1} \frac{\Pochhsymb{-1/2}{k}}{k!} x^k - \frac{\Pochhsymb{1/2}{K-1}}{(K-1)!} x^K \qquad \text{for $0 < x < 1$.}
\end{equation*}
 Thus, we estimate
\begin{align*}
\left( \frac{1}{3} \right)^{1/2} 
&= \left( 1 - \frac{2}{3} \right)^{1/2} >
 \sum_{k=0}^{K-1} \frac{\Pochhsymb{-1/2}{k}}{k!} \left( \frac{2}{3} \right)^k - \frac{\Pochhsymb{1/2}{K-1}}{(K-1)!} \left( \frac{2}{3} \right)^K, \\
\left( \frac{1}{2} \right)^{1/2} 
&= \left( 1 - \frac{1}{2} \right)^{1/2} > 
\sum_{k=0}^{K-1} \frac{\Pochhsymb{-1/2}{k}}{k!} \left( \frac{1}{2} \right)^k - \frac{\Pochhsymb{1/2}{K-1}}{(K-1)!} \left( \frac{1}{2} \right)^K
\end{align*}
for any positive integer $K$, whereas for even positive integers $K$ 
\begin{equation*}
\left( \frac{3}{2} \right)^{1/2} = 
\left( 1 + \frac{1}{2} \right)^{1/2} < 1 + \sum_{k=1}^{K-1} \frac{\Pochhsymb{-1/2}{k}}{k!} (-1)^k \left( \frac{1}{2} \right)^k,
\end{equation*}
because the binomial expansion of $(1+x)^{1/2}-1$ is an alternating series. 

Let $K = 20$. 
Combining everything, we get 
\begin{align*}
\ddot{v}_{1}^{}(4) 
&= \frac{1}{20} + 
\frac{13}{10} \left( \frac{1}{3} \right)^{1/2} + \frac{3}{5} \left( \frac{1}{2} \right)^{1/2} - \left( \frac{3}{2} \right)^{1/2} \\
&> \frac{5764409437417341241721}{209374412387531441339105280} > 0.
\end{align*}
For the computation of the rational bounds one can use, e.g., {\sc{mathematica}}. 
Hence, by the definition of $s^{}_*$ we obtain the desired result.
\end{proof}

        With some extra energy one should be able to make the above bounds totally rigorous, but this would require
to prove that the triangular bi-pyramid is the $N=5$ optimizer for $s\in [0,2]$, say. 

        For the sake of completeness, in Fig.~\ref{bif-ddot-v4vsS} we display a numerical plot of $s\mapsto\ddot{v}_s^{}(4)$ 
for $s\in(-2,15.04807...]$.
\begin{figure}[H]
\centering 
\includegraphics[scale=.7]{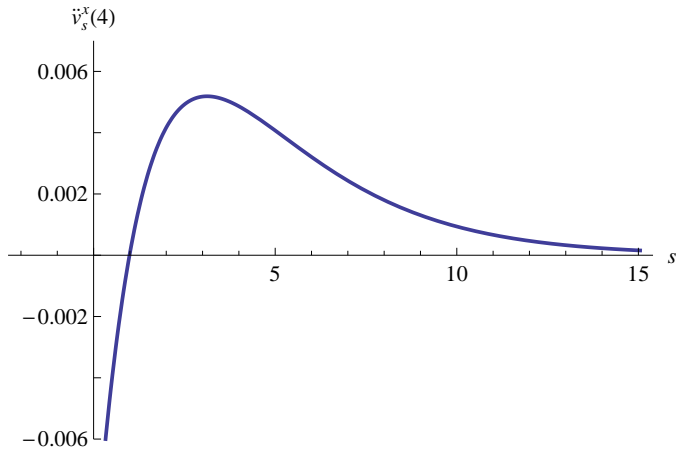}
\caption{\label{bif-ddot-v4vsS} \footnotesize{Behavior of $\ddot{v}_{s}(4)$ as a function of $s$ for $s\in(-2,15.04807...]$.}}
\end{figure}

	\subsection{Bounds on $s_*^{}$ from $s\mapsto\ddot{v}_s^{}(6)$ for $0 < s< 2$} \label{sec:specific.DDvs.6}  
        We now give a computer-assisted proof of Proposition \ref{prop:N6BOUNDonSstar}.
\begin{proof}
For $0 < s< 2$ we find          
\begin{equation}
	\ddot{v}_s^{}(6)
=\label{eq:DDvs.6}
	\textstyle\frac{1}{s}\!\left(\!\frac{5}{21}\!
	\left(\!\left(2\sin\frac{2\pi}{5}\right)^{-s}\!+\!\left(2\sin\frac{\pi}{5}\right)^{-s}\right)
			\!+\!\frac{3}{10}\big(\frac{1}{3}\big)^{\frac{s}{2}}
			\!-\!\frac{11}{21}\big(\frac{1}{2}\big)^{\frac{s}{2}}
			\!-\!\frac{53}{210}\big(\frac{1}{2}\big)^{s}\right)\!.\!\!
\end{equation}
	The $s$-dependence of this elementary function is less simple.
        Moreover, the bifurcation of the $N=7$ minimizers at $s=0$ is known to us only through our numerical bifurcation analysis, 
so that it seems prudent for now to be satisfied with a computer-assisted evaluation. 
        Fig.~\ref{bif-ddot-v6vsS} shows a numerical rendering of r.h.s.(\ref{eq:DDvs.6}).
\begin{figure}[H]
\centering 
\includegraphics[scale=.65]{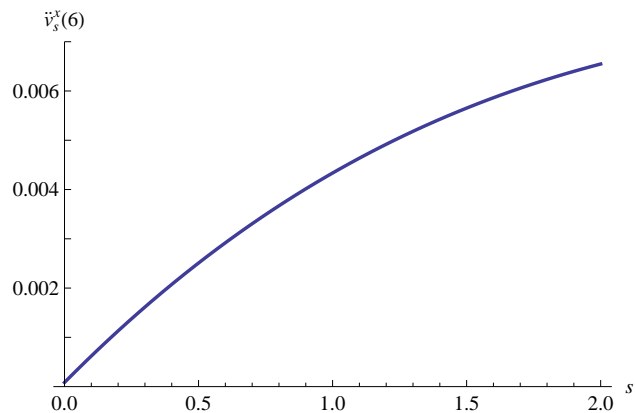}
\caption{\label{bif-ddot-v6vsS} \footnotesize{Graph of $\ddot{v}_{s}(6)$ for $s\in(0,2)$.}}
\end{figure}

        Numerically, $\lim_{s\to 0}\ddot{v}_s^{}(6)= 0.000084098... >0$, and since $\ddot{v}_{-2}^{}(6)<0$, 
we conclude that the continuous map $s\mapsto\ddot{v}_s^{}(6)$ has an odd number of zeros in $(-2,0)$.
        Let $s_1^{(6)}$ be the smallest one. 
        Then $s^{}_*\leq s_1^{(6)}<0$.
        This proves Proposition \ref{prop:N6BOUNDonSstar}.
\end{proof}

	\subsection{Remarks on $s\mapsto\ddot{v}_s^{}(12)$ for $-2< s< 2$} \label{sec:specific.DDvs.12}
	Fig.s~\ref{ddotv0vsN} and \ref{ddotv1vsN} suggest that the 
$N=6$ peak increases more rapidly with $s$ than the $N=12$ peak when $s$ runs from $0$ to $1$.
	Since the $N=12$ peak appears to be the highest peak for $s=0$ while the $N=6$ peak is barely above zero then,
we suspect that by lowering $s$ below $0$ the $N=12$ peak will vanish at a lower $s$-value than the 
$N=6$ peak, so that the first zero of $s\mapsto\ddot{v}_s^{}(12)$ will yield a better upper bound on 
$s^{}_*$ than the one we have found with $N=6$.

	Unfortunately, to evaluate $s\mapsto\ddot{v}_s^{}(12)$ one needs to know the optimal configurations for $N\in\{11,13\}$.
	We are planning a computer-assisted evaluation of $s\mapsto\ddot{v}_s^{}(12)$, which
seems to be the best one can do right now.

	\section{Rigorous bounds on $\ddot{v}_s^{}(N)$} \label{sec:rigorous.upper.bound}

\subsection{Generic  $\mathcal{O}(N^{-2})$ upper bounds on $\ddot{v}_s^{}(N)$  for $s<0$}

	We now derive the following positive $\mathcal{O}(N^{-2})$ upper bound on $\ddot{v}_s^{}(N)$ for $s<0$.
\begin{proposition}\label{prop:dd.v.s.UPPER.bound}
	For $s<0$ the second derivative of ${v}_s^{}(N)$ is bounded above by
\begin{equation}
\ddot{v}_s^{}(N) \leq -\frac{2}{(N+1)N} \left( v_s^{}(N-1) + \frac{1}{s} \right).
\end{equation}
\end{proposition}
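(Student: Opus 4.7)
\emph{Proof proposal.} The plan is to transform the claim into a statement about the shifted quantity $w_s(K) := v_s(K) + 1/s$ (for which $\ddot{w}_s(N) = \ddot{v}_s(N)$), namely
\begin{equation*}
w_s(N+1) + \tfrac{N^2+N+2}{N(N+1)}\,w_s(N-1) \leq 2\,w_s(N),
\end{equation*}
and to prove this by combining three ingredients. The first is the Landkof monotonicity recapped in the Introduction, which translates to $w_s(N-1) \leq w_s(N)$; multiplied by the positive coefficient $\tfrac{N^2+N+2}{N(N+1)}$, it handles the $w_s(N-1)$ term.

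The second ingredient is an \emph{add-a-point averaging bound}. I would take the optimal $\omega_N^s$, adjoin a point $p$, and average $p$ over normalized surface measure on $\Sset^2$, using the identity $\tfrac{1}{4\pi}\!\int_{\Sset^2} V_s(|p-q|)\,d\sigma(p) = I_s := s^{-1}(2^{1-s}/(2-s)-1)$. This yields $\tbinom{N+1}{2}\,w_s(N+1) \leq \tbinom{N}{2}\,w_s(N) + N\,w_{I_s}$, where $w_{I_s} := I_s + 1/s$; the $1/s$ shifts cancel cleanly since $\tbinom{N+1}{2} - \tbinom{N}{2} - N = 0$. The auxiliary $w_{I_s}$ then has to be eliminated.

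The third and pivotal step is a potential-theoretic lower bound on $w_s(N)$. For $s<0$ the kernel $V_s$ is bounded on the diagonal ($V_s(0) = -1/s$), so the empirical measure $\mu_N$ concentrated on the $N$ points of the optimal $\omega_N^s$ has a finite quadratic energy, equal by direct expansion to $\tfrac{2E(\omega_N^s)}{N^2} - \tfrac{1}{sN}$. In the potential regime $s\in(-2,2)$ the uniform probability measure on $\Sset^2$ minimizes $\iint V_s\,d\mu\,d\mu$ among probability measures, so this quantity is bounded below by $I_s$; a short rearrangement yields $w_s(N) \geq \tfrac{N}{N-1}\,w_{I_s}$, equivalently $N\,w_{I_s} \leq (N-1)\,w_s(N)$. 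Substituting into the add-a-point bound then makes $w_{I_s}$ vanish and produces $w_s(N+1) \leq \tfrac{(N-1)(N+2)}{N(N+1)}\,w_s(N)$; adding $\tfrac{N^2+N+2}{N(N+1)}$ times the monotonicity bound closes the argument, because $(N-1)(N+2) + (N^2+N+2) = 2N(N+1)$ makes the right-hand coefficient on $w_s(N)$ collapse to exactly $2$.

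The principal obstacle I anticipate is this third step: it rests on the uniform surface measure being the equilibrium measure for $V_s$, which is classical for $s\in(-2,2)$ but fails in the ``degeneracy regime'' $s\leq-2$ alluded to earlier in the paper. Thus the argument as sketched covers the range $s\in(-2,0)$ cleanly; for $s\leq-2$ one would have to replace $I_s$ by the true continuous minimum and verify that the subsequent algebraic cancellations survive, or invoke a separate line of argument altogether.
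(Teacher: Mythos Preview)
Your argument is correct and complete for $-2<s<0$, and the obstacle you flag at $s\le-2$ is genuine: for instance at $s=-3$, $N=2$ one has $\tfrac{N-1}{N}v_s(N)-\tfrac{1}{sN}=-1$ while $I_{-3}=-\tfrac{11}{15}$, so your inequality $(N-1)w_s(N)\ge Nw_{I_s}$ actually reverses. The paper's proof, by contrast, covers all $s<0$, and it does so by avoiding the continuous equilibrium measure altogether.

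The difference lies in how the $(N+1)$-point comparison configuration is manufactured. You average the added point over $\sigma$ and are thereby forced to introduce $I_s$ and then eliminate it via the equilibrium-measure lower bound. The paper instead \emph{duplicates} one of the existing optimal points: setting $\qV_{N+1}':=\qV_N^s$ produces an $(N+1)$-tuple with $\langle V_s\rangle(\omega_{N+1}')\ge v_s(N+1)$, and the lone new coincident pair contributes exactly $V_s(0)=-1/s$ --- this is the only place $s<0$ enters. The same $\omega_N^s$ also supplies an $(N-1)$-configuration by deleting the point of largest individual point-energy (so that the deleted point's contribution dominates the average). A convex splitting $2v_s(N)=(1+\delta)v_s(N)+(1-\delta)v_s(N)$ with $\delta=2/(N+1)$, together with straightforward bookkeeping, then gives the bound directly. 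Your route is conceptually tidier where it applies, but the point-duplication trick is what carries the result through the degeneracy regime $s\le-2$.
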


\noindent
\begin{proof}
	We rewrite 
\begin{equation}
2v_s^{}(N) 
=\label{TWOvs}
(1+\delta)v_s^{}(N)  +(1-\delta) v_s^{}(N),
\end{equation}
where $\delta = 2/(N+1)$. 
	Next, since 
\begin{equation*}
v_s^{}(N) 
= 
\langle V_s\rangle(\omega_N^s), 
\end{equation*}
using \Ref{aveRIESZpairENERGY} we now rewrite \Ref{TWOvs} further as
\begin{align*}
2v_s^{}(N) 
&= \left( 1 + \delta \right) \frac{2}{N(N-1)} \mathop{\sum\sum}_{1\leq i < j\leq N-1} V_s(|\qV_i^s-\qV_j^s|) \\
&\phantom{=}+ \left( 1 + \delta \right) \frac{2}{N(N-1)} \sum_{1\leq i\leq N-1} V_s(|\qV_i^s-\qV_N^s|)  \\
&\phantom{=}+ \left( 1 - \delta \right) \frac{2}{N(N-1)} \mathop{\sum\sum}_{1\leq i < j\leq N} V_s(|\qV_i^s-\qV_j^s|),
\end{align*}
%
and so, using that  $\delta = 2/(N+1)$, 
\begin{equation}
\begin{split} \label{TWOvsNEUagain}
2v_s^{}(N) 
&= \frac{2(N+3)}{(N+1)N(N-1)} \mathop{\sum\sum}_{1\leq i < j\leq N-1} V_s(|\qV_i^s-\qV_j^s|) \\
&\phantom{=}+ \frac{2(N+3)}{(N+1)N(N-1)} \sum_{1\leq i\leq N-1} V_s(|\qV_i^s-\qV_N^s|)  \\
&\phantom{=}+ \frac{2}{(N+1)N} \mathop{\sum\sum}_{1\leq i < j\leq N} V_s(|\qV_i^s-\qV_j^s|).
\end{split}
\end{equation}
%

	For the term in the second line we use that $N+3 = (N-1)+4$ and rewrite
\begin{equation}
\frac{2(N+3)}{(N+1)N(N-1)}
=\label{MIDDLEsumFAKTOR}
\frac{2}{(N+1)N}+\frac{8}{(N+1)N(N-1)}.
\end{equation}
	In the single sum receiving the factor $2/(N+1)N$ we now rename the point $\qV_N^s$ into $\qV_{N+1}'$
and pool this single sum together with the double sum in the last line in \Ref{TWOvsNEUagain}, obtaining
\begin{equation}
\sum_{1\leq i\leq N-1} V_s(|\qV_i^s-\qV_{N+1}'|)
+
\mathop{\sum\sum}_{1\leq i < j\leq N} V_s(|\qV_i^s-\qV_j^s|)
=\label{MIDDLEsumDOUBLEsumMERGE}
 \frac{1}{s}+ \mathop{\sum\sum}_{1\leq i < j\leq N+1} V_s(|\qV_i'-\qV_j'|),
\end{equation}
where $\qV_k'=\qV_k^s$ for $1\leq k\leq N$ and $\qV_{N+1}'=\qV_N^s$, and where we used
that $V_s(0)=-\frac1s$ for $s<0$;
now multiplying by $2\big/(N+1)N$ and recalling \Ref{aveRIESZpairENERGY} yields
\begin{equation}
\begin{split} \label{MIDDLEsumDOUBLEsumMERGErewrite}
&\frac{2}{(N+1)N} \left( \sum_{1\leq i\leq N-1} V_s(|\qV_i^s-\qV_{N+1}'|) + \mathop{\sum\sum}_{1\leq i < j\leq N} V_s(|\qV_i^s-\qV_j^s|) \right) \\
&\phantom{equals}= \frac{2}{(N+1)N} \, \frac{1}{s} + \langle V_s\rangle(\omega_{N+1}'),
\end{split}
\end{equation}
where $\omega_{N+1}'=\{\qV_1',...,\qV_{N+1}'\}$.

	For the remaining single sum, with factor $8\big/(N+1)N(N-1)$, we note that without loss of generality we may assume that
\begin{equation*}
\sum_{1 \leq i \leq N-1} V_s(|\qV_i^s-\qV_N^s|) \geq \sum_{\substack{1 \leq i \leq N, \\ i \neq j}} V_s(|\qV_i^s-\qV_j^s|), \qquad 
\forall j = 1, \dots, N;
\end{equation*}
but then
\begin{equation*}
\frac{1}{N-1}\sum_{1\leq i \leq N-1} V_s(|\qV_i^s-\qV_N^s|) \geq \frac{2}{N(N-1)} \mathop{\sum\sum}_{1\leq i < j\leq N} V_s(|\qV_i^s-\qV_j^s|),
\end{equation*}
and therefore also
\begin{equation}
\frac{1}{N-1} \sum_{1\leq i \leq N-1} V_s(|\qV_i^s-\qV_N^s|) 
\geq \label{MIDDLEsumDOMINATESsubave}
\frac{2}{(N-1)(N-2)} \mathop{\sum\sum}_{1\leq i < j\leq N - 1} V_s(|\qV_i^s-\qV_j^s|). 
\end{equation}
	This estimate of the single sum from the middle line of \Ref{TWOvsNEUagain}
in terms of the  double sum in the first line of \Ref{TWOvsNEUagain} gives us
\begin{align*} 
&\frac{2(N+3)}{(N+1)N(N-1)} \mathop{\sum\sum}_{1\leq i < j\leq N-1} V_s(|\qV_i^s-\qV_j^s|) \\
&\phantom{equals=}+ \frac{8}{(N+1)N(N-1)} \sum_{1\leq i \leq N-1} V_s(|\qV_i^s-\qV_N^s|) \\
&\phantom{equals}\geq 
 \frac{(N+3)(N-2)+8}{(N+1)N} \, \frac{2}{(N-1)(N-2)} \mathop{\sum\sum}_{1\leq i < j\leq N - 1} V_s(|\qV_i^s-\qV_j^s|) \\
&\phantom{equals}= \left( 1 + \frac{2}{(N+1)N} \right) \langle V_s\rangle(\omega^s_{N}\setdiff \{ \qV^s_N \}).
\end{align*}

	In total, we therefore have the estimate
\begin{equation}
2v_s^{}(N) 
\geq\label{TWOvsESTIMATE}
\frac{2}{(N+1)N} \, \frac{1}{s} + \langle V_s\rangle(\omega_{N+1}')
+
\left(1+\frac{2}{(N+1)N}\right) 
\langle V_s\rangle(\omega^s_{N}\setdiff \{ \qV^s_N \}),
\end{equation}
which can be estimated again with the help of \Ref{MINaveRIESZpairENERGY} to get
\begin{equation}
2v_s^{}(N) 
\geq\label{TWOvsESTIMATEmin}
{{\frac{2}{(N+1)N}}}\, \frac{1}{s}+ v_s^{}({N+1})
+
\left(1+{{\frac{2}{(N+1)N}}}\right) 
v_s^{}({N-1}).
\end{equation}
        Regrouping terms in \Ref{TWOvsESTIMATEmin} we get
\begin{equation}
 v_s^{}(N-1) - 2v_s^{}(N) + v_s^{}(N+1)
\leq \label{DDOTvsBOUND}
-{{\frac{2}{(N+1)N}}}\left(\frac{1}{s}+ v_s^{}({N-1})\right) .
\end{equation}

The proof is complete.
\end{proof}

\begin{remark} 
Step \Ref{MIDDLEsumDOUBLEsumMERGE} requires $V_s(0)<\infty$, which is true only for $s<0$.
\end{remark} 

\begin{remark}
	For $s<0$ we have $\frac{1}{s}+ v_s^{}({N-1})<0$ so that r.h.s.\Ref{DDOTvsBOUND} $>0$.
\end{remark}

\begin{remark}
	For the special value $s=-2$ we have the exact result \Ref{vSUBminusTWO},
and a direct computation shows that $\ddot{v}_{-2}(N)= -2/(N-1)^3 <0$.
	Therefore, at least for $s=-2$, the upper estimate is not optimal.

        Also, for $s<-2$ and even $N$ we have the exact result \Ref{vsFORsBELOWminusTWO},
which converges to $v_{-2}(N)$ when taking the limit $s\uparrow -2$ of \Ref{vsFORsBELOWminusTWO}.
	Again, a simple computation shows that $\ddot{v}_{s}(N)<0$ for $s<-2$ and $N$ even.
	Modulo confirming that the situation doesn't change when $N$ is odd, the upper bound on
$\ddot{v}_s^{}(N)$ for $s<0$ would be non-optimal for all $s\leq -2$.
\end{remark} 

\begin{remark}
	The ${1}/{s}$ at r.h.s.\Ref{DDOTvsBOUND} is a consequence of our definition of $V_s(r)$. 
	By working instead with $\widetilde{V}_s(r)=r^{-s}/s$, the $1/s$ term will not show up in 
\Ref{MIDDLEsumDOUBLEsumMERGE} and \Ref{MIDDLEsumDOUBLEsumMERGErewrite}
and hence be absent in \Ref{DDOTvsBOUND}, which would read
\begin{equation}
\widetilde{v}_s^{}(N-1) - 2\,\widetilde{v}_s^{}(N) + \widetilde{v}_s^{}(N+1)
\leq \label{DDOTtildevsBOUND}
-{{\frac{2}{(N+1)N}}}\, \widetilde{v}_s^{}({N-1}).
\end{equation}
	Note that indeed, the $1/s$ at r.h.s.\Ref{DDOTvsBOUND} cancels against the $-1/s$ implicit in $v_s^{}({N-1})$, 
and l.h.s.\Ref{DDOTvsBOUND} is invariant under the change $V_s(r) \to  \widetilde{V}_s(r)$ so that
our \Ref{DDOTvsBOUND} is quantitatively identical to \Ref{DDOTtildevsBOUND}.
\end{remark}

\begin{remark}
	If one could improve the factor 2 at r.h.s.\Ref{MIDDLEsumDOMINATESsubave}
into a $3/2$, this would prove concavity.
	However, the inequality \Ref{MIDDLEsumDOMINATESsubave} turns out to be sometimes an equality
(e.g. when $s=-1$ and $N=3$ or $N=4$).
	Also \Ref{TWOvsESTIMATE}, rewritten as
\begin{equation} 
\begin{split} \label{TWOvsESTIMATEneu}
v_s^{}(N-1) - 2v_s^{}(N) + v_s^{}(N+1) 
&\leq v_s^{}(N+1) - \langle V_s\rangle(\omega_{N+1}') \\
&\phantom{=}+ v_s^{}(N-1) -\langle V_s\rangle(\omega^s_{N}\setdiff \{ \qV^s_N \}) \\
&\phantom{=}- \frac{2}{(N+1)N} \left( \frac{1}{s} + \langle V_s\rangle(\omega^s_{N}\setdiff \{ \qV^s_N \}) \right),
\end{split}
\end{equation}
%
is still sometimes an equality!
	For instance, take $s=-1$, then 
$v_{-1}(2) = 1-2$ and $v_{-1}(3) = 1-\sqrt{3}$ and $v_{-1}(4) = 1- 2\sqrt{2/3}$, so for  $N=3$ then 
$\ddot{v}_{-1}(3) = v_{-1}(2) - 2v_s^{}(3) + v_s^{}(4)= 2(-1 +\sqrt{3} -\sqrt{2/3})=-0.168891546...$,
and the exact evaluation of r.h.s.\Ref{TWOvsESTIMATEneu} yields the same answer. 
\end{remark}

\begin{remark}
	The only true inequality is in the step from \Ref{TWOvsESTIMATE} to \Ref{TWOvsESTIMATEmin}.
	Thus, to prove concavity at least for some $s$-values one has to improve this estimate.

	We have checked that for $s=-1$ and $N=3$ the sum of the second line at r.h.s.\Ref{TWOvsESTIMATEneu} 
alone does not dominate the term in the third line; namely the sum of these two terms is $(7/6)\sqrt{3}-2>0$.
	Also, the term in the first line at r.h.s.\Ref{TWOvsESTIMATEneu} alone does not dominate the one
in the third line in this case;
namely, their sum equals $\sqrt{3}(1-2\sqrt{2}/3)>0$, too.
	Therefore, to prove concavity (for some negative $s$-values, say $s\leq -1$), one will need to prove that 
the first two lines at the r.h.s.\Ref{TWOvsESTIMATEneu} \emph{together} are more negative than the last 
line is positive, at least for a certain range of negative $s$-values.
\end{remark}

\subsection{\hskip-10pt Point-energies: better $\mathcal{O}(N^{-2})$ bounds on $\!\ddot{v}_s^{}(N),\,s<0$}
\label{sec:rigorous.upper.lower.bounds}

%

        So far our analysis has only used the concept of an ``energy of a configuration of $N$ points,'' which is based on the
interpretation of $V_s(|\qV-\qV'|)$ as a \emph{pair-energy}. 
	We now bring in the well-known concept of an ``energy of an individual point,'' or \emph{point-energy} for short,
which is based on the interpretation of $V_s(|\qV-\qV'|)$ as the \emph{(potential) energy of a particle located at $\qV$ in the 
potential field of a particle at $\qV'$}, a notion which is clearly reflexive.

We recall that every (unlabeled) $N$-point configuration $\omega_N^{}$ on $\Sset^2$ can be assigned a unique compatible family 
of normalized, so-called empirical $n$-point  measures, with $1\leq n\leq N$, and vice versa.
	For instance, with the help of any labeling, the first two of these can be written as\footnote{Note that the expressions at the
		r.h.s.s are invariant under the permutation group $S_N$, which is why the mapping 
	${\omega_N^{}}\leftrightarrow \{\uli\Delta^{(n)}_{\omega_N^{}}\}_{n=1}^N$ is one-to-one only for unlabeled configurations.}
\begin{equation}\label{normalEMPmeasONE}
\uli\Delta^{(1)}_{\omega_N^{}}(d\qV)
 = 
{{\frac{1}{N}}} \sum_{1\leq j\leq N} \delta_{\qV_j}(d\qV),
\end{equation}
\begin{equation}\label{normalEMPmeasTWO}
\uli\Delta^{(2)}_{\omega_N^{}}(d\qV d\qV')
 = 
{{\frac{2}{N(N-1)}}} \sum\!\sum_{\hskip-.7truecm 1\leq j < k\leq N} \delta_{\qV_j}(d\qV)\delta_{\qV_k}(d\qV'),
\end{equation} 
and analogously one writes the empirical measures of higher order $n =3,...,N$; here, with a mild abuse of notation,
$\delta_{\qV_j}(d\qV)$ denotes the Dirac measure on $\Sset^2$ concentrated at $\qV_j$, i.e. for any Borel subset $\Lambda$ of $\Sset^2$
we have $\int_\Lambda  \delta_{\qV_j}(d\qV)=0$ if $\qV_j\not\in\Lambda$, and $\int_\Lambda  \delta_{\qV_j}(d\qV)=1$ 
if $\qV_j\in\Lambda$.
	With the help of $\uli\Delta^{(2)}_{\omega_N^{}}$ we can rewrite the average standardized Riesz pair-energy of an
$N$-point configuration $\omega_N^{}$ as
\begin{equation} \label{eq:}
 \langle V_s \rangle(\omega_N^{}) 
=
 \iint_{\Sset^2\times\Sset^2} V_s\left(\left| \qV - \qV' \right|\right) \uli\Delta^{(2)}_{\omega_N^{}}(d\qV d\qV').
\end{equation}

Moreover, with the help of $\uli\Delta^{(1)}_{\omega_N^{}}$ we can now associate with each $N$-point configuration $\omega_N^{}$ 
a standardized Riesz $s$-potential function on $\Sset^2$, given by 
\begin{equation}
\big(V_s\star\uli\Delta^{(1)}_{\omega_N^{}}\big)(\qV)
 \equiv  \label{eq:V.potential.function}
 \int_{\Sset^2} V_s\left(\left| \qV - \qV' \right|\right) \uli\Delta^{(1)}_{\omega_N^{}}(d\qV'), \qquad \qV \in \Sset^2,\quad s<0.
\end{equation}

\begin{remark}
        Since $s<0$, the potential function \Ref{eq:V.potential.function} is well-defined and finite on all of $\Sset^2$ because $V_s:\Rset_+\to\Rset$
is continuous for $s<0$.
	Of course, provided one restricts $\qV$ to $\Sset^2\!\setminus\!\omega_N^{}$, one can also extend \Ref{eq:V.potential.function} 
continuously to the regime $s\geq 0$.
	Moreover, when $s\in[0,2)$ the function $\qV'\mapsto V_s\left(\left| \qV - \qV' \right|\right)$ is weakly lower semi-continuous 
(i.e. $V_s$ is the pointwise limit of an increasing sequence of continuous functions), and therefore the potential function 
\Ref{eq:V.potential.function} is well-defined (in the sense that it may be positive infinite for certain $\qV$) also for $N\to\infty$, 
whenever $\uli\Delta^{(1)}_{\omega_N^{}}(d\qV')\rightharpoonup \mu(d\qV')$ in the weak$^*$ sense, 
where $\mu(d\qV')$ is a regular Borel measure on $\Sset^2$.
	Note that $\big(V_s\star\mu\big)(\qV)\leq \liminf\big(V_s\star\uli\Delta^{(1)}_{\omega_N^{}}\big)(\qV)$.
\end{remark}

Of special interest to us are the standardized Riesz $s$-potentials of $N$-point configurations $\omega_N^{}$ 
on $\Sset^2$ obtained from ${(N+1)}$-point configurations by removing any particular point --- or rather: their analogues with $N$ replaced by $N-1$.
	Every $\omega_N^{}$ defines a set of $N$ such ${(N-1)}$-point configurations.
	After introducing any convenient labeling of the points in $\omega_N^{}$, this set of ${(N-1)}$-point configurations reads
$\big\{{\omega_{N}^{}\setdiff \{ \qV_\ell \}}\big\}_{\ell=1}^{_N}$. 
	For every ${(N-1)}$-point configuration there is a standardized Riesz $s$-potential function on $\Sset^2$, given by 
$\big(V_s\star\uli\Delta^{(1)}_{\omega_N^{}\setdiff \{ \qV_\ell \}}\big)(\qV)$.

For every point $\qV_\ell\in \omega^{}_N,\, \ell \in\{ 1, \dots, N\}$, its \emph{average standardized Riesz point-energy 
w.r.t. the reduced ${(N-1)}$-point configuration} $\omega^{}_N \setdiff \{ \qV_\ell \}$ is simply the 
standardized Riesz $s$-potential of $\omega^{}_N \setdiff \{ \qV_\ell \}$ evaluated  at $\qV=\qV_\ell$, viz.
\begin{equation} \label{eq:average.point.energy}
\frac{1}{N-1} \sum_{\substack{1 \leq j \leq N, \\ j \neq \ell}} V_s( | \qV_j - \qV_\ell | )
=
\big(V_s\star\uli\Delta^{(1)}_{\omega_{N}^{}\setdiff \{ \qV_\ell \}}\big)(\qV_\ell),
 \quad \ell = 1, \dots, N.
\end{equation}
	Thus every $\omega_N^{}$ defines a set of $N$ point-energies 
$\big\{\big(V_s\star\uli\Delta^{(1)}_{\omega_{N}^{}\setdiff \{ \qV_\ell \}}\big)(\qV_\ell)\big\}_{\ell=1}^{_N}$. 
	The average standardized Riesz pair-energy of $\omega^{}_N$ is the mean of these:
\begin{equation}
\langle V_s \rangle( \omega_N ) 
= \label{eq:Vs.of.omegaN.as.ave.of.pt.energies} 
\frac{1}{N} \sum_{1 \leq \ell \leq N} \big(V_s\star\uli\Delta^{(1)}_{\omega_{N}^{}\setdiff \{ \qV_\ell \}}\big)(\qV_\ell).
\end{equation}
\begin{remark}
Given a minimizing configuration $\omega_N^s$, it can easily be seen that
\begin{equation}
\big(V_s\star\uli\Delta^{(1)}_{\omega_N^s\setdiff \{ \qV^s_\ell \}}\big)(\qV)
\geq 
\big(V_s\star\uli\Delta^{(1)}_{\omega_N^s\setdiff \{ \qV^s_\ell \}}\big)(\qV^s_\ell)\qquad \text{for all $\qV \in \Sset^2$},
\end{equation}
which simply says that each (generalized) unit point charge in a minimal-energy configuration occupies a point of
minimal potential energy in the potential field generated by the remaining (generalized) unit point charges.
\end{remark}


	With the help of the so-defined potential functions and point energies, we are ready to prove Proposition \ref{prop:dd.v.s.bounds.1},
which improves the upper bound in Proposition \ref{prop:dd.v.s.UPPER.bound} and also supplies a lower bound of the same type.

\begin{proof} (of Proposition \ref{prop:dd.v.s.bounds.1})
	Using the definitions \eqref{aveRIESZpairENERGY} and \eqref{eq:average.point.energy},  for each $\ell = 1, \dots, N$ we have
\begin{equation} \label{eq:master.identity}
\langle V_s \rangle( \omega_N ) 
= 
\frac{N - 2}{N} \, \langle V_s \rangle( \omega_N \setdiff \{ \qV_\ell \} ) + 
\frac{2}{N} \big(V_s\star\uli\Delta^{(1)}_{\omega_{N}^{}\setdiff \{ \qV_\ell \}}\big)(\qV_\ell).
\end{equation}
	This will be our ``master identity.''

	First, averaging \eqref{eq:master.identity} over all $\qV_\ell$, $\ell = 1, \dots, N$, and
recalling \Ref{eq:Vs.of.omegaN.as.ave.of.pt.energies}, gives
\begin{equation}
 \label{eq:identity.1a}
\langle V_s \rangle( \omega_N^{} )= \frac{1}{N} \sum_{1 \leq \ell \leq N} \langle V_s \rangle( \omega_N^{} \setdiff \{ \qV_\ell \} ).
\end{equation}
	Now replacing $\omega_N^{}$ by $\omega_N^s = \{ \qV_1^s, \dots, \qV_N^s \}$, a minimizing $N$-point configuration,
and using \Ref{MINaveRIESZpairENERGY} for ${(N-1)}$-point configurations at the r.h.s., we recover the monotonicity relation 
\begin{equation}
v_s^{}( N ) 
\geq \label{eq:monotonicity}
v_s^{}( N - 1 ) \qquad \text{for all integers $N > 2$;} 
\end{equation}
cf. \cite{Landkof,KieRMP,KieJSPeNULL}.

	Second, replacing $\omega_N^{}$ in \eqref{eq:master.identity} with $\omega_N^{} \!\cup\! \{ \qV \}$, $\qV \in \Sset^2$, and 
$\qV_\ell$ by $\qV$, yields
\begin{equation*}
\langle V_s \rangle( \omega_N \!\cup\! \{ \qV \} ) 
=
 \frac{N-1}{N+1} \langle V_s \rangle( \omega_N ) +  \frac{2}{(N +1)}\big(V_s\star\uli\Delta^{(1)}_{\omega_{N}^{}}\big)(\qV),
\end{equation*}
or, equivalently,
\begin{equation} \label{eq:identity.2}
\langle V_s \rangle( \omega_N ) 
= 
\frac{N+1}{N-1} \langle V_s \rangle( \omega_N \!\cup\! \{ \qV \} ) 
- \frac{2}{(N -1)}\big(V_s\star\uli\Delta^{(1)}_{\omega_{N}^{}}\big)(\qV).
\end{equation}
	Now setting $\qV =\qV_\ell\in\omega_N^{}$ and 
averaging \eqref{eq:identity.2} over all $\qV_\ell$, $\ell = 1, \dots, N$, and
recalling \Ref{eq:average.point.energy} and \Ref{eq:Vs.of.omegaN.as.ave.of.pt.energies}, gives
\begin{equation*}
\langle V_s \rangle( \omega_N ) 
= 
\frac{N}{N+2} \, \frac{N+1}{N-1} 
\left\{ \frac{1}{N} \sum_{1 \leq \ell \leq N} \langle V_s \rangle( \omega_N \!\cup\! \{ \qV_\ell \} ) - \frac{2V_s( 0 )}{(N+1)N} \right\}.
\end{equation*}
	Once again replacing $\omega_N^{}$ by $\omega_N^s$, then using \Ref{MINaveRIESZpairENERGY} with $N$ replaced by $N+1$
at the r.h.s., we find an estimate in the opposite direction to \Ref{eq:monotonicity},
\begin{equation}
v_s^{}( N ) 
\geq  \label{eq:monotonicityCOMPLEMENT}
\frac{(N+1)N}{(N+2)(N-1)} \, v_s^{}( N + 1) - \frac{2v_s^{}( 0 )}{(N+2)(N-1)}, \quad N > 2;
\end{equation}
note that $V_s(0) = - 1 / s$ for $s<0$.

	Next, inserting a minimizing $N$-point configuration $\omega_N^s = \{ \qV_1^s, \dots, \qV_N^s \}$ directly 
into \eqref{eq:master.identity}, and also into \eqref{eq:identity.2} with $\qV = \qV_\ell^s$, then using \Ref{MINaveRIESZpairENERGY},
yields
\begin{align}
v_s^{}( N ) 
&\geq \label{eq:estimate.A} 
\frac{N-2}{N} v_s^{}( N - 1 ) +
\frac{2}{N} \big(V_s\star\uli\Delta^{(1)}_{\omega_{N}^s\setdiff \{ \qV^s_\ell \}}\big)(\qV^s_\ell), \\
v_s^{}( N ) 
&\geq \label{eq:estimate.B}
\frac{N+1}{N-1} v_s^{}( N + 1 ) - 
\frac{2}{N} \big(V_s\star\uli\Delta^{(1)}_{\omega_{N}^s\setdiff \{ \qV^s_\ell \}}\big)(\qV^s_\ell)
- \frac{2 V_s( 0 )}{N (N - 1)} 
\end{align}
 for each $\ell = 1, \dots, N$.
	Recalling definition \eqref{DDOTvs} of $\ddot{v}_s^{}(N)$, we
split $2 v_s^{}(N)$ in \eqref{DDOTvs} into $(1-\delta) v_s^{}(N) + (1+\delta) v_s^{}(N)$, then use inequality \Ref{eq:estimate.A} 
to estimate $(1-\delta) v_s^{}(N)$ and inequality \Ref{eq:estimate.B} to estimate $(1+\delta) v_s^{}(N)$, arriving (after simplifications) at
\begin{equation}
\begin{split} \label{eq:master.inequality}
\ddot{v}_s^{}(N) 
&\leq \frac{2 + ( N - 2 ) \delta}{N} v_s^{}( N - 1 ) - \frac{2 + ( N + 1 ) \delta}{N-1} v_s^{}( N + 1 ) \\
&\phantom{=}+ 4 \delta 
	 \frac{1}{N}\big(V_s\star\uli\Delta^{(1)}_{\omega_{N}^s\setdiff \{ \qV^s_L \}}\big)(\qV^s_\ell)
- \frac{2 ( 1 + \delta )}{s} \frac{1}{N (N-1)}.
\end{split}
\end{equation}
	This relation holds for all $\ell = 1,\dots, N$ and parameters $\delta$ with 
$-1 \leq \delta \leq 1$. 

	Let $0 < \delta \leq 1$, and pick $\ell$ such that
\begin{equation} 
	\big(V_s\star\uli\Delta^{(1)}_{\omega_{N}^s\setdiff \{ \qV^s_\ell \}}\big)(\qV^s_\ell)
\leq \label{eq:potential.estimate}
	v_s^{}( N ); 
\end{equation}
that such a choice of $\ell$ is possible follows from \Ref{eq:Vs.of.omegaN.as.ave.of.pt.energies}.
	With these choices of $\delta$ and $\ell$, inequality \Ref{eq:master.inequality}
and the monotonicity relation \eqref{eq:monotonicity} together give
\begin{align*}
\ddot{v}_s^{}(N) 
&\leq \frac{2 + ( N - 2 ) \delta}{N}   v_s^{}( N ) 
    - \frac{2 + ( N + 1 ) \delta}{N-1} v_s^{}( N ) 
    + \frac{4 \delta}{N}               v_s^{}( N ) 
\\
&\qquad - \frac{2 ( 1 + \delta )}{s} \frac{1}{N (N-1)} \\
&=- \frac{2( 1 + \delta )}{N (N - 1)} \left( v_s^{}( N ) + \frac{1}{s} \right).
\end{align*}
	So nothing can be gained here. 

	Let $-2/(N+1) \leq \delta < 0$, and now pick $\ell$ such that
\begin{equation}
	\big(V_s\star\uli\Delta^{(1)}_{\omega_{N}^s\setdiff \{ \qV^s_\ell \}}\big)(\qV^s_{\ell})
\geq \label{eq:potential.estimate.II}
	v_s^{}( N ); 
\end{equation}
that such a choice of $\ell$ is possible follows again from \Ref{eq:Vs.of.omegaN.as.ave.of.pt.energies}.
	With these choices of $\delta$ and~$\ell$, inequality \Ref{eq:master.inequality}
and the estimate \eqref{eq:monotonicityCOMPLEMENT} together give
\begin{equation*}
	\ddot{v}_s^{}(N) 
\leq 
	- \frac{2( 1 + \delta )}{N (N - 1)} \left( v_s^{}( N ) + \frac{1}{s} \right), \qquad - \frac{2}{N+1} 
\leq \delta < 0,
\end{equation*}
for $2 + ( N - 2 ) \delta \geq 0$ and $2 + ( N + 1 ) \delta \geq 0$.
   Since $v_s^{}( N ) + 1 / s < 0$, we obtain for $\delta = - 2/(N + 1)$ the upper estimate exhibited in Proposition \ref{prop:dd.v.s.bounds.1},
\begin{equation*}
\ddot{v}_s^{}(N) \leq - \frac{2}{(N + 1) N} \left( v_s^{}( N ) + \frac{1}{s} \right).
\end{equation*}

	On the other hand, the estimates \eqref{eq:estimate.B} and \eqref{eq:potential.estimate} can be used to get a lower bound for 
$v_s^{}( N - 1 )$,
\begin{equation*}
v_s^{}( N - 1) \geq \frac{N}{N+1} \frac{N-1}{N-2} v_s^{}( N ) - \frac{2V_s(0)}{( N + 1 ) ( N - 2 )}.
\end{equation*}
	This estimate and the estimate $v_s^{}( N + 1 ) \geq v_s^{}( N )$ allow us to estimate the r.h.s. in the definition
\eqref{DDOTvs} of $\ddot{v}_s^{}(N)$ to  obtain the lower estimate exhibited in Proposition \ref{prop:dd.v.s.bounds.1},
\begin{align*}
\ddot{v}_s^{}( N ) 
&\geq \frac{N}{N+1} \frac{N-1}{N-2} v_s^{}( N ) - \frac{2V_s(0)}{( N + 1 ) ( N - 2 )} - 2 v_s^{}( N ) + v_s^{}( N ) \\
&= \frac{2}{( N + 1 ) ( N - 2 )} \left( v_s^{}( N ) - V_s(0) \right).
\end{align*}
%
%
%
\end{proof}

\begin{remark}
	For $\delta \leq - 2 / ( N + 1 )$ and $\ell$ as in \Ref{eq:potential.estimate.II}
we obtain from  \Ref{eq:master.inequality} (using $v_s^{}( N ) \geq v_s^{}( N - 1 )$) 
\begin{align*}
\ddot{v}_s^{}(N) 
&\leq \frac{2 + (N-2) \delta}{N} v_s^{}(N-1) 
  - \frac{2 + (N+1) \delta}{N-1} v_s^{}(N+1) 
	    + \frac{4 \delta}{N} v_s^{}( N ) \\
&\phantom{=}- \frac{2 ( 1 + \delta )}{s} \frac{1}{N (N-1)} \\
&\leq \frac{2 + (N+2) \delta}{N} v_s^{}(N-1) 
  - \frac{2 + (N+1) \delta}{N-1} v_s^{}(N+1)
\\
&\phantom{=}  - \frac{2 ( 1 + \delta )}{s} \frac{1}{N (N-1)} \\
&= \frac{2}{N} v_s^{}( N - 1 ) - \frac{2}{N-1} v_s^{}( N + 1 ) - \frac{2}{s} \frac{1}{N (N-1)} \\
&\phantom{=}+ \left( \frac{N + 2}{N} v_s^{}( N - 1 ) - \frac{N + 1}{N-1} v_s^{}( N + 1 )- \frac{2}{s} \frac{1}{N (N-1)} \right) \delta.
\end{align*}
	Both factors $2 + ( N + 2 ) \delta$ and $2 + ( N + 1 ) \delta$ are $\leq 0$. 
	As special cases one has
\begin{align*}
\ddot{v}_s^{}(N) &\leq - \frac{2}{N (N+1)} \left( v_s^{}( N - 1 ) + \frac{1}{s} \right) & &\text{if $\delta = - \frac{2}{N+1}$,} \\
\ddot{v}_s^{}(N) &\leq v_s^{}( N + 1 ) - v_s^{}( N - 1 ) & &\text{if $\delta = -1$;}
\end{align*}
the first of these inequalities is simply Proposition \ref{prop:dd.v.s.UPPER.bound}, while
the second one also follows trivially from the definition \eqref{DDOTvs} and \eqref{eq:monotonicity}.
\end{remark}

\begin{remark}
	The concept of the Riesz potential of an $N$-point configuration also allows one to obtain identities 
relating point energies and the average pair-energy when a single point is removed or added in again, as follows.

	Again inserting a minimizing $N$-point configuration $\omega_N^s = \{ \qV_1^s, \dots, \qV_N^s \}$
into \eqref{eq:master.identity}, and also into \eqref{eq:identity.2} with $\qV = \qV^s_{\ell}$,
but this time \emph{not} using \Ref{MINaveRIESZpairENERGY}, yields 
\begin{align}
	v_s^{}(N) 
&= \label{eq:v.s_ID.a}
	\frac{N - 2}{N} \, \langle V_s \rangle( \omega_N^s \setdiff \{ \qV_\ell \} ) +
	 \frac{2}{N} \big(V_s\star\uli\Delta^{(1)}_{\omega_{N}^s\setdiff \{ \qV^s_\ell \}}\big)(\qV^s_\ell),\\
	v_s^{}(N)
&= \label{eq:v.s_ID.b}
	\frac{N+1}{N-1} \, \langle V_s \rangle( \omega_N^s \!\cup\! \{ \qV^s_{\ell'} \} ) - 
	\frac{2}{N}  \big(V_s\star\uli\Delta^{(1)}_{\omega_{N}^s\setdiff \{ \qV^s_{\ell'} \}}\big)(\qV^s_{\ell'})
\end{align}
for all $\ell = 1, \dots, N$, meaning we have $2N$ (not necessarily all different) expressions for $v_s^{}(N)$.
	
	Now subtracting \Ref{eq:v.s_ID.a} ``from itself,'' with two different $\ell$-values in place, 
and the same for \Ref{eq:v.s_ID.b}, and the same for the identity r.h.s.\Ref{eq:v.s_ID.a} = r.h.s.\Ref{eq:v.s_ID.b},
and resorting, yields for all $ \ell, \ell' = 1, \dots, N$ the identities\footnote{In fact, 
			one can show these hold for all general $N$-point sets $\omega_N$.}
\begin{align*}
&\frac{N-2}{N} \left[ \langle V_s \rangle( \omega_N^s \setdiff \{ \qV_\ell \} ) 
	- \langle V_s \rangle( \omega_N^s \setdiff \{ \qV_{\ell'} \} ) \right] \\
&\phantom{equals}= \frac{2}{N} \left[ \big(V_s\star\uli\Delta^{(1)}_{\omega_{N}^s\setdiff \{ \qV^s_{\ell'} \}}\big)(\qV^s_{\ell'}) 
	- \big(V_s\star\uli\Delta^{(1)}_{\omega_{N}^s\setdiff \{ \qV^s_\ell \}}\big)(\qV^s_\ell) \right] \\
&\phantom{equals}= \frac{N+1}{N-1} \left[ \langle V_s \rangle( \omega_N^s \!\cup\! \{ \qV^s_{\ell'} \} ) 
- \langle V_s \rangle( \omega_N^s \!\cup\! \{ \qV^s_{\ell} \} ) \right].
\end{align*}

	In this vein, alternate exact representations of the discrete second derivative of $v_s^{}(N)$ follow; for instance,
we offer
\begin{equation*}
\begin{split}
\ddot{v}_s (N)
&= v_s^{}( N - 1 ) - \langle V_s \rangle( \omega_N^s \setdiff \{ \qV_\ell \} ) 
+ v_s^{}( N + 1 ) - \langle V_s \rangle( \omega_N^s \!\cup\! \{\qV^s_{\ell'} \} ) \\
&\phantom{=}
- \frac{2}{N} \left[ \big(V_s\star\uli\Delta^{(1)}_{\omega_{N}^s\setdiff \{ \qV^s_\ell \}}\big)(\qV^s_\ell) 
-  \langle V_s \rangle( \omega_N^s \setdiff \{ \qV_\ell \} ) \right.\\
&\phantom{=}\qquad \qquad  + \left.\langle V_s \rangle( \omega_N^s \!\cup\! \{ \qV^s_{\ell'} \} ) -
\big(V_s\star\uli\Delta^{(1)}_{\omega_{N}^s\setdiff \{ \qV^s_{\ell'} \}}\big)(\qV^s_{\ell'}) 
	\right]  \\
&\phantom{=}+ \frac{2}{N(N-1)} \left[ \langle V_s \rangle( \omega_N^s \!\cup\! \{ \qV^s_{\ell'} \} ) - V_s(0) \right].
\end{split}
\end{equation*}
	Note that from the definition \eqref{aveRIESZpairENERGY} it follows that the first line at the r.h.s. $<0$.
\end{remark}
	By the last line in our remark we obtain an upper bound on $\ddot{v}_s (N)$ entirely in terms of expressions 
involving only the optimal $N$-point configuration, viz.
\begin{proposition} \label{prop:dd.v.s.UPPER.bound.2}
	For $s < 0$ the map $N\mapsto \ddot{v}_s^{}(N)$ is bounded above by
\begin{equation*}
\begin{split}
\ddot{v}_s (N)
&\leq
- \frac{2}{N} \left[ \big(V_s\star\uli\Delta^{(1)}_{\omega_{N}^s\setdiff \{ \qV^s_\ell \}}\big)(\qV^s_\ell) 
- \big(V_s\star\uli\Delta^{(1)}_{\omega_{N}^s\setdiff \{ \qV^s_{\ell'} \}}\big)(\qV^s_{\ell'}) \right.\\
&\left.\phantom{\uli\Delta^{(1)}_{\omega_{N}^s\setdiff \{ \qV^s_{\ell'} \}}\big)}
  + \langle V_s \rangle( \omega_N^s \!\cup\! \{ \qV^s_{\ell'} \} ) 
	-  \langle V_s \rangle( \omega_N^s \setdiff \{ \qV_\ell \} ) \right]  \\
&\phantom{\leq }\ + \frac{2}{N(N-1)} \left[ \langle V_s \rangle( \omega_N^s \!\cup\! \{ \qV^s_{\ell'} \} ) - V_s(0) \right].
\end{split}
\end{equation*}
\end{proposition}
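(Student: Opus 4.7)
The plan is to use the exact identity for $\ddot v_s^{}(N)$ that was just recorded (display above the proposition) and then simply discard the two non-positive terms appearing in its first line.

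\smallskip

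First, I would take as starting point the exact representation
\begin{equation*}
\begin{split}
\ddot{v}_s (N)
&= v_s^{}( N - 1 ) - \langle V_s \rangle( \omega_N^s \setdiff \{ \qV_\ell \} )
+ v_s^{}( N + 1 ) - \langle V_s \rangle( \omega_N^s \!\cup\! \{\qV^s_{\ell'} \} ) \\
&\phantom{=}
- \frac{2}{N} \left[ \big(V_s\star\uli\Delta^{(1)}_{\omega_{N}^s\setdiff \{ \qV^s_\ell \}}\big)(\qV^s_\ell)
- \langle V_s \rangle( \omega_N^s \setdiff \{ \qV_\ell \} ) \right.\\
&\phantom{=}\qquad \qquad  + \left.\langle V_s \rangle( \omega_N^s \!\cup\! \{ \qV^s_{\ell'} \} ) -
\big(V_s\star\uli\Delta^{(1)}_{\omega_{N}^s\setdiff \{ \qV^s_{\ell'} \}}\big)(\qV^s_{\ell'})
	\right]  \\
&\phantom{=}+ \frac{2}{N(N-1)} \left[ \langle V_s \rangle( \omega_N^s \!\cup\! \{ \qV^s_{\ell'} \} ) - V_s(0) \right],
\end{split}
\end{equation*}
which holds for arbitrary indices $\ell,\ell'\in\{1,\dots,N\}$, and was obtained in the preceding remark by combining the two identities \eqref{eq:v.s_ID.a} and \eqref{eq:v.s_ID.b} (one deletes a point, the other adds a point back in) together with the definition \eqref{DDOTvs} of $\ddot v_s^{}(N)$.

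\smallskip

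Next, I would observe that the two differences in the first line of the right-hand side are both non-positive. Indeed, $\omega_N^s \setdiff \{\qV_\ell^s\}$ is an $(N-1)$-point configuration on $\Sset^2$, and $\omega_N^s \cup \{\qV_{\ell'}^s\}$ is a (labeled) $(N+1)$-point configuration on $\Sset^2$ (with $V_s(0)=-1/s$ well-defined as long as $s<0$), so by the very definition \eqref{MINaveRIESZpairENERGY} of $v_s^{}(M)$ as an infimum over $M$-point configurations one has
\begin{equation*}
v_s^{}(N-1) \leq \langle V_s\rangle(\omega_N^s \setdiff \{\qV_\ell\}),
\qquad
v_s^{}(N+1) \leq \langle V_s\rangle(\omega_N^s \cup \{\qV_{\ell'}^s\}).
\end{equation*}
Adding these two inequalities shows the sum of the four terms on the first line is $\leq 0$.

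\smallskip

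Finally, dropping this non-positive first line from the identity and regrouping the remaining terms reproduces exactly the upper bound stated in Proposition~\ref{prop:dd.v.s.UPPER.bound.2}. There is no real obstacle here: all of the hard work has already been done in deriving the two ``master'' identities \eqref{eq:v.s_ID.a}--\eqref{eq:v.s_ID.b} (via the substitutions $\omega_N\leadsto\omega_N^s$ and $\omega_N\leadsto\omega_N^s\cup\{\qV\}$ in \eqref{eq:master.identity}) and in combining them into the exact representation of $\ddot v_s^{}(N)$ displayed above. The only subtlety worth flagging explicitly is that $V_s(0)=-1/s$ requires $s<0$, so that $\omega_N^s\cup\{\qV_{\ell'}^s\}$ may legitimately be treated as a (labeled) $(N+1)$-point configuration whose average pair-energy is finite and bounded below by $v_s^{}(N+1)$; this is exactly the hypothesis $s<0$ of the proposition.
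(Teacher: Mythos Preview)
Your proposal is correct and matches the paper's own argument essentially verbatim: the paper also derives the exact representation of $\ddot v_s(N)$ in the preceding remark, observes that the first line is negative by definition of $v_s^{}$ as an infimum, and then simply discards it to obtain the stated upper bound. Your explicit flagging of why $s<0$ is needed (so that $V_s(0)$ is finite and the doubled-point configuration has finite average pair-energy) is a nice clarification that the paper leaves implicit.
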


%
%

	\subsection{Upper bounds on $\ddot{v}_s^{}(N)$ for $N\in\{4,6,12\}$, $s\in(-2,\infty)$} \label{sec:6and12bounds}

	The upper and lower bounds on  $\ddot{v}_s^{}(N)$ presented so far are valid for general $N> 2$ and $s<0$.
	No structural information about any optimizer was used. 

	For the $N$-values of the universal configurations, viz.\footnote{Of course, the $N=2$ configuration is also universally
		optimal, but `` $\ddot{v}_s^{}(2)$'' is ill-defined.}
$N = 3$, $4$, $6$, and $12$, one can easily get better upper bounds on $\ddot{v}_s^{}(N)$ for any $s$;
though for $N=3$ this is a pointless exercise, because \Ref{eq:DDvs.3} gives the exact expression 
of $\ddot{v}_s^{}(3)$ for all $s>-2$; cf. Subsection~\ref{sec:specific.DDvs.3}, 
where we found that $\ddot{v}_s^{}(3)<0$ for $s<9.4$ (approximately).
	This leaves the cases $N\in\{4,6,12\}$.

	We begin by noting the obvious inequality
\begin{equation}\label{eq:ddvs.UP.inequ}
\ddot{v}_s^{} (N)
\leq
 \langle V_s \rangle( \omega_{N-1}^{}) -2{v}_s^{} (N) + \langle V_s \rangle( \omega_{N+1}^{}) ;
\end{equation}
here, $\omega_{N-1}^{}$ and $\omega_{N+1}^{}$ are any convenient ${(N\mp 1)}$-point configurations.
	Inequality \Ref{eq:ddvs.UP.inequ} allows us to state an immediate corollary to the results of Section~\ref{sec:rigorsSTAR}.
\begin{corollary}
	Since the optimizers for $N\!\in\!\{3,4\}$  are universal, but not the one for $N\!=\!5$, 
r.h.s.(\ref{eq:DDvs.4}) is a \emph{rigorous upper bound} to $\ddot{v}_{s}^{}(4)$ for all $s\in(-2,\infty)$.
        As a consequence, $\ddot{v}_{s}^{}(4)<0$ for all $s\in(-2,s_1^{(4)})$, with $s_1^{(4)}<1$ (yet $\approx 1$).

	Since the $N=6$ optimizer is universal, but not those for $N\in\{5,7\}$, 
r.h.s.(\ref{eq:DDvs.6}) is a \emph{rigorous upper bound} to $\ddot{v}_{s}(6)$ for all $s\in(-2,\infty)$.
        As a consequence, $\ddot{v}_{s}^{}(6)<0$ for all $s\in(-2,s_1^{(6)})$, with $s_1^{(6)}<0$ (yet $\approx 0$).
\end{corollary}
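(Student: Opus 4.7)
The plan is to apply the elementary inequality \Ref{eq:ddvs.UP.inequ} with judicious (but not necessarily optimal) choices of the comparison configurations $\omega_{N\mp 1}^{}$, and then to recognize that with those choices the right-hand side is exactly the algebraic expression already computed in Section~\ref{sec:specific.DDvs.4} (respectively Section~\ref{sec:specific.DDvs.6}). First I would observe that for any $\omega_{N-1}^{}$ and $\omega_{N+1}^{}$ on $\Sset^2$,
\begin{equation*}
\ddot{v}_s^{}(N)= v_s^{}(N-1)-2 v_s^{}(N)+v_s^{}(N+1)\leq \langle V_s\rangle(\omega_{N-1}^{}) - 2 v_s^{}(N) + \langle V_s\rangle(\omega_{N+1}^{}),
\end{equation*}
with equality exactly when both $\omega_{N\mp 1}^{}$ are $s$-optimal. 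By the universality listed in \Ref{omTHR}, \Ref{omFOU}, and \Ref{omSIX}, the equilateral triangle, regular tetrahedron, and regular octahedron are $s$-optimal for every $s\in(-2,\infty)$; consequently $v_s^{}(3)$, $v_s^{}(4)$, $v_s^{}(6)$ have closed-form expressions valid on all of $(-2,\infty)$.

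For the $N=4$ statement I would take $\omega_3^{}$ to be the equilateral triangle and $\omega_5^{}$ to be the triangular bi-pyramid. Then $\langle V_s\rangle(\omega_3^{})=v_s^{}(3)$ and $\langle V_s\rangle(\omega_5^{})$ coincides with the formula used for $v_s^{}(5)$ in Subsection~\ref{sec:specific.v.s.N}; hence the inequality above rearranges to
\begin{equation*}
\ddot{v}_s^{}(4)\leq\text{r.h.s.}(\ref{eq:DDvs.4})\qquad \forall\, s\in(-2,\infty),
\end{equation*}
since the triangular bi-pyramid is a genuine (though for $s>15.04807\ldots$ generally suboptimal) $5$-point configuration. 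The computation already performed in Subsection~\ref{sec:specific.DDvs.4} --- in particular the quasi-rigorous strict positivity argument leading to Proposition~\ref{prop:rig.UP.bound.on.s.STAR.ALT} --- shows that r.h.s.(\ref{eq:DDvs.4}) is strictly negative on some non-empty subinterval $(-2,s_1^{(4)})$ with $s_1^{(4)}<1$, and strict negativity transfers immediately to $\ddot{v}_s^{}(4)$.

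For $N=6$ the argument is entirely parallel: take $\omega_5^{}$ to be the triangular bi-pyramid and $\omega_7^{}$ to be the pentagonal bi-pyramid. Both are bona fide point configurations on $\Sset^2$ for every $s\in(-2,\infty)$, though \Ref{omFIV} and \Ref{omSEV} show they are optimizers only on certain sub-ranges of $s$. The inequality yields $\ddot{v}_s^{}(6)\leq\text{r.h.s.}(\ref{eq:DDvs.6})$ for all $s\in(-2,\infty)$, and the computer-assisted evaluation in Subsection~\ref{sec:specific.DDvs.6} then furnishes an interval $(-2,s_1^{(6)})$ with $s_1^{(6)}<0$ on which this upper bound, and hence $\ddot{v}_s^{}(6)$ itself, is strictly negative. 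There is no real obstacle here: the only subtlety is the logical distinction between the algebraic expressions in \Ref{eq:DDvs.4} and \Ref{eq:DDvs.6}, which were originally derived under the assumption that the chosen bi-pyramids are $s$-optimal, and their reinterpretation as upper bounds valid on the whole range $(-2,\infty)$ --- and this distinction is exactly what the inequality \Ref{eq:ddvs.UP.inequ} supplies for free.
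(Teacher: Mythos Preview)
Your argument is correct and is precisely the approach the paper has in mind: the Corollary is stated immediately after inequality \Ref{eq:ddvs.UP.inequ} as ``an immediate corollary,'' and your choice of the equilateral triangle, triangular bi-pyramid, and pentagonal bi-pyramid as comparison configurations is exactly what makes the right-hand side coincide with r.h.s.(\ref{eq:DDvs.4}) and r.h.s.(\ref{eq:DDvs.6}) respectively. Your closing remark about the logical distinction between these expressions as putative equalities versus rigorous upper bounds captures the whole content of the Corollary.
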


	Similarly, based on the fact that the $N=12$ optimizer is universal 
while those for $N\in\{11,13\}$ are not, we can obtain rigorous upper bounds on $\ddot{v}_s^{}(12)$.
\begin{proposition}	
	The map $s\mapsto \ddot{v}_s^{}(12)$ is bounded above by
\begin{equation}
\ddot{v}_s^{} (12)
< \label{eq:DDvs.12upperBOUND}
 \langle V_s \rangle(\omega_{12}^{s}\setdiff\{\qV_1^s\}) -2{v}_s^{} (12) + \langle V_s \rangle( \omega_{12}^{s}\!\cup\!\{\pV\}) .
\end{equation}
	In particular, choosing the epi-center of a face of the icosahedron for $\pV$ yields
\begin{equation}
\begin{split}
	\ddot{v}_s^{}(12)
< \label{eq:DDvs.12upperBOUND.expl}
& \textstyle\frac{1}{s}\!\left(\frac{1}{26}\left[
  \Big(\sqrt{1 - \frac{3+\sqrt{5}}{2\sqrt{3}\sin\frac{2\pi}{5}} +\frac{5+\sqrt{5}}{8\sin^2\frac{2\pi}{5}}}\Big)^{\,-s}
\!+ \Big(\sqrt{1 + \frac{3+\sqrt{5}}{2\sqrt{3}\sin\frac{2\pi}{5}} +\frac{5+\sqrt{5}}{8\sin^2\frac{2\pi}{5}}}\Big)^{\,-s}
\right.\right. \\
&\left.\left.
\hskip1truecm
\textstyle
+ \Big(2\sin\Big[\frac{1}{2}\arccos\frac{1}{\sqrt{3}\sin\frac{2\pi}{5}}+\frac{1}{2}\arctan\frac{1}{2}\Big]\Big)^{-s}
\right.\right. \\
&\left.\left.
\hskip1truecm
\textstyle
+ \Big(2\sin\Big[\frac{1}{2}\arcsin\frac{1}{\sqrt{3}\sin\frac{2\pi}{5}}+\frac{1}{2}\arccot\frac{1}{2}\Big]\Big)^{-s}\right]\right.\\ 
&\qquad -\left.\textstyle\frac{2}{143}\left[
  5\Big(\frac{1}{\sin\frac{2\pi}{5}}\Big)^{-s}
+ 5\Big(\frac{\sqrt{4\sin^2\frac{2\pi}{5}-1}}{\sin\frac{2\pi}{5}}\Big)^{-s} 
+ 2^{-s}\right]\right).
\end{split}
\end{equation}
\end{proposition}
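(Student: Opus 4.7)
The plan is to apply the generic upper bound \eqref{eq:ddvs.UP.inequ} with $N=12$, choosing two comparison configurations that perturb the icosahedron $\omega_{12}^s$. Take $\omega_{11}=\omega_{12}^s\setdiff\{\qV_1^s\}$ (any one vertex removed) and $\omega_{13}=\omega_{12}^s\cup\{\pV\}$ (any point $\pV\in\Sset^2$ adjoined). Both defining inequalities $v_s(11)\leq\langle V_s\rangle(\omega_{11})$ and $v_s(13)\leq\langle V_s\rangle(\omega_{13})$ are strict, since neither configuration is a critical point of its $N$-point pair-energy: removing one vertex destroys the force balance at the remaining eleven, and adjoining any new point $\pV$ to $\omega_{12}^s$ places it at a generically non-critical location of the potential generated by the icosahedron. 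Summing the two strict estimates yields \eqref{eq:DDvs.12upperBOUND}.

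For the explicit form \eqref{eq:DDvs.12upperBOUND.expl}, I first exploit vertex-transitivity of the icosahedron to simplify $\langle V_s\rangle(\omega_{11})$. By \eqref{eq:Vs.of.omegaN.as.ave.of.pt.energies} each vertex carries a common point-energy, namely $v_s(12)$, and the master identity \eqref{eq:master.identity} then shows $\langle V_s\rangle(\omega_{12}^s\setdiff\{\qV_1^s\})=v_s(12)$, collapsing the bound to $\ddot{v}_s(12)<-v_s(12)+\langle V_s\rangle(\omega_{13})$. Next, pick $\pV$ as the radial projection to $\Sset^2$ of a face-barycenter; its stabilizer in the icosahedron's point group is $C_3$, which splits the twelve vertices into four orbits of three, producing four distinct new distances $\alpha_1,\alpha_2,\alpha_3,\alpha_4$ each with multiplicity three. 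Counting the seventy-eight pairs in $\omega_{13}$ (thirty icosahedron edges at $d_1=1/\sin(2\pi/5)$, thirty face-diagonals at $d_2=\sqrt{4\sin^2(2\pi/5)-1}/\sin(2\pi/5)$, six antipodal pairs at distance $2$, and the twelve new pairs), substituting into $\langle V_s\rangle(\omega_{13})=\tfrac{1}{78}\sum V_s$, and subtracting $v_s(12)=\tfrac{1}{66}\sum_{\mathrm{icos}}V_s$, the rational combinations $-\tfrac{5}{11}+\tfrac{5}{13}=-\tfrac{10}{143}$ (attached to $d_1^{-s}$ and $d_2^{-s}$), $-\tfrac{1}{11}+\tfrac{1}{13}=-\tfrac{2}{143}$ (attached to $2^{-s}$), and $\tfrac{3}{78}=\tfrac{1}{26}$ (attached to each $\alpha_i^{-s}$) appear; the constant $-1/s$ contributions cancel, reproducing the coefficient pattern in \eqref{eq:DDvs.12upperBOUND.expl}.

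The main obstacle is the closed-form evaluation of the four new distances $\alpha_i$. Using the standard coordinatization in which the icosahedron vertices are cyclic permutations of $(0,\pm 1,\pm\phi)/\sqrt{\phi+2}$ with $\phi=(1+\sqrt 5)/2$, and selecting the face $\{(1,\phi,0),(-1,\phi,0),(0,1,\phi)\}/\sqrt{\phi+2}$, the normalized barycenter becomes $\pV=(0,2\phi+1,\phi)/\sqrt{3(3\phi+2)}$. Direct computation of $\pV\cdot\qV_k$ over the twelve vertices produces exactly four distinct values with multiplicity three: $\pm(3\phi+2)/\sqrt{3(3\phi+2)(\phi+2)}$, corresponding to the chosen face and its antipodal face, and $\pm\phi/\sqrt{3(3\phi+2)(\phi+2)}$, corresponding to the two belts of six vertices separating the two faces. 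Invoking the arithmetic identity $8\sin^2(2\pi/5)=5+\sqrt 5$ (equivalently $(5+\sqrt 5)/(8\sin^2(2\pi/5))=1$) together with $\alpha_i=\sqrt{2-2\,\pV\cdot\qV_k}$ recasts the first two distances in the law-of-cosines form $\sqrt{1\mp(3+\sqrt 5)/(2\sqrt 3\sin(2\pi/5))+(5+\sqrt 5)/(8\sin^2(2\pi/5))}$. For the remaining two, I would reparametrize using polar coordinates with $\pV$ as the ``north pole,'' so that the six belt vertices lie on a common latitude whose cotangent works out to $\tfrac12$; applying the half-angle identity $\sqrt{2-2\cos\theta}=2\sin(\theta/2)$ then yields precisely the two trigonometric forms $2\sin[\tfrac12\arccos(1/(\sqrt 3\sin(2\pi/5)))+\tfrac12\arctan\tfrac12]$ and $2\sin[\tfrac12\arcsin(1/(\sqrt 3\sin(2\pi/5)))+\tfrac12\arccot\tfrac12]$ displayed in \eqref{eq:DDvs.12upperBOUND.expl}. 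All remaining algebra is routine bookkeeping.
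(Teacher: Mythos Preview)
Your argument is correct and follows exactly the route the paper takes: the bound \eqref{eq:DDvs.12upperBOUND} is an instance of the ``obvious inequality'' \eqref{eq:ddvs.UP.inequ} with the two trial configurations $\omega_{12}^s\setdiff\{\qV_1^s\}$ and $\omega_{12}^s\cup\{\pV\}$, and the explicit form \eqref{eq:DDvs.12upperBOUND.expl} is then pure distance-counting on the icosahedron. Your use of vertex-transitivity via \eqref{eq:master.identity} to collapse $\langle V_s\rangle(\omega_{12}^s\setdiff\{\qV_1^s\})=v_s(12)$ is a nice simplification that the paper leaves implicit, and your bookkeeping of the rational coefficients $1/26$ and $2/143$ and the cancellation of the $-1/s$ terms is accurate.

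One small wording issue: your strictness argument for the $(N{+}1)$-side says the new point $\pV$ lands ``at a generically non-critical location'' of the icosahedron's potential, but for the specific choice of a face-center $\pV$ this is actually a critical point of $V_s\star\uli\Delta^{(1)}_{\omega_{12}^s}$ by the $C_3$ symmetry. Strictness still holds, though, for the reason the paper indicates after the proposition: neither trial configuration is even a mechanical equilibrium of its respective $N$-point energy (the eleven remaining vertices, resp.\ the twelve original vertices in the presence of $\pV$, feel a nonzero net force), so neither can be a minimizer.
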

	Note that the two trial configurations used to obtain \Ref{eq:DDvs.12upperBOUND.expl} are \emph{not} local energy 
minimizers (not even mechanical equilibrium configurations).
	Unfortunately, we pay a high price for having chosen these configurations which allowed us to compute an upper estimate explicitly:
r.h.s.\Ref{eq:DDvs.12upperBOUND.expl}$\,>0$ for all $s$; its minimum $\approx 0.014$ at $s\approx -1.8$.
	Presumably good upper bounds can only be obtained with the aid of a computer, by optimizing the parameters in a
well-chosen multi-parameter family of configurations.

	\subsection{ Upper and lower bounds on $\ddot{v}_s^{}(5)$ for $-2<s<\infty$}\label{sec:specific.DDvs.5}

         We begin with the rigorous lower bound.
\begin{proposition}\label{prop:rig.LOW.bound.on.ddvs.FIVE}
         For all $s\in(-2,\infty)$, we have
\begin{align}
	\ddot{v}_s^{}(5)
&\geq\label{eq:DDvs.5}
	\textstyle\frac{1}{s}\left(\big(\frac{3}{8}\big)^{\frac{s}{2}}	
				- \frac{2}{5} \big(\frac{1}{2}\big)^{\frac{s}{2}}	
				- \frac{3}{5} \big(\frac{1}{3}\big)^{\frac{s}{2}}\right).
\end{align}
\end{proposition}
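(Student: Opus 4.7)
The plan is to combine the exact formulas for $v_s(4)$ and $v_s(6)$ with a simple upper bound on $v_s(5)$ obtained by evaluating the average pair-energy on a fixed trial configuration — namely the regular triangular bi-pyramid. The point is that $v_s(4)$ and $v_s(6)$ are known in closed form for all $s\in(-2,\infty)$ (since the tetrahedron and octahedron are universally optimal, as recorded in \eqref{eq:vs.2.3.4} and \eqref{eq:vs.6}), whereas $v_s(5)$ is known in closed form only for $s\in(-2,15.04807\ldots]$; in that range, the triangular bi-pyramid is the optimizer, so the proposed upper bound on $v_s(5)$ is sharp, while for $s>15.04807\ldots$ it is a genuine strict inequality. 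This is exactly the asymmetry needed to produce a \emph{lower} bound on $\ddot v_s(5)$ that is valid on all of $(-2,\infty)$.

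Concretely, I would start from $\ddot v_s(5)=v_s(4)-2v_s(5)+v_s(6)$. By the definition \eqref{MINaveRIESZpairENERGY} as an infimum, the triangular bi-pyramid $\omega_{\rm TBP}$ provides the bound $v_s(5)\le\langle V_s\rangle(\omega_{\rm TBP})$ as a plain inequality between real numbers (valid irrespective of the sign of $s$, since we are only comparing numerical values). Multiplying by $-2$ flips the inequality and gives $-2v_s(5)\ge -2\langle V_s\rangle(\omega_{\rm TBP})$. Combined with the exact values of $v_s(4)$ and $v_s(6)$, this yields
\begin{equation*}
\ddot v_s(5)\;\ge\; v_s(4)\;-\;2\langle V_s\rangle(\omega_{\rm TBP})\;+\;v_s(6).
\end{equation*}

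Next, I would compute $\langle V_s\rangle(\omega_{\rm TBP})$ explicitly by enumerating the three pair-distance classes: one apex-apex pair at distance $2$, six apex-equator pairs at distance $\sqrt{2}$, and three equator-equator pairs at distance $\sqrt{3}$; averaging over the $10$ pairs reproduces the expression $\tfrac{1}{s}\bigl(\tfrac{1}{10}(\tfrac12)^s+\tfrac{3}{10}(\tfrac13)^{s/2}+\tfrac{3}{5}(\tfrac12)^{s/2}-1\bigr)$ appearing in Section~\ref{sec:specific.v.s.N}. Substituting this together with \eqref{eq:vs.2.3.4} for $N=4$ and \eqref{eq:vs.6} for $N=6$ into the right-hand side, the coefficients of $(\tfrac12)^s$ cancel ($-\tfrac{1}{5}+\tfrac{1}{5}=0$) and so do the constants ($-1+2-1=0$); the surviving terms are precisely those on the right-hand side of \eqref{eq:DDvs.5}.

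There is no real obstacle — the argument is essentially an algebraic verification once the trial configuration is chosen — but the one subtle point worth flagging is the role of the sign of $1/s$. One might be tempted to worry whether, for $s<0$, the inequality $v_s(5)\le\langle V_s\rangle(\omega_{\rm TBP})$ has to be reinterpreted; it does not, because the inequality is an inequality of real numbers arising from taking an infimum, and the factor $1/s$ is already absorbed into the numerical values being compared. For $s\in(-2,15.04807\ldots]$ the bound in \eqref{eq:DDvs.5} is attained with equality, so the proposition becomes informative only in the regime $s\ge 15.04807\ldots$ — where it is in any case consistent with the asymptotic behavior $\tfrac{1}{s}(3/8)^{s/2}$ driving the dominant term as $s\to\infty$, analogous to what was observed for $\ddot v_s(3)$ in \eqref{eq:DDvs.3.asymp}.
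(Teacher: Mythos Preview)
Your proposal is correct and follows essentially the same approach as the paper: bound $v_s(5)$ from above by the average pair-energy of the triangular bi-pyramid (a trial configuration), combine this with the exact closed-form values of $v_s(4)$ and $v_s(6)$ coming from the universal optimality of the tetrahedron and octahedron, and simplify. You supply more computational detail than the paper does (the explicit enumeration of pair distances and the cancellation check), which is fine; the only quibble is your closing remark that the proposition ``becomes informative only for $s\ge 15.04807\ldots$'' --- in fact the paper exploits it precisely on $(-2,15.04807\ldots]$, where equality holds, to deduce $\ddot v_s(5)<0$ via Proposition~\ref{prop:rig.UP.bound.on.ddvs.FIVE}.
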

\begin{proof}
	Since both $\omega_4^s$ and $\omega_{6}^s$ are rigorously known to be universal optimizers for all $s\geq -2$, 
and therefore do not depend on $s$ when $s\in(-2,\infty)$, whereas $\omega_{5}^s$ does, it follows from the definition
(\ref{MINaveRIESZpairENERGY}) that 
\begin{align}
	\ddot{v}_s^{}(5)
&\geq\label{eq:DDvs.5.INEQU}
        {v}_s^{}(4) -2 	\langle V_s\rangle (\omega_5^{\mathrm{trial}})  +{v}_s^{}(6),
\end{align}
where $\omega_5^{\mathrm{trial}}$ is any trial $5$-point configuration. 
        Picking the regular triangular bi-pyramid, and using (\ref{eq:vs.2.3.4}) and (\ref{eq:vs.6}) yields (\ref{eq:DDvs.5}).
\end{proof}

        We next vindicate Conjecture \ref{prop:vsN5isCONCAVE}.
        Thus we work under the hypothesis that the putative $5$-point minimizers listed in our table 
in Subsection~\ref{sec:configs.s.N} are actual minimizers.

\begin{proposition}\label{prop:rig.UP.bound.on.ddvs.FIVE}
         For all $s\in(-2,\infty)$, we have
\begin{align}
	\textstyle\frac{1}{s}\left(\big(\frac{3}{8}\big)^{\frac{s}{2}}	
				- \frac{2}{5} \big(\frac{1}{2}\big)^{\frac{s}{2}}	
				- \frac{3}{5} \big(\frac{1}{3}\big)^{\frac{s}{2}}\right)
&<\label{eq:DDvs.5.UPbound}
0.
\end{align}
\end{proposition}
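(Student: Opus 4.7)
The plan is to convert the claim into a one-variable convexity statement. Write $F(s):=(3/8)^{s/2}-\tfrac{2}{5}(1/2)^{s/2}-\tfrac{3}{5}(1/3)^{s/2}$ and note that $F(0)=0$, so $F(s)/s$ has at worst a removable singularity at $s=0$; the assertion is then $F(s)/s<0$ on $(-2,\infty)\setminus\{0\}$, with the continuous extension also negative at $s=0$. Under the change of variables $x:=-s/2$ (so $s\in(-2,\infty)\leftrightarrow x\in(-\infty,1)$), one has
\begin{equation*}
F(s)=e^{x\ln(8/3)}-e^{\psi(x)},\qquad \psi(x):=\ln\!\bigl[\tfrac{2}{5}\,2^x+\tfrac{3}{5}\,3^x\bigr],
\end{equation*}
and setting $D(x):=\psi(x)-x\ln(8/3)$ yields $\sign F(s)=-\sign D(x)$. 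The target therefore becomes: $D(x)<0$ for $x\in(0,1)$ (i.e.\ $s\in(-2,0)$), and $D(x)>0$ for $x<0$ (i.e.\ $s>0$).

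Now $\psi$ is, up to an irrelevant additive constant, the cumulant generating function of the random variable assigning mass $2/5$ to $\ln 2$ and $3/5$ to $\ln 3$, so $\psi''(x)$ is a strictly positive variance and hence $D$ is strictly convex on $\Rset$. Three elementary facts would then suffice: $D(0)=\ln 1=0$; $D(1)=\ln(13/5)-\ln(8/3)=\ln(39/40)<0$; and $D'(0)=\tfrac{2}{5}\ln 2+\tfrac{3}{5}\ln 3-\ln(8/3)<0$, where the last reduces to the rational inequality $3^8<2^{13}$, i.e.\ $6561<8192$.

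Given these three inputs, the conclusion is automatic. For $x\in(0,1)$, strict convexity gives the chord bound $D(x)<(1-x)D(0)+xD(1)=xD(1)<0$. For $x\leq 0$, strict monotonicity of $D'$ together with $D'(0)<0$ forces $D'<0$ on all of $(-\infty,0]$, so $D$ strictly decreases there and $D(x)>D(0)=0$ for every $x<0$. The $s=0$ endpoint is handled by L'H\^opital: $\lim_{s\to 0}F(s)/s=F'(0)=\tfrac{1}{2}D'(0)<0$. I do not anticipate a genuine analytical obstacle; the whole argument reduces to strict log-convexity of a two-exponential mixture combined with the two elementary checks $D(1)<0$ and $D'(0)<0$, and the only bookkeeping hazard is sign-tracking through the substitution $x=-s/2$.
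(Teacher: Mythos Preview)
Your argument is correct. The substitution $x=-s/2$, the identification of $\psi$ as a cumulant generating function (hence $\psi''>0$ strictly, being the variance of a nondegenerate two-point law), and the three checks $D(0)=0$, $D(1)=\ln(39/40)<0$, $D'(0)=\tfrac{1}{5}\ln(3^8/2^{13})<0$ all go through exactly as you say. The chord bound handles $x\in(0,1)$, the monotonicity of $D'$ handles $x<0$, and L'H\^opital gives $F'(0)=\tfrac{1}{2}D'(0)<0$ at $s=0$.

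The paper proceeds differently: it treats the three regimes separately, applying the weighted AM--GM inequality for $s>0$ (reducing to $2^{13}>3^8$), Jensen's inequality for the concave map $t\mapsto t^{|s|/2}$ for $s\in(-2,0)$ (reducing to $8/3>13/5$, i.e.\ $40>39$), and L'H\^opital at $s=0$. Your route and the paper's rest on the identical pair of arithmetic facts, but you package both as consequences of the strict convexity of a single function $D$, whereas the paper invokes two named inequalities in two separate cases. Your formulation is tidier and makes the common mechanism transparent; the paper's version is perhaps more immediately recognizable to a reader who thinks in terms of classical inequalities.
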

\begin{proof}
        We break the proof down into several parts.
        First we show that l.h.s.(\ref{eq:DDvs.5.UPbound})$\,<0$ for all $s>0$.  
        We rewrite
\begin{align}\textstyle
\big(\frac{3}{8}\big)^{\frac{s}{2}}
				- \frac{2}{5} \big(\frac{1}{2}\big)^{\frac{s}{2}}	
				- \frac{3}{5} \big(\frac{1}{3}\big)^{\frac{s}{2}}
=
\big(\frac{3}{8}\big)^{\frac{s}{2}}\left(1
				- \frac{2}{5} \big(\frac{4}{3}\big)^{\frac{s}{2}}	
				- \frac{3}{5} \big(\frac{8}{9}\big)^{\frac{s}{2}}\right),
\end{align}
now use the inequality between arithmetic and geometric means to estimate
\begin{align}\textstyle
\frac{2}{5} \big(\frac{4}{3}\big)^{\frac{s}{2}}	+ \frac{3}{5} \big(\frac{8}{9}\big)^{\frac{s}{2}}
\geq\label{eq:arithmGEOMinequ}
 \big(\frac{4}{3}\big)^{\frac{s}{2}\frac{2}{5}} \big(\frac{8}{9}\big)^{\frac{s}{2}\frac{3}{5}}
=  \big(\frac{2^{13}}{3^8}\big)^{\frac{s}{10}}
>1,
\end{align}
where the last estimate follows from ${2^{13}}/{3^8}=\frac{8192}{6561}>1$ and the strict increase of the 
map $x\mapsto x^{{s}/{10}}$ for $s>0$.
        This proves that l.h.s.(\ref{eq:DDvs.5.UPbound})$\,<0$ when $0< s\leq 15.04807...$.

\begin{remark}
        For $s\geq 2$ the map $x\mapsto x^{{s}/{2}}$ is convex, so we can alternatively use Jensen's inequality to get
\begin{align}\textstyle
       \frac{2}{5} \big(\frac{1}{2}\big)^{\frac{s}{2}} + \frac{3}{5} \big(\frac{1}{3}\big)^{\frac{s}{2}}
\geq\label{eq:JensenUNDincreaseINEQU}
       \big(\frac{2}{5} \frac{1}{2} + \frac{3}{5} \frac{1}{3}\big)^{\frac{s}{2}} = 
       \big(\frac{2}{5}\big)^{\frac{s}{2}} 
>
       \big(\frac{3}{8}\big)^{\frac{s}{2}},
\end{align}
where the second inequality follows from the strict increase of the map $x\mapsto x^{{s}/{2}}$, noting that
$\frac{2}{5}=\frac{16}{40}>\frac{15}{40}=\frac{3}{8}$.
\end{remark}

        Next we show that l.h.s.(\ref{eq:DDvs.5.UPbound})$\,<0$ for all $-2\leq s<0$.  
        For $-2\leq s<0$ the map $x\mapsto x^{{|s|}/{2}}$ is concave, so we use Jensen's inequality to get
\begin{align}\textstyle
\big(\frac{8}{3}\big)^{\frac{|s|}{2}}
				- \frac{2}{5} 2^{\frac{|s|}{2}}	
				- \frac{3}{5} 3^{\frac{|s|}{2}}
\geq\label{eq:JensenUNDincreaseINEQUagain}
\big(\frac{8}{3}\big)^{\frac{|s|}{2}}
				- \big(\frac{2}{5} 2 + \frac{3}{5} 3\big)^{\frac{|s|}{2}}
=
\big(\frac{8}{3}\big)^{\frac{|s|}{2}}
				- \big(\frac{13}{5}\big)^{\frac{|s|}{2}}
>
0,
\end{align}
where the second inequality follows from $\frac{8}{3} = \frac{40}{15}>\frac{39}{15}=\frac{13}{5}$ and 
the strict increase of the map $x\mapsto x^{{|s|}/{2}}$.
       Since $s<0$, this proves that l.h.s.(\ref{eq:DDvs.5.UPbound})$\,<0$ when $-2\leq s< 0$.

       It remains to show that l.h.s.(\ref{eq:DDvs.5.UPbound})$\,<0$ when $s= 0$.
       Indeed, what we proved so far does not rule out that  l.h.s.(\ref{eq:DDvs.5.UPbound})$\,=0$ when $s= 0$.
       However, l'Hospital's rule yields
\begin{align}\textstyle
\lim_{s\to 0}	\textstyle\frac{1}{s}\left(\big(\frac{3}{8}\big)^{\frac{s}{2}}	
				- \frac{2}{5} \big(\frac{1}{2}\big)^{\frac{s}{2}}	
				- \frac{3}{5} \big(\frac{1}{3}\big)^{\frac{s}{2}}\right)
= 
\frac{1}{10}\ln \frac{3^8}{2^{13}} <0,
\end{align}
where the inequality follows from ${3^8}/{2^{13}} =\frac{6561}{8192}<1$, already used earlier.

       This completes the proof that l.h.s.(\ref{eq:DDvs.5.UPbound})$\,<0$ when $s\in(-2,\infty)$.
\end{proof}

   By our hypotheses, according to our table in Subsection~\ref{sec:configs.s.N} the regular triangular bi-pyramid is the optimizer
if $-2< s\leq 15.04807...$, thus $\ddot{v}_s^{}(5)$ = r.h.s.(\ref{eq:DDvs.5}) = l.h.s.(\ref{eq:DDvs.5.UPbound}) in this range of $s$-values.
        Since l.h.s.(\ref{eq:DDvs.5.UPbound})$\,<0$ for $s\in(-2,\infty)$, we obtain
\begin{corollary}
We have that $\ddot{v}_s^{}(5)<0$ when $-2< s\leq 15.04807...$; see Fig.~\ref{bif-ddot-v5vsS}.
\begin{figure}[H]
\centering 
\includegraphics[scale=.7]{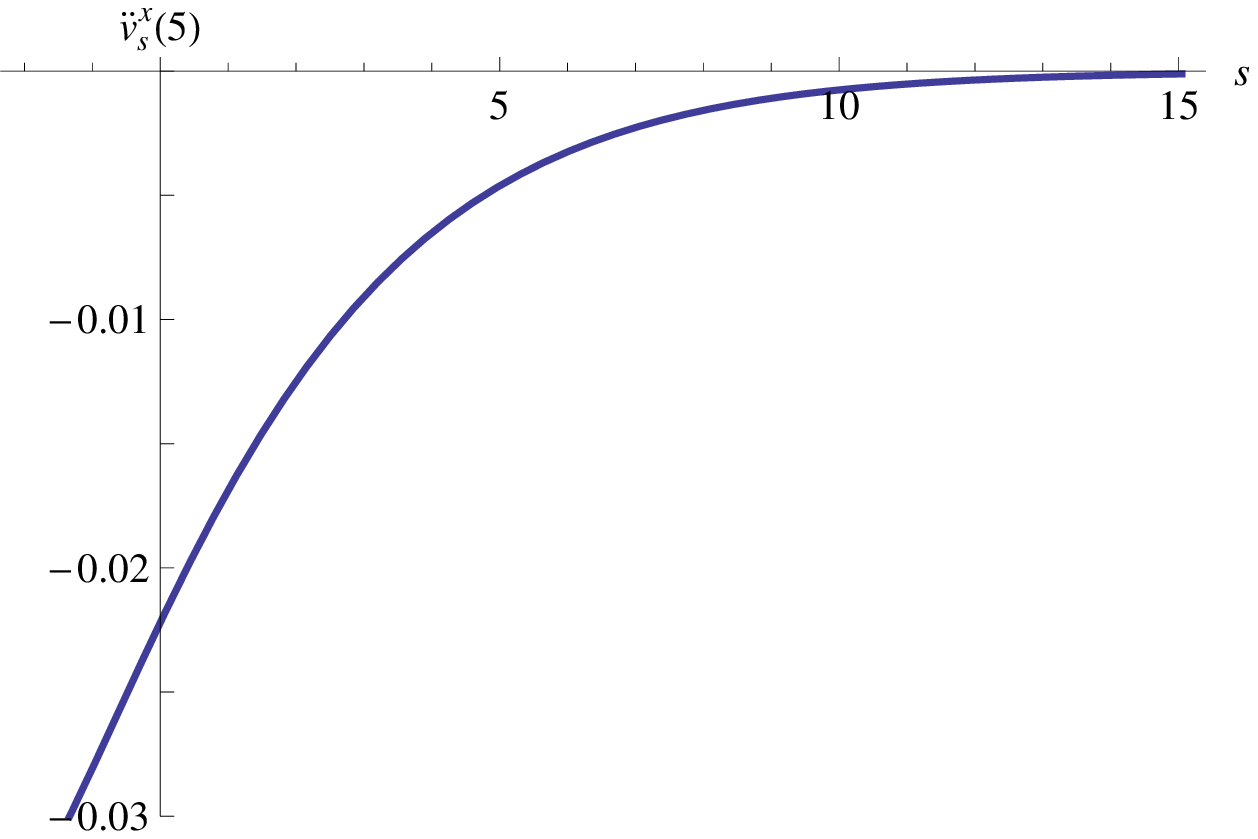}
\caption{\label{bif-ddot-v5vsS} \footnotesize{Behavior of $\ddot{v}_{s}(5)$ as a function of $s$ for $s\in(-2,15.08407...)$.}}
\end{figure}
\end{corollary}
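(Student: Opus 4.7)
The plan is to chain together the two immediately preceding propositions under the hypothesis of Conjecture~\ref{prop:vsN5isCONCAVE}. First I would observe that Proposition~\ref{prop:rig.LOW.bound.on.ddvs.FIVE} was derived via the inequality $v_s^{}(5)\leq\langle V_s\rangle(\omega_5^{\mathrm{trial}})$ with the trial configuration chosen as the regular triangular bi-pyramid, a choice motivated precisely by its (putative) optimality. Since the hypothesis of Conjecture~\ref{prop:vsN5isCONCAVE} asserts that the triangular bi-pyramid \emph{is} the $N=5$ minimizer throughout $s\in(-2,\,15.04807\ldots]$, the estimate $v_s^{}(5)\leq\langle V_s\rangle(\omega_5^{\mathrm{trial}})$ becomes an equality on this range, and the inequality in the proof of Proposition~\ref{prop:rig.LOW.bound.on.ddvs.FIVE} collapses to the identity
\[
\ddot{v}_s^{}(5)\;=\;\textstyle\frac{1}{s}\left(\big(\tfrac{3}{8}\big)^{s/2}-\tfrac{2}{5}\big(\tfrac{1}{2}\big)^{s/2}-\tfrac{3}{5}\big(\tfrac{1}{3}\big)^{s/2}\right)
\qquad\text{for } s\in(-2,\,15.04807\ldots].
\]

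Second, I would invoke Proposition~\ref{prop:rig.UP.bound.on.ddvs.FIVE}, whose conclusion is exactly that the right-hand side of this identity is strictly negative on the \emph{larger} interval $s\in(-2,\infty)$. Combining the two, $\ddot{v}_s^{}(5)<0$ for all $s\in(-2,\,15.04807\ldots]$, which is precisely the corollary. As a sanity check one may verify at a few test values (e.g.\ $s=-1,\,1,\,2$) that the explicit formula is indeed negative, matching the numerical plot in Fig.~\ref{bif-ddot-v5vsS}.

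No substantive new analysis is needed: the real work has been absorbed into Propositions~\ref{prop:rig.LOW.bound.on.ddvs.FIVE} and~\ref{prop:rig.UP.bound.on.ddvs.FIVE}. The only genuine ``obstacle'' is the conditional character of the hypothesis---the optimality of the triangular bi-pyramid on $(-2,\,15.04807\ldots]$ is not rigorously known for the entire interval---so the corollary inherits that caveat. For the same reason the strategy terminates at $s^{\dagger}=15.04807\ldots$: beyond it the optimizer is expected to switch to the adjusted square pyramid, whose height varies with $s$, so the tidy three-exponential identity for $\ddot{v}_s^{}(5)$ is replaced by an expression involving an $s$-dependent optimization parameter, and proving $\ddot{v}_s^{}(5)<0$ on $[s^{\dagger},\infty)$ would become a separate, likely computer-assisted, task rather than a direct corollary of the work already done.
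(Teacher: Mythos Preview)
Your proposal is correct and matches the paper's own argument essentially line for line: under the hypothesis that the triangular bi-pyramid is the $N=5$ optimizer on $(-2,\,15.04807\ldots]$, the lower bound of Proposition~\ref{prop:rig.LOW.bound.on.ddvs.FIVE} becomes an identity, and then Proposition~\ref{prop:rig.UP.bound.on.ddvs.FIVE} supplies the strict negativity. Your additional remarks on the conditional nature of the result and on why the argument halts at $s^{\dagger}$ are accurate and consistent with the paper's surrounding discussion.
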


	Unfortunately, empirically we know that r.h.s.(\ref{eq:DDvs.5}) is not the correct formula for 
$\ddot{v}_s^{}(5)$ when $s\in(15.04807...,\infty)$.
        Therefore our estimate (\ref{eq:arithmGEOMinequ}) does not rule out that $\ddot{v}_s^{}(5)$ 
may become positive for \emph{some} $s\in(15.04807...,\infty)$.
        We now summarize our computer-generated evidence that $\ddot{v}_s^{}(5)<0$ for $s\in(15.04807...,\infty)$.

        By our hypotheses, the optimal $5$-point configuration for $s\in(15.04807...,\infty)$ is the
square pyramid with adjusted height. 
        So for $s\in(15.04807...,\infty)$ the optimal $5$-point average Riesz $s$-pair-energy is given by
${v}_s^{}(5) = \langle V_s\rangle (\omega_{\mathrm{sqpyr}}^s)$, with
\begin{align}
\hskip-10pt
\langle V_s\rangle (\omega_{\mathrm{sqpyr}}^s)
=
	\textstyle\frac{1}{s}\left(\frac{2^{1-{s}/{2}}}{5} \big(1-z\big)^{-\frac{s}{2}} 
	+ \frac{2^{1-{s}/{2}}}{5} \big(1-z^2\big)^{-\frac{s}{2}} 
        + \frac{2^{-s}}{5} \big(1-z^2\big)^{-\frac{s}{2}} -1 \right),
\end{align}
where $z$ is the unique solution in $(-1,0)$ of the equation
\begin{align}\label{eq:sqpyrFIXptEQ}
 \big(1+z\big)^{1+\frac{s}{2}} +\big(2+2^{-{s}/{2}}\big)z =0;
\end{align}
the solution of (\ref{eq:sqpyrFIXptEQ}) is the ``$z$-coordinate'' of the base of the pyramid if its tip is at $z=1$.
For $s>15$, equation (\ref{eq:sqpyrFIXptEQ}) has no explicit solution in terms of elementary functions, 
but l.h.s.(\ref{eq:sqpyrFIXptEQ}) is convex and a Newton scheme very rapidly converges downward from $z_0^{}=0$,
for all $s>15$.
        With the help of the Newton scheme we found that
\begin{align}
        {v}_s^{}(4) -2 \langle V_s\rangle (\omega_{\mathrm{sqpyr}}^s)  +{v}_s^{}(6) 
<\label{eq:DDvs.5.sINTERMED}
0
\end{align}
for $15<s<10^6$.       
    To make this completely rigorous we need to show that asymptotically $\ddot{v}_s^{}(5) <0$, with good error bounds,
which we hope to supply in future work.

       The vindication of Conjecture \ref{prop:vsN5isCONCAVE} is complete.

	\section{An asymptotic point of view} \label{sec:asymptotics}
	A rigorous proof of the concavity of $N\mapsto v_s^{}(N)$ for some regime of $s$-values (possibly
for all $s<s_*^{}$, with $s_*^{}$ conjectured to be in $(-1,0)$) 
most likely has to be based on a combination of different strategies.
	Attempts to rigorously identify all the optimizers and to check for the negativity of $\ddot{v}_s^{}(N)$ by ``explicit computation'' 
may be feasible for sufficiently small $N$-values, but clearly are bound to fail even for moderately large $N$-values. 
	On the other hand, the very-large-$N$ regime is to some extent accessible by asymptotic analysis.
	Ultimately the goal is to determine the complete asymptotic large-$N$ expansion of $N\mapsto v_s^{}(N)$, but at least as
many terms as possible. 
	With enough hard work one may be able to extend the asymptotic control down to sufficiently 
small $N$-values to establish an overlap with some explicitly controlled small-$N$ regime. 
	In the previous section we presented the type of analysis suitable for the small-$N$ regime.
	This section is devoted to asymptotic analysis.

	It has already been mentioned in the introduction that the limit $\displaystyle{\lim_{N\to\infty}}v_s^{}(N)$ is given by
the variational principle \Ref{eq:limit.VP}.
	Explicitly
\begin{equation}\label{eq:limit.VP.repeat}
\lim_{N\to\infty}v_s^{}(N)
=
\inf_{\mu\in \Psp(\Sset^2)} \iint_{\!\!\!\Sset^2\times\Sset^2} \frac{1}{s}\left(\frac{1}{|\pV-\qV|^s}-1\right)\mu(d\pV)\mu(d\qV),
\end{equation}
where $\Psp(\Sset^2)$ is the set of all Borel probability measures supported on $\Sset^2$. 
	By classical potential theory (\cite{Bjorck} for $s < 0$, and \cite{Landkof} for $s > 0$) 
one knows that for $s < 2$, 
\begin{equation} \label{eq:energy.intgral}
\mathcal{I}_s[\mu] \equiv \int_{\Sset^2} \int_{\Sset^2} \frac{1}{\left| \pV - \qV \right|^s} \mu( d \pV) \mu( d \qV)
\end{equation}
is well-defined and finite for any $\mu\in\Psp(\Sset^2)$.
	On $\Psp(\Sset^2)$ the Riesz $s$-energy integral has a degenerate maximum with value $2^{|s|-1}$ when $s\leq -2$, and it has a 
unique maximum when $-2<s<0$, respectively a unique minimum when $0<s<2$, achieved at the normalized surface area measure $\sigma$ 
on $\Sset^2\subset\mathbb{R}^3$, with value 
\begin{equation}
\mathcal{I}_s[\sigma] = \frac{2^{1-s}}{2-s} 
\equiv \label{eq:W.s}
W_s;
\end{equation}
this expression is also valid at $s=0$, with $\mathcal{I}_0[\mu]\equiv 1$ for all $\mu\in\Psp(\Sset^2)$. 
	For $s \geq 2$ the energy integral \Ref{eq:energy.intgral} is $+\infty$ for any $\mu\in\Psp(\Sset^2)$.
	Altogether, therefore,
\begin{equation}\label{eq:limit.NtoINFTY.vs.N}
\lim_{N\to\infty}v_s^{}(N)
=
\begin{cases}
\frac{1}{s} \left( 2^{-s-1}-1 \right) &\quad\mbox{if}\quad s\in (-\infty,-2], \\
\frac{1}{s}\left(W_s -1\right) &\quad\mbox{if}\quad s\in [-2,2), \\
\infty &\quad\mbox{if}\quad s\in [2,\infty) ,
\end{cases}
\end{equation}
with the special case $s=0$ included as  
$\lim\limits_{s\to 0} \frac{1}{s}\left(W_s -1\right) = \frac12+\ln\frac12$.
	We need to know how these limiting ``values'' are approached by $v_s^{}(N)$ as $N\to\infty$.

	For the regime $s< -2$ we already have an exact formula for $v_s^{}(N)$ valid for all \emph{even} $N$, namely \Ref{vsFORsBELOWminusTWO}.
	Yet, as mentioned in Appendix \ref{sec:appdx.A}, the $s<-2$ regime is largely unsettled for odd $N$, and we have nothing to add to this 
here. 
	Moreover, the case $s=-2$ is completely solved, with $v_{-2}^{}(N)$ given by \Ref{vSUBminusTWO}.
	Thus, henceforth we will discuss the regime $s>-2$. 
	Clearly, we need to further make the distinction between the subregimes $s\in (-2,2)$ and $s\geq 2$.
	
	As to the regime $s\in(-2,2)$, to find the large-$N$ asymptotic expansion of $v_s^{}(N)$ one seeks the powers
$\alpha_s^{}\in\Rset$ and $\beta_s^{}\in\Rset$ for which a nontrivial limit
\begin{equation}\label{eq:limit.vs.NEXT.term}
\lim_{N\to\infty}
N^{\alpha_s} (\ln{N})^{\beta_s} 
\Big( v_s^{}(N) - \textstyle\frac{1}{s}\left(W_s -1\right)\Big)
=
\widehat{C}_s
\end{equation}
exists; then subtracts $\widehat{C}_s$ from the expression under the limit in \Ref{eq:limit.vs.NEXT.term}, 
multiplies by $N^{\alpha_s^\prime} (\ln{N})^{\beta_s^\prime}$ with new powers $\alpha_s^\prime$ and $\beta_s^\prime$, and repeats the procedure; etc.
	This strategy has to some extent been carried out in the literature, see \cite{BrHaSa2012} for a recent account.
	We will call upon the results of \cite{BrHaSa2012} in a moment. 
	To pave the way for our discussion we first recall the definition \Ref{MINaveRIESZpairENERGY} of $v_s^{}(N)$, 
involving our standardized Riesz pair-energy $V_s(r)$ given in \Ref{RieszVs}, and note that the additive term
$\frac1s$ in the difference $ v_s^{}(N) - \frac{1}{s}\left(W_s -1\right)$
cancels out. 
          This motivates the definition of the {\em \centered Riesz pair-energy} 
\begin{equation}  \label{eq:U.s}
U_s( r ) \equiv s^{-1} \left( r^{-s} - W_s \right), \qquad s < 2,\quad s \neq 0,
\end{equation}
and its $s\to 0$ limit
\begin{equation}  \label{eq:U.log}
U_0( r) \equiv -\log{r} - W_{\mathrm{log}},
\end{equation}
where
\begin{align*}
W_{\mathrm{log}} 
&\equiv 
\inf \Big\{ \int_{\Sset^2} \int_{\Sset^2} \log \frac{1}{\left| \pV - \qV \right|} \mu( d \pV ) \mu( d \qV ) :
 \mu \in \Psp( \Sset^2) \Big\}
\end{align*}
is the $s$-derivative of
\begin{equation*}
\textstyle W_s = 1 + \left( \frac{1}{2} - \log 2 \right) s + \mathcal{O}(s^2) \qquad {\mathrm{as}}\quad s \to 0,
\end{equation*}
evaluated at $s=0$, with value $W_{\mathrm{log}} (= W_0^\prime) = \frac{1}{2} + \log \frac{1}{2}  = - 0.193147... < 0$.

	We define the \emph{average \centered Riesz pair-energy of a configuration}~by
\begin{equation}\label{eq:mean.U.s.OF.omega}
\langle U_s \rangle( \omega_N ) 
\equiv 
\frac{2}{N \left( N - 1 \right)} \mathop{\sum \sum}_{1 \leq j < k \leq N} U_s( \left| \qV_j - \qV_k \right| ),\quad s\in (-2,2),
\end{equation}
and the \emph{minimal average \centered Riesz pair-energy} by\footnote{The energy functionals $\langle U_s \rangle( \omega_N )$ 
		(without the normalization $1 / (N (N-1))$) were studied 
		by Wagner~\cite{Wa1990, Wa1992} who first derived two-sided bounds for optimal $N$-point configurations in terms of the 
		correct order of decay of $N$ for the complete range $-2 < s < 2$.}
\begin{equation}\label{eq:minimal.mean.U.s.OF.omega}
u_s^{}( N ) \equiv \inf_{\omega_N \subset \Sset^2} \langle U_s \rangle( \omega_N ),\quad s\in (-2,2).
\end{equation}
	We note that for each $s\in(-2,2)$ the map $N\mapsto u_s^{}(N)$ is monotonically increasing and that, by construction, the limit is $0$. 
	Hence $u_s^{}(N) \leq 0$~for~all~$N$.

	As for the regime $s \geq 2$, since $v_s^{}(N)\to\infty$ as $N\to\infty$, it would seem that the asymptotic analysis has to be set up 
in a somewhat different manner. 
	In fact, this is true for $s=2$ (see below).
	However, it is worth noting that r.h.s.\Ref{eq:W.s} is defined on the complex $s$-plane except 
at the single and simple pole at $s = 2$ and thus gives the analytic continuation of $W_s$ to the complex $s$-plane. 
	We will denote this meromorphic function by the same symbol, $W_s$.  
	Understood in this analytically extended way, \Ref{eq:limit.vs.NEXT.term} and the ensuing description,
the definition \Ref{eq:U.s}, as well as \Ref{eq:mean.U.s.OF.omega} and \Ref{eq:minimal.mean.U.s.OF.omega}, 
all make sense also for the regime $s>2$. 
	Note, though, that for $s >2$ the map $N\mapsto u_s^{}(N)$ diverges monotonically to $+\infty$ as $N\to\infty$
(hence $u_s^{}(N) \not\leq 0$~for~most~$N$ when $s>2$), so that the power $\alpha_s^{}\leq 0$ for $s>2$ (and if $\alpha_s^{}=0$,
then $\beta_s^{}<0$), while for $s<2$ the power $\alpha_s^{}\geq 0$ (and if $\alpha_s^{}=0$, then $\beta_s^{}>0$).
	While all this may seem just like a convenient coincidence, we shall see in Subsection~\ref{sec:hyper.sing}
that the analytic continuation to $s>2$ actually seems to have some deeper significance for the aysmptotic problem; 
cf. \cite{BrHaSa2012}. 

	As to the singular case $s=2$,
it obviously makes no sense to subtract the infinite term ``$\frac{1}{2}W_2$'' from $v_2^{}(N)$, or from $V_2(r)$.
	Yet, if one replaces \Ref{eq:limit.vs.NEXT.term} by
\begin{equation}\label{eq:limit.vTWO.NEXT.term}
\lim_{N\to\infty}
N^{\alpha_2^{}} (\ln{N})^{\beta_2^{}} 
\big( v_2^{}(N) +\textstyle\frac12\big)
=
\widehat{C}_2;
\end{equation}
the ensuing description remains valid, with the definitions \Ref{eq:mean.U.s.OF.omega} and \Ref{eq:minimal.mean.U.s.OF.omega}
extended to $s=2$ with the help of the definition $U_2(r) \equiv \frac12 \frac{1}{r^2}$.
	Note that $N\mapsto u_2^{}(N)$ diverges monotonically to $+\infty$ as $N\to\infty$, so the same remarks apply as for the hypersingular
regime $s>2$ regarding the powers $\alpha_s^{}$ and $\beta_s^{}$.

	Clearly, for each $s\in(-2,\infty),\, s\neq 2$, the \centered pair-energy $U_s(r)$ on $\Sset^2$ and the standardized pair-energy $V_s(r)$ 
differ only by a constant, viz. 
\begin{equation}  \label{eq:connecting.formula}
U_s(r) = V_s(r) + s^{-1}({1-W_s}), \quad r> 0,
\end{equation}
with the case $s=0$ understood as limit $s\to 0$, viz. 
$U_0(r) = V_0(r) + W_{\rm{log}}$.
	As a consequence, all their discrete $N$-derivatives coincide; in particular, we have
\begin{lemma}\label{lemma:DDu.is.DDv}
	For all $s>-2$ ($s\neq 2$) we have 
\begin{align}
\dot{u}_s^{\pm}(N) 
&=\label{eq:Du.is.Dv}
\dot{v}_s^{\pm}(N), \\
\ddot{u}_s^{}(N) 
&=\label{eq:DDu.is.DDv}
\ddot{v}_s^{}(N) .
\end{align}
\end{lemma}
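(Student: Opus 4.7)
The plan is to reduce everything to the elementary observation that a finite difference operator annihilates constants. I would first unpack definition \eqref{eq:connecting.formula}: since the difference $U_s(r) - V_s(r) = s^{-1}(1 - W_s)$ (and $U_0(r) - V_0(r) = W_{\mathrm{log}}$ in the logarithmic case) depends only on $s$ and not on $r$, averaging the pair-energy over any fixed configuration $\omega_N$ preserves this constant shift. That is, for each $s\in(-2,\infty)$ with $s\neq 2$,
\begin{equation*}
\langle U_s\rangle(\omega_N) = \langle V_s\rangle(\omega_N) + c_s, \qquad c_s \equiv s^{-1}(1-W_s),
\end{equation*}
with $c_0 \equiv W_{\mathrm{log}}$ understood as the $s\to 0$ limit. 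This is immediate from \eqref{eq:mean.U.s.OF.omega} and \eqref{aveRIESZpairENERGY} since the sum of a constant over the $N(N-1)/2$ pairs, divided by $N(N-1)/2$, equals that constant.

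Next I would pass to the infimum. Because the shift $c_s$ is independent of $\omega_N$, the functionals $\langle U_s\rangle$ and $\langle V_s\rangle$ are minimized by exactly the same $N$-point configurations (when such minimizers exist); in any case, taking the infimum over $\omega_N\subset\Sset^2$ in the displayed identity and using \eqref{eq:minimal.mean.U.s.OF.omega} and \eqref{MINaveRIESZpairENERGY} yields
\begin{equation*}
u_s^{}(N) = v_s^{}(N) + c_s \qquad \text{for all } N\geq 2.
\end{equation*}

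Finally, the identities \eqref{eq:Du.is.Dv} and \eqref{eq:DDu.is.DDv} follow because the constant $c_s$ is annihilated by the forward and backward difference operators, as well as by the second discrete derivative: for the forward derivative, $\dot{u}_s^+(N) = u_s^{}(N+1) - u_s^{}(N) = \bigl(v_s^{}(N+1)+c_s\bigr) - \bigl(v_s^{}(N)+c_s\bigr) = \dot{v}_s^+(N)$, and analogously for the backward derivative; for the second difference, $\ddot{u}_s^{}(N) = u_s^{}(N-1) - 2u_s^{}(N) + u_s^{}(N+1)$, into which substituting $u_s^{}(M)=v_s^{}(M)+c_s$ gives the telescoping $c_s - 2c_s + c_s = 0$, leaving $\ddot{v}_s^{}(N)$. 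There is no real obstacle here; the only minor point to note is that the argument covers the logarithmic case $s=0$ verbatim once one uses $U_0(r)-V_0(r)=W_{\mathrm{log}}$, while $s=2$ is excluded precisely because $c_s$ has a simple pole there and the shift is not finite.
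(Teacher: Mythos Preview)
Your proposal is correct and follows exactly the approach the paper takes: the paper does not give a separate formal proof of the lemma but simply states it as an immediate consequence of the connecting formula \eqref{eq:connecting.formula}, and you have spelled out the obvious details (constant shift survives averaging and the infimum, and is annihilated by finite differences). One minor remark: from the definitions $V_0(r)=-\ln r$ and $U_0(r)=-\ln r - W_{\mathrm{log}}$ one actually gets $U_0(r)-V_0(r)=-W_{\mathrm{log}}$, so the sign of $c_0$ is opposite to what you (and the paper's sentence after \eqref{eq:connecting.formula}) wrote; this of course has no bearing on the argument.
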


	In the remaining subsections we will elaborate on the asymptotic expansion of $N\mapsto u_s^{}(N)$ 
and its implications for the asymptotic expansion of $N\mapsto \ddot{v}_s^{}(N)$.
	Subsection~\ref{sec:pot.asymp} is concerned with the potential regime $s\in (-2,2)$, Subsection~\ref{sec:hyper.sing} with the hypersingular regime $s>2$, and Subsection~\ref{sec:sing} with the singular case $s=2$.
	As mentioned earlier, a discussion of the ``degeneracy regime'' $s\leq -2$ has to be left for some future work. 


\subsection{The potential-theoretical regime $-2 < s < 2$\label{sec:pot.asymp}} 

\subsubsection{The non-logarithmic cases}
	In the non-logarithmic cases one has the following bounds for $u_s^{}(N)$.
\begin{proposition} \label{prop:bounds.for.u.s.N}
	Let $-2 < s < 2$ with $s \neq 0$. 
	Then there exist positive $s$-dependent constants $C > c > 0$ such that for all sufficiently large $N \geq 2$ 
\begin{align*}
\frac{1}{s} \, \frac{W_s - c \, N^{s/2}}{N-1} &\leq u_s^{}( N ) \leq \frac{1}{s} \, \frac{W_s - C \, N^{s/2}}{N-1} & &\text{if $-2 < s < 0$,} \\
\frac{1}{s} \, \frac{W_s - C \, N^{s/2}}{N-1} &\leq u_s^{}( N ) \leq \frac{1}{s} \, \frac{W_s - c \, N^{s/2}}{N-1} & &\text{if $0 < s < 2$.}
\end{align*}
\end{proposition}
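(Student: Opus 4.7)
The plan is to reduce the claim to the classical two-sided second-order asymptotic estimates for the optimal ``standard'' Riesz $s$-energy $\mathcal{E}_s^{*}(N) \equiv \mathop{\mathrm{opt}}\nolimits_{\omega_N \subset \Sset^2} \sum\!\sum_{1 \leq i < j \leq N} |\qV_i - \qV_j|^{-s}$ (infimum for $s \in (0, 2)$, supremum for $s \in (-2, 0)$), and then to convert back to $u_s^{}(N)$ via a short algebraic identity. Combining \eqref{eq:mean.U.s.OF.omega}--\eqref{eq:minimal.mean.U.s.OF.omega} with the defining relation $U_s(r) = V_s(r) + s^{-1}(1 - W_s)$ yields the ``dictionary''
\begin{equation*}
	u_s^{}(N) = \frac{2\,\bigl[\mathcal{E}_s^{*}(N) - W_s \binom{N}{2}\bigr]}{s\, N(N-1)} ,
\end{equation*}
which is the starting point.

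First I would invoke Wagner's theorem~\cite{Wa1990,Wa1992}: for each $s \in (-2, 2) \setminus \{0\}$ there exist $s$-dependent positive constants $c^\prime, C^\prime$ and some $N_0$ such that, for all $N \geq N_0$,
\begin{equation*}
	- C^\prime N^{1 + s/2} \leq \mathcal{E}_s^{*}(N) - \tfrac{W_s}{2} N^2 \leq - c^\prime N^{1 + s/2} .
\end{equation*}
The inequalities point this way because $\sigma$ is the extremizer of $\mathcal{I}_s[\,\cdot\,]$ on $\Psp(\Sset^2)$ (maximizer for $s < 0$, minimizer for $s > 0$), so the discrete optimum always sits below the continuum's quadratic term; Wagner's contribution is the quantitative rate $N^{1+s/2}$ and the existence of matching two-sided constants.

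Next I would write $W_s \binom{N}{2} = \tfrac{W_s}{2} N^2 - \tfrac{W_s}{2} N$ and add $+\tfrac{W_s}{2} N$ throughout the Wagner sandwich to obtain
\begin{equation*}
	\tfrac{W_s}{2} N - C^\prime N^{1 + s/2} \leq \mathcal{E}_s^{*}(N) - W_s \binom{N}{2} \leq \tfrac{W_s}{2} N - c^\prime N^{1 + s/2} ,
\end{equation*}
then divide by $s\, N(N-1)/2$---which preserves orientation when $s > 0$ and reverses it when $s < 0$---and cancel one factor of $N$ against each term to arrive at
\begin{equation*}
	\frac{W_s - 2 C^\prime N^{s/2}}{s\,(N-1)} \leq u_s^{}(N) \leq \frac{W_s - 2 c^\prime N^{s/2}}{s\,(N-1)} \qquad (0 < s < 2) ,
\end{equation*}
with the two endpoints exchanged when $-2 < s < 0$. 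Setting $c := 2 c^\prime$ and $C := 2 C^\prime$ reproduces exactly the statement of the proposition, including the sign-dependent assignment of $c$ versus $C$ to the upper versus lower bound.

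The only non-routine ingredient is Wagner's second-order asymptotic itself, which is classical and cited; everything else is algebraic bookkeeping. If one wished to reprove Wagner's estimate from scratch, the delicate half would be the one-sided inequality of the form $\mathcal{E}_s^{*}(N) \geq \tfrac{W_s}{2} N^2 - C^\prime N^{1 + s/2}$ (for $s > 0$), respectively $\mathcal{E}_s^{*}(N) \leq \tfrac{W_s}{2} N^2 - c^\prime N^{1 + s/2}$ (for $s < 0$), typically obtained by averaging the Riesz potential of a candidate optimizer over $SO(3)$ and invoking the potential-theoretic equilibrium characterization of $\sigma$; the complementary, easier half follows by evaluating $\langle V_s \rangle$ on a good trial configuration such as an equal-area spherical partition with prescribed minimal separation, in the same spirit as the explicit trial constructions of Section~\ref{sec:rigorous.upper.bound}.
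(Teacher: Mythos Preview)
Your proof is correct and follows essentially the same route as the paper's own argument: both reduce the claim to the known two-sided second-order bound $-C\,N^{1+s/2}\le \pzcE_s(\Sset^2,N)-W_sN^2\le -c\,N^{1+s/2}$ (the paper cites \cite{BrHaSa2012}, you cite Wagner~\cite{Wa1990,Wa1992} directly), and then perform the same algebraic conversion to $u_s^{}(N)$, with the sign of $1/s$ accounting for the swap of $c$ and $C$ between the two $s$-regimes. The only cosmetic difference is that the paper works with the doubled energy $\pzcE_s=2\,\mathcal{E}_s^{*}$, so its constants are your $2c',\,2C'$ from the outset.
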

\begin{proof}
	We will call upon the results of \cite{BrHaSa2012} and references cited therein.
	To facilitate the identification of the relevant results in the pertinent literature, we introduce 
the {\em optimal Riesz $s$-energy of $N$ points on $\Sset^2$}, defined for $s\neq 0$~by
\begin{equation*}
\pzcE_s(\Sset^2, N ) 
\equiv 
\sign(s)\inf\Big\{2 \mathop{\sum \sum}_{1\leq j<k\leq N}\frac{\sign(s)}{\left|\qV_{j}-\qV_{k}\right|^s}:\{\qV_1, \dots, \qV_N\}\subset\Sset^2 \Big\},
\end{equation*}
cf. \cite{HardinSaffONE}.
	Then it is known\footnote{It is furthermore well-known that for a sequence $\{\omega_{N}^{s}\}_{N \geq 2}$ of optimal 
		$N$-point configurations with $-2 < s < 2$ ($s \neq 0$) one has the limit relation
\begin{equation} \label{eq:limit.relation}
\lim_{N \to \infty} 
\frac{2}{N \left( N - 1 \right)} \mathop{\sum \sum}_{1 \leq j < k \leq N} \left| \qV_{j,N}^{s} - \qV_{k,N}^{s} \right|^{-s} = W_s.
\end{equation}
		The reciprocal of the quantity under the limit symbol is also known as the \emph{$N$-th generalized diameter of $\Sset^2$}. 
		It converges monotonically to the \emph{generalized transfinite diameter of $\Sset^2$}, introduced by 
	P{\'o}lya and Szeg{\H{o}} in \cite{PoSz1931}, which equals the so-called \emph{$s$-capacity} $1/W_s$ of $\Sset^2$.
		Incidentally, it should also be noted that $- | \pV - \qV |^{-s}$ is a conditionally positive definite function 
		of order $1$ for $-2 < s < 0$; 
	cf. \cite{Sch1938}.}
(cf. \cite{BrHaSa2012}) that there are $s$-dependent constants $C > c > 0$ such that 
\begin{equation*}
-C \, N^{1+s/2} \leq \pzcE_s(\Sset^2, N )  - W_s \, N^2 \leq - c \, N^{1+s/2} 
\end{equation*}
for all sufficiently large $N \geq 2$. 
	Hence
\begin{equation*}
-C \, \frac{N^{1+s/2}}{N (N - 1)} 
\leq 
\frac{\pzcE_s(\Sset^2, N ) }{N (N - 1)} - W_s + \left( 1 - \frac{N^2}{N (N - 1)} \right) W_s  \leq - c \, \frac{N^{1+s/2}}{N (N - 1)}, 
\end{equation*}
or, equivalently,
\begin{equation*}
\frac{W_s-C \, N^{s/2}}{N - 1} \leq \frac{\pzcE_s(\Sset^2, N ) }{N (N - 1)} - W_s \leq \frac{W_s- c \, N^{s/2}}{N - 1}. 
\end{equation*}
	The desired relations follow by multiplying the last relations with $1/s$ and using the definition of the \centered Riesz pair-energy.
\end{proof}
	From the bounds in Proposition~\ref{prop:bounds.for.u.s.N} we get estimates for the discrete second derivative of $u_s^{}( N )$ 
(and by means of Lemma \ref{lemma:DDu.is.DDv} for $v_s^{}( N )$).
\begin{proposition} \label{prop:ddot.u.s.bounds.1}
	Let $s\in(-2,2),\, s \neq 0$. 
	Let $c$ and $C$ be the constants from Proposition~\ref{prop:bounds.for.u.s.N}.
	For $s\in(-2,0)$ the discrete second derivative of $u_s^{}(N)$ satisfies
\begin{equation*}
\ddot{u}_s( N ) 
\leq 
- \frac{2(C-c)}{s} \, \frac{N^{s/2}}{N-1} + \frac{W_s}{s} \, \frac{1}{N(N-1)(N-2)} - \frac{C}{s} \frac{N^{s/2} H_s(N) }{N(N-1)(N-2)},
\end{equation*}
where $H_s(N) \to (2-s)(4-s) > 0$ as $N \to \infty$. 
	The right-hand side becomes a lower bound by interchanging $c$ and $C$. 

	The bounds for $0<s<2$ are obtained by interchanging $c$ and $C$ in the bounds for $\ddot{u}_s(N)$ with $-2<s<0$. 
\end{proposition}

\begin{proof}
	Let $-2 < s < 0$. 
	First, we consider the upper bound. 
	By the definition of $\ddot{u}_s( N )$ and Proposition~\ref{prop:bounds.for.u.s.N}
\begin{align*}
\ddot{u}_s( N ) 
&\leq \frac{1}{s} \, \frac{W_s - C \, (N-1)^{s/2}}{N-2} - \frac{2}{s} \, \frac{W_s - c \, N^{s/2}}{N-1} + \frac{1}{s} \, \frac{W_s - C \, (N+1)^{s/2}}{N} \\
&= \frac{W_s}{s} \left[ \frac{1}{N-2} - \frac{2}{N-1} + \frac{1}{N} \right] \\
&\quad - \frac{1}{s} \left\{ \frac{C (N-1)^{s/2}}{N-2} - \frac{2c N^{s/2}}{N-1} + \frac{C (N+1)^{s/2}}{N} \right\} \\
&= - \frac{2(C-c)}{s} \, \frac{N^{s/2}}{N-1} + \frac{W_s}{s} \, \frac{1}{N(N-1)(N-2)}\\
&\quad - \frac{C}{s} \left\{ \frac{(N-1)^{s/2}}{N-2} - \frac{2 N^{s/2}}{N-1} + \frac{ (N+1)^{s/2}}{N} \right\}.
\end{align*}
	Since the function (using the integral representation of the gamma function)
\begin{equation*}
f( x ) \equiv \frac{x^{s/2}}{x-1} 
= 
\sum_{n=0}^\infty \frac{1}{x^{n+1-s/2}} = \int_0^\infty e^{-x t} \sum_{n=0}^\infty \frac{t^{n-s/2}}{\Gamma(n+1-s/2)} \, d t, \quad x > 1,
\end{equation*}
is strictly monotonically decreasing and convex, the last expression in braces is strictly positive. 
	Series expansion (assisted by {\sc{mathematica}}) reveals
\begin{equation*}
\frac{(N-1)^{s/2}}{N-2} - \frac{2 N^{s/2}}{N-1} + \frac{ (N+1)^{s/2}}{N} 
= \frac{N^{s/2} \left\{ (2-s)(4-s) - \frac{s(s-4)}{2} N^{-1} + \cdots \right\}}{N(N-1)(N-2)}.
\end{equation*}
Thus
\begin{equation*}
\ddot{u}_s( N ) 
\leq - \frac{2(C-c)}{s} \, \frac{N^{s/2}}{N-1} + \frac{W_s}{s} \, \frac{1}{N(N-1)(N-2)} - \frac{C}{s} \frac{N^{s/2} H_s(N) }{N(N-1)(N-2)},
\end{equation*}
where $H_s(N) \to (2-s)(4-s) > 0$ as $N \to \infty$. 

	For a lower bound one has to interchange $C$ and $c$. 

	For $s\in (0,2)$, the above computations hold with $c$ and $C$ interchanged.~\end{proof}

	Proposition~\ref{prop:ddot.u.s.bounds.1}, although much weaker than Proposition~\ref{prop:dd.v.s.bounds.1}, 
clearly shows that one needs more information about the asymptotic behavior of $u_s^{}(N)$ for large $N$. 

	The investigation of the asymptotic behavior of $\pzcE_s(\Sset^2, N ) $ for large $N$ yields the following fundamental conjecture. 
	(We refer the interested reader to \cite{BrHaSa2012} and papers cited therein.)

\begin{conjecture} \label{conj:fundamental.conjecture}
	Let $-2 < s < 2$ and $s \neq 0$. 
	Then there exists a constant $C_s$ and a function $\Omega_s( N )$ such that
\begin{equation*}
u_s^{}( N ) 
= \frac{1}{s} \frac{C_s}{(4\pi)^{s/2}} \, N^{s/2-1} + \frac{W_s}{s} N^{-1} + \Omega_s( N ) \quad \text{for all $N \geq 2$}
\end{equation*}
and $N^{1-s/2} \Omega_s( N ) \to 0$ as $N \to \infty$.
\end{conjecture}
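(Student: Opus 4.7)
The plan is to split the problem into a matching upper bound and lower bound on $u_s^{}(N) - \tfrac{W_s}{s} N^{-1}$, show both are of order $N^{s/2 - 1}$, and identify their common leading coefficient with the energy of a hexagonal lattice in the plane.

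For the upper bound, I would construct an explicit near-optimal sequence of $N$-point configurations by combining an area-regular partition of $\Sset^2$ into $N$ caps of equal area $4\pi/N$ (e.g. a Leopardi-type partition, or the recursive zonal partitions of Rakhmanov--Saff--Zhou) with a local perturbation step that rearranges the point in each cap so as to minimize the interaction with its neighbors. The global piece of the double sum reproduces, to leading order, the continuum energy integral $W_s$; more precisely, the trapezoidal-type error in approximating $\iint |\pV-\qV|^{-s}\,d\sigma\,d\sigma$ by its Riemann sum over the partition yields a correction of the correct order. The local piece, on scales $\sim N^{-1/2}$, is essentially a planar Riesz energy, and by scaling is asymptotically $N^{s/2}$ times the Riesz lattice energy of the minimizing planar packing. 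Dividing by $N(N-1)/2$ yields the upper bound $u_s^{}(N) \leq \tfrac{W_s}{s}N^{-1} + \tfrac{1}{s}\tfrac{C_s^{\rm up}}{(4\pi)^{s/2}}N^{s/2-1} + o(N^{s/2-1})$, with $C_s^{\rm up}$ the hexagonal-lattice Epstein zeta value.

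For the lower bound I would follow a two-scale strategy reminiscent of the Sandier--Serfaty method (used for the logarithmic case on $\Rset^2$, and extended to the sphere via stereographic projection in the work cited in the footnote to Remark~\ref{rem:Smale7}). The idea is to blow up a minimizer around a generic point at scale $N^{-1/2}$, extract a subsequential limit point process on $\Rset^2$ of fixed intensity, and show by a $\Gamma$-convergence argument that the renormalized energy of this limit is bounded below by the Euclidean renormalized Riesz $s$-energy. The minimum of this renormalized Euclidean energy, under the Cohn--Kumar-type lattice-minimality conjecture in $\Rset^2$, is attained on the triangular lattice and equals $C_s^{\rm low} = C_s^{\rm up}$, matching the upper bound and identifying $C_s$.

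The hardest step will be this lower bound. Two-scale $\Gamma$-convergence of the pair interaction requires (i) uniform equidistribution estimates for minimizers on $\Sset^2$ at the mesoscopic scale (to justify the blow-up), and (ii) passage to the limit in the singular pair interaction, which is delicate because $|\pV-\qV|^{-s}$ is not integrable at the diagonal for $s \geq 2$ and only marginally controlled for $s$ close to $2$; uniform nearest-neighbor separation estimates $|\qV_i^s - \qV_j^s| \gtrsim N^{-1/2}$ of Dahlberg/Kuijlaars--Saff type will be essential to control the short-range contribution. Finally, one must rigorously identify the minimizer of the renormalized planar energy with the triangular lattice, which is only known for $s$ in a restricted range (Montgomery's theorem and its extensions for $s \in (0,2)$ and, through analytic continuation, partially for $s \in (-2,0)$); outside that range one obtains the expansion with $C_s$ merely characterized as an infimum over planar point processes of unit intensity. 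The remainder $\Omega_s(N)$ would then be defined as $u_s^{}(N)$ minus the first two terms, and the error estimates from both the construction and the $\Gamma$-convergence argument would yield $N^{1-s/2}\Omega_s(N) \to 0$.
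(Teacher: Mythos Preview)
The statement is labeled a \emph{Conjecture}, and the paper does not prove it. What the paper supplies is a section titled ``Motivation of Conjecture~\ref{conj:fundamental.conjecture}'': it takes as input the already-known (and still open) conjecture
\[
\pzcE_s(\Sset^2, N) = W_s\, N^2 + \frac{C_s}{(4\pi)^{s/2}}\, N^{1+s/2} + \mathcal{R}_s(N),\qquad \mathcal{R}_s(N)/N^{1+s/2}\to 0,
\]
from \cite{BrHaSa2012}, divides by $N(N-1)$, subtracts $W_s$, multiplies by $1/s$, and rearranges. The entire content is three lines of algebra translating one conjectural expansion into an equivalent one for $u_s^{}(N)$; the function $\Omega_s(N)$ is simply \emph{defined} to be the leftover and its decay is read off from the assumed decay of $\mathcal{R}_s(N)$.

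Your proposal, by contrast, is a sketch of how one might attempt to \emph{prove} the underlying open conjecture. That is a different and much harder task, and your outline inherits the genuine obstacles that keep the conjecture open. In particular: the lower bound you describe ultimately rests on identifying the minimizer of the planar renormalized Riesz $s$-energy with the triangular lattice, which you correctly flag as known only in a restricted range; outside that range you retreat to characterizing $C_s$ as an infimum over point processes --- but even the \emph{existence} of the next-order coefficient as a limit (let alone its value) is not established for all $s\in(-2,2)$, so this does not close the gap. The upper-bound construction you sketch is closer to what is actually known rigorously (it is essentially how the bounds in Proposition~\ref{prop:bounds.for.u.s.N} are obtained), but matching constants $C_s^{\rm up}=C_s^{\rm low}$ is precisely the open part. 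So the proposal is not wrong as a research program, but it is not what the paper does, and it would not yield a proof without resolving open problems.
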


It should be noted that the $N^{-1}$-term may be dominated by the function $\Omega_s(N)$ for $0 < s < 2$. 
Indeed, at presence it is unclear how fast $N^{1-s/2} \Omega_s( N )$ tends to zero as $N \to \infty$. 
One suggestion is that $N^{(s-1)/2-1}$ (or even $N^{s/2-2}$) is the correct order of $N$ for the next term. 
The numerical evidence is inconclusive in this regard. 
Worse, the properly normalized $\Omega_s( N )$ may be oscillating with bounded non-zero amplitude as $N$ becomes large. 

\begin{proof}[Motivation of Conjecture~\ref{conj:fundamental.conjecture}] 
	For $-2 < s < 2$ and $s \neq 0$, the following conjecture for the large-$N$ behavior of $\pzcE_s(\Sset^2, N ) $ is known (cf. 
\cite{BrHaSa2012} for a most recent account)
\begin{equation*}
\pzcE_s(\Sset^2, N )  = W_s \, N^2 + \frac{C_s}{(4\pi)^{s/2}} \, N^{1+s/2} + \mathcal{R}_s( N ),
\end{equation*}
where $\mathcal{R}_s( N ) \big/ N^{1+s/2} \to 0$ as $N \to \infty$. 
	Hence,
\begin{align*}
u_s^{}( N ) 
&= \frac{1}{s} \left\{ \frac{\pzcE_s(\Sset^2, N ) }{N (N - 1)} - W_s \right\} \\
&= \frac{1}{s} \left\{ \left( \frac{N}{N-1} - 1 \right) W_s + \frac{C_s}{(4\pi)^{s/2}} \frac{N^{1+s/2}}{N (N - 1)} + \frac{\mathcal{R}_s( N )}{N (N - 1)} \right\} \\
&= \frac{1}{s} \left\{ \frac{C_s}{(4\pi)^{s/2}} \, N^{s/2-1} + W_s \, N^{-1} + \frac{W_s + \frac{C_s}{(4\pi)^{s/2}} \, N^{s/2} + \mathcal{R}_s( N )}{N (N - 1)} \right\} \\
&= \frac{1}{s} \frac{C_s}{(4\pi)^{s/2}} \, N^{s/2-1} + \frac{W_s}{s} \, N^{-1} + \Omega_s( N ),
\end{align*}
where
\begin{equation*}
\Omega_s( N ) = \frac{1}{N (N - 1)} \left\{ \frac{W_s}{s} + \frac{1}{s} \frac{C_s}{(4\pi)^{s/2}} \, N^{s/2} + \frac{1}{s} \frac{\mathcal{R}_s( N )}{N^{1+s/2}} \, N^{1+s/2} \right\}.
\end{equation*}
	By the assumption on $\mathcal{R}_s( N )$ it follows that $N^{1-s/2} \Omega_s( N ) \to 0$ as $N \to \infty$.
\end{proof}

	Conjecture~\ref{conj:fundamental.conjecture} imposes the following large-$N$ behavior for the discrete second derivative of $u_s^{}(N)$.
\begin{corollary}
	Let $-2 < s < 2$ with $s \neq 0$. 
	Under the assumption that Conjecture~\ref{conj:fundamental.conjecture} is true, there holds
\begin{equation} \label{eq:asymptotic.1}
\ddot{u}_s( N ) = \frac{(1-s/2)(2-s/2)}{s} \, \frac{C_s}{(4 \pi)^{s/2}} \, N^{s/2-3} + \frac{2}{s} W_s \, N^{-3} + \ddot{\Omega}_s( N ) + F_s( N ) 
\end{equation}
with $N^{4-s/2} F_s( N ) \to 0$ as $N \to \infty$.
\end{corollary}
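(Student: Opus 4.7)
The plan is to prove this corollary as a direct consequence of Conjecture~\ref{conj:fundamental.conjecture} by applying the second discrete derivative operator termwise and doing a careful Taylor expansion of the polynomial contributions. Since the operator $f \mapsto \ddot{f}(N) \equiv f(N-1)-2f(N)+f(N+1)$ is linear, I would write
\begin{equation*}
\ddot{u}_s^{}(N)
= \frac{1}{s}\frac{C_s}{(4\pi)^{s/2}}\,\ddot{\phi}_\alpha(N) + \frac{W_s}{s}\,\ddot{\phi}_{-1}(N) + \ddot{\Omega}_s^{}(N),
\end{equation*}
where $\phi_\beta(N) \equiv N^\beta$ and $\alpha = s/2-1$. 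The $\ddot{\Omega}_s^{}(N)$ term is simply carried along as it stands; no hypothesis on the smoothness or size of $\Omega_s^{}$ is available beyond $N^{1-s/2}\Omega_s^{}(N)\to 0$, so separating it out is forced on us.

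Next, for a general real exponent $\beta$, I would expand
\begin{equation*}
\ddot{\phi}_\beta(N)
= N^\beta\bigl[(1-1/N)^\beta - 2 + (1+1/N)^\beta\bigr]
\end{equation*}
and use the even Taylor series
\begin{equation*}
(1+x)^\beta + (1-x)^\beta - 2 = \beta(\beta-1)x^2 + \frac{\beta(\beta-1)(\beta-2)(\beta-3)}{12}x^4 + \mathcal{O}(x^6),
\end{equation*}
valid uniformly in $|x|\leq 1/3$ (say), to conclude
\begin{equation*}
\ddot{\phi}_\beta(N) = \beta(\beta-1)\,N^{\beta-2} + R_\beta(N), \qquad R_\beta(N) = \mathcal{O}(N^{\beta-4}).
\end{equation*}
Specializing to $\beta=\alpha = s/2-1$ yields the coefficient $\alpha(\alpha-1)=(s/2-1)(s/2-2)=(1-s/2)(2-s/2)$ attached to $N^{s/2-3}$, and specializing to $\beta=-1$ gives coefficient $(-1)(-2)=2$ attached to $N^{-3}$. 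This reproduces exactly the two explicit terms in \eqref{eq:asymptotic.1}.

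The remainder is then
\begin{equation*}
F_s(N) \equiv \frac{1}{s}\frac{C_s}{(4\pi)^{s/2}}\,R_\alpha(N) + \frac{W_s}{s}\,R_{-1}(N) = \mathcal{O}(N^{s/2-5}) + \mathcal{O}(N^{-5}).
\end{equation*}
The verification of the stated decay is the last step: multiplying by $N^{4-s/2}$, the first piece is $\mathcal{O}(N^{-1})$, while the second is $\mathcal{O}(N^{-1-s/2})$, which tends to $0$ since $s>-2$ forces $-1-s/2<0$. I do not foresee any real obstacle beyond keeping the Taylor tails bookkept; the only mild care required is ensuring the implicit constants in $R_\beta(N)=\mathcal{O}(N^{\beta-4})$ are uniform in $N$ large and depend only on $\beta$ (which follows from absolute convergence of the binomial series for $|1/N|$ bounded away from $1$), so that the combined error $F_s(N)$ has a genuinely $s$-dependent, $N$-independent constant and the limit $N^{4-s/2}F_s(N)\to 0$ is justified.
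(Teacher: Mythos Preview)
Your proposal is correct and follows essentially the same approach as the paper: apply the linear operator $\ddot{(\cdot)}$ termwise to the conjectured expansion, expand $(1\pm 1/N)^\beta$ for $\beta=s/2-1$ and $\beta=-1$, extract the leading $N^{-2}$ coefficient, and verify the remainder decays appropriately. The paper writes out the full binomial series with Pochhammer symbols and gives an explicit formula for $F_s(N)$, whereas you package the tails as $\mathcal{O}(N^{\beta-4})$; the content is the same, and your remainder estimate $N^{4-s/2}F_s(N)=\mathcal{O}(N^{-1})+\mathcal{O}(N^{-1-s/2})\to 0$ for $s>-2$ matches the paper's conclusion.
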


\begin{proof}
	By the definition of the discrete second derivative (cf. \eqref{DDOTvs})
\begin{align*}
 \ddot{u}_s( N )
&= \frac{1}{s} \, \frac{C_s}{(4 \pi)^{s/2}} \, 
 N^{s/2-1} \left\{ \left( 1 - \frac{1}{N} \right)^{s/2-1} - 2 - \left( 1 + \frac{1}{N} \right)^{s/2-1} \right\} \\
&\phantom{=}+ \frac{W_s}{s} \, 
 N^{-1} \left\{ \left( 1 - \frac{1}{N} \right)^{-1} - 2 - \left( 1 + \frac{1}{N} \right)^{-1}  \right\} + \ddot{\Omega}_s( N ). 
\end{align*}
	Series expansion gives (here $\Pochhsymb{a}{n}$ is the \emph{Pochhammer symbol} or rising factorial)
\begin{equation*}
\Big( 1 - \frac{1}{N} \Big)^{s/2-1} - 2 - \Big( 1 + \frac{1}{N} \Big)^{s/2-1} 
= 
\sum_{n=1}^\infty \frac{\Pochhsymb{1-s/2}{n}}{n!} \, \frac{1+(-1)^n}{N^n} 
\end{equation*}
and simplification yields
\begin{equation*}
\Big( 1 - \frac{1}{N} \Big)^{-1} - 2 - \Big( 1 + \frac{1}{N} \Big)^{-1} 
= \frac{2}{N^2-1}. 
\end{equation*}
	Hence
\begin{equation*}
 \ddot{u}_s( N ) 
= \frac{\Pochhsymb{1-s/2}{2}}{s} \, 
 \frac{C_s}{(4 \pi)^{s/2}} \, N^{s/2-3} + \frac{2}{s} W_s \, N^{-3} + \ddot{\Omega}_s( N ) + F_s( N ),
\end{equation*}
where 
\begin{equation*}
\begin{split}
 F_s( N ) 
&\equiv \frac{1}{s} \, \frac{C_s}{(4 \pi)^{s/2}} \, N^{s/2-1} \Big[ \Big(
 1 - \frac{1}{N} \Big)^{s/2-1}\!\! - 2 - \Big( 1 + \frac{1}{N} \Big)^{s/2-1}\!\! - \frac{\Pochhsymb{1-s/2}{2}}{N^2} \Big] \\
&\phantom{=}+ \frac{W_s}{s} \, \frac{2}{N^3 (N^2 - 1)}
\end{split}
\end{equation*}
and $N^{4-s/2} F_s( N ) \to 0$ as $N \to \infty$, since $-2<s<2$.
\end{proof}
\begin{remark}
	For $-2 < s < 0$ the dominant term in \eqref{eq:asymptotic.1} is $( 2 W_s / s ) N^{-3}$ or possibly $\ddot{\Omega}_s(N)$. 
	Thus, for sufficiently large $N$ the sign of $\ddot{u}_s(N)$ is negative or interference from higher order terms in the conjectured 
asymptotic expansion of $u_s^{}(N)$ forces a positive sign. 
The numerical evidence for $s=-1$ (no exceptional $N<200$ with non-negative discrete 
second derivative of $v_{-1}^{}(N)$) discussed earlier supports a negative sign of $\ddot{v}_{-1}^{}(N)$,
viz. $\ddot{u}_{-1}^{}(N)$.
	In fact, if $\ddot{u}_s(N) \geq 0$ for infinitely many growing $N_k$, then \eqref{eq:asymptotic.1} would imply that 
\begin{equation*}
\ddot{\Omega}_s( N_k ) = \Omega_s( N_k - 1 ) - 2 \Omega_s( N_k ) + \Omega_s( N_k + 1 ) \geq - \textstyle\frac{2}{s} W_s N_k^{-3} > 0.
\end{equation*}
\end{remark}
\begin{remark}
	For $0 < s < 2$ the dominant term in \eqref{eq:asymptotic.1} is the $C_s$-term or possibly $\ddot{\Omega}_s(N)$. 
	It is not even known that the constant $C_s$ appearing in Conjecture~\ref{conj:fundamental.conjecture} and thus 
in \eqref{eq:asymptotic.1} exists. 
	Results for the hypersingular case discussed below suggest that $C_s$ is related to the Epstein zeta function for 
the hexagonal lattice in the plane. 
	An inspection of the graph of the conjectured value of $C_s$, analytically continued, shows that it is negative in the interval $(0,2)$! 
Asymptotically seen, higher-order terms in the large-$N$ expansion of $u_s^{}(N)$, $0 < s < 2$ cause 
the appearance of ``large'' magic numbers $N$. 
	Indeed, analysis of the putatively minimal average \centered Riesz pair-energy up to $N=200$ for $s=1$ gives a sequence of $N$'s 
for which $\ddot{u}_s^{}(N)$ is not negative. 
\end{remark}
\subsubsection{The logarithmic case}

	In the logarithmic case one has the following bounds for $u_0(N)$.
\begin{proposition} \label{prop:log.bounds}
	There exist positive constants $C > c > 0$ such that for all sufficiently large $N$
\begin{equation*}
\frac{W_{\mathrm{log}} - \frac{1}{2} \log N - C \, N}{N (N - 1)} \leq u_0( N ) \leq \frac{W_{\mathrm{log}} - \frac{1}{2} \log N - c \, N}{N (N - 1)}.
\end{equation*}
\end{proposition}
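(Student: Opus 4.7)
The plan is to reduce the proposition to the rigorously established two-term large-$N$ asymptotics of the optimal logarithmic $N$-point energy $\mathcal{E}_0(N) = \tfrac{N(N-1)}{2}\,v_0(N)$ recalled in Remark~\ref{rem:Smale7}. In line with the expansion \eqref{asympCONJlogS2} and with the analogous bounds of Proposition~\ref{prop:bounds.for.u.s.N} for the non-logarithmic potential regime, I read the middle term in the numerator of the proposition as $\tfrac12 N\log N$. The algebraic pivot is the identity $2a = W_{\mathrm{log}}$: since $W_{\mathrm{log}} = \tfrac12 - \log 2 = \tfrac12\log(e/4)$, the coefficient $a = \tfrac14\log(e/4)$ in \eqref{asympCONJlogS2} equals $W_{\mathrm{log}}/2$, and together with $b=-\tfrac14$ this makes the $N^2$-contributions cancel when one subtracts $\tfrac12 N(N-1)\,W_{\mathrm{log}}$ from $\mathcal{E}_0(N)$.

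Concretely, I would first use $U_0 = V_0 - W_{\mathrm{log}}$ and $\mathcal{E}_0(N) = \tfrac{N(N-1)}{2} v_0(N)$ to write $N(N-1)\,u_0(N) = 2\mathcal{E}_0(N) - N(N-1)\,W_{\mathrm{log}}$. Then I would substitute the rigorous two-term expansion $\mathcal{E}_0(N) = \tfrac{W_{\mathrm{log}}}{2} N^2 - \tfrac14 N\log N + R(N)$, with two-sided bounds $-C_1 N \le R(N) \le -C_2 N$ for constants $C_1>C_2>0$ valid for all sufficiently large $N$ (the content of the rigorous bounds on the order-$N$ coefficient from~\cite{RSZa}). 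The $N^2$-terms cancel, yielding
\begin{equation*}
N(N-1)\,u_0(N) \;=\; -\tfrac12 N\log N \;+\; N\,W_{\mathrm{log}} \;+\; 2R(N),
\end{equation*}
and choosing $c := 2C_2 - W_{\mathrm{log}}$, $C := 2C_1 - W_{\mathrm{log}}$ (both positive, because $-W_{\mathrm{log}} = \log 2 - \tfrac12 > 0$) gives, after absorbing the additive $W_{\mathrm{log}}$ into the numerator and dividing by $N(N-1)$, the asserted two-sided sandwich.

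The main obstacle is not the short chain of identities above but rather the availability of the $\mathcal{O}(N)$ two-sided bounds on $R(N)$. The matching upper estimate is relatively standard, obtained by evaluating the logarithmic energy of a jittered area-regular partition of $\Sset^2$. The corresponding lower bound is substantially deeper: it rests on the potential-theoretic techniques of Rakhmanov--Saff--Zhou~\cite{RSZa}, in particular their quantitative separation estimates for logarithmic-energy minimizers combined with renormalized-energy-type arguments for the short-range behavior. The constants $C_1,C_2$ extracted from that argument are not sharp, so the proposition only asserts the existence of some positive $c<C$; explicit optimal values would require resolving the still-open conjecture on the precise order-$N$ coefficient in \eqref{asympCONJlogS2}.
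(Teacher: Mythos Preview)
Your approach is the paper's own: invoke the known two-sided bounds $-CN \le \pzcE_{\mathrm{log}}(\Sset^2,N) - W_{\mathrm{log}}N^2 + \tfrac12 N\log N \le -cN$ (the paper cites \cite{BrHaSa2012} for this rather than \cite{RSZa}) and divide by $N(N-1)$, using $u_0(N)=\pzcE_{\mathrm{log}}(\Sset^2,N)/[N(N-1)]-W_{\mathrm{log}}$. You are also right that the printed numerator should carry $\tfrac12 N\log N$ rather than $\tfrac12\log N$ (the paper's own derivation confirms this); your only slip is that the explicit choices $c=2C_2-W_{\mathrm{log}}$ and $C=2C_1-W_{\mathrm{log}}$ sit exactly on the boundary and must be perturbed to strict inequalities to absorb the residual additive constant.
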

\begin{proof}
	Let $\pzcE_{\mathrm{log}}(\Sset^2,N)$ be defined by
\begin{equation*}
\pzcE_{\mathrm{log}}(\Sset^2,N)
\equiv 
\inf\Big\{2\textstyle\mathop{\sum \sum}\limits_{1\leq j<k\leq N}\log\frac{1}{\left|\qV_{j}-\qV_{k}\right|}:\{\qV_1,\dots,\qV_N\}\subset\Sset^2\Big\}.
\end{equation*}
	Then it is known\footnote{It is also well-known that, for a sequence $\{\omega_{N}^{\mathrm{log}}\}_{N \geq 2}$ of optimal  
		$N$-point configurations,
\begin{equation*} 
\textstyle\lim\limits_{N \to \infty} \frac{2}{N \left( N - 1 \right)} 
\mathop{\sum \sum}\limits_{1 \leq j < k \leq N}\log\frac{1}{\big| \qV_{j,N}^{\mathrm{log}}-\qV_{k,N}^{\mathrm{log}} \big|} = W_{\mathrm{log}}.
\end{equation*}
		The reciprocal of the exponential function of the quantity under the limit symbol is also known as the 
\emph{$N$-th diameter of $\Sset^2$ in the logarithmic case}. It converges monotonically to the 
\emph{transfinite diameter of $\Sset^2$ (in the logarithmic case)}, introduced in \cite{PoSz1931}, which equals the \emph{logarithmic capacity}  
$\exp(-W_{\mathrm{log}})$ of $\Sset^2$; see \cite{Pritsker} for a recent account.}
that there are constants $C > c > 0$ such that (cf. \cite{BrHaSa2012})
\begin{equation*}
-C \, N \leq \pzcE_{\mathrm{log}}(\Sset^2,N) - W_{\mathrm{log}} \, N^2 + \textstyle\frac{1}{2} N \log N \leq - c \, N 
\end{equation*}
for all sufficiently large $N \geq 2$. 
	Hence
\begin{equation*}
\frac{- \frac{1}{2} \log N - C \, N}{N (N - 1)} 
\leq 
\frac{\pzcE_{\mathrm{log}}(\Sset^2,N)}{N(N-1)} - W_{\mathrm{log}} - \frac{1}{N - 1} W_{\mathrm{log}} 
 \leq 
\frac{- \frac{1}{2} \log N - c \, N}{N (N - 1)}. 
\end{equation*}
	The desired bounds follow. 
\end{proof}

	Proposition~\ref{prop:log.bounds} provides the following weak bounds for the discrete second derivative of $u_0(N)$
(and by means of Lemma \ref{lemma:DDu.is.DDv} for $v_s^{}( N )$).
\begin{proposition}
	Let $c$ and $C$ be the constants from Proposition~\ref{prop:log.bounds}. 
	Then 
\begin{equation*}
\begin{split}
\ddot{u}_0(N) 
&\leq \frac{2(C-c)}{N-1} - \frac{2 c}{N (N - 1) (N - 2)} \\
&\phantom{=}- \frac{3 \log N}{(N + 1) N (N - 1) (N - 2)} + \frac{6 W_{\mathrm{log}} + H_0(N)}{(N + 1) N (N - 1) (N - 2)},
\end{split}
\end{equation*}
where $H_0(N) \to 5/2$ as $N \to \infty$. 

	For the corresponding lower bound one has to interchange $c$ and $C$.
\end{proposition}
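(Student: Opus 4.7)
The plan is to mimic the derivation of Proposition~\ref{prop:ddot.u.s.bounds.1}, substituting the logarithmic two-sided bounds of Proposition~\ref{prop:log.bounds} for the Riesz $s$-bounds of Proposition~\ref{prop:bounds.for.u.s.N}. Starting from the definition $\ddot{u}_0(N) = u_0(N-1) - 2 u_0(N) + u_0(N+1)$, to obtain the upper bound I would apply the upper estimate of Proposition~\ref{prop:log.bounds} (the one with constant $c$) to both $u_0(N-1)$ and $u_0(N+1)$, and the lower estimate (with constant $C$) to $-2 u_0(N)$; the corresponding lower bound on $\ddot{u}_0(N)$ then follows by swapping the roles of $c$ and $C$ throughout.

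The resulting expression splits naturally into three groups. The linear-in-$N$ pieces coming from the $-c(N\pm 1)$ and $+C N$ numerators simplify to $\tfrac{2C}{N-1} - c\bigl[\tfrac{1}{N-2} + \tfrac{1}{N}\bigr]$, and a common-denominator computation using the identity $C\,N(N-2) - c(N-1)^2 = (C-c)N(N-2) - c$ converts this into the first two terms $\tfrac{2(C-c)}{N-1} - \tfrac{2c}{N(N-1)(N-2)}$ of the claimed bound. The pure $W_{\mathrm{log}}$ pieces assemble into $W_{\mathrm{log}}\bigl[\tfrac{1}{(N-1)(N-2)} - \tfrac{2}{N(N-1)} + \tfrac{1}{(N+1)N}\bigr]$; bringing to common denominator $(N+1)N(N-1)(N-2)$ gives the constant numerator $(N+1)N + (N-1)(N-2) - 2(N+1)(N-2) = 6$, hence the contribution $\tfrac{6 W_{\mathrm{log}}}{(N+1)N(N-1)(N-2)}$.

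For the logarithmic pieces I would write $\log(N\pm 1) = \log N + \log(1 \pm 1/N)$. The $\log N$ part picks up the same rational combination as in the $W_{\mathrm{log}}$ step, hence the coefficient $6$; after the overall factor $-\tfrac12$ from Proposition~\ref{prop:log.bounds} this gives exactly the term $-\tfrac{3\log N}{(N+1)N(N-1)(N-2)}$. The residual is $(N+1)N\,\log(1-1/N) + (N-1)(N-2)\,\log(1+1/N)$ divided by the same common denominator. Setting $P=(N+1)N$, $Q=(N-1)(N-2)$, $a=\log(1-1/N)$, $b=\log(1+1/N)$, one has $Pa+Qb = \tfrac12(P+Q)(a+b)+\tfrac12(P-Q)(a-b)$; expanding $a\pm b$ through order $1/N^2$, the symmetric part $(N^2-N+1)(a+b)$ contributes $-1 + O(1/N)$ and the antisymmetric part $(2N-1)(a-b)$ contributes $-4 + O(1/N)$, summing to $-5 + 3/N + O(1/N^2)$. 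After the factor $-\tfrac12$ this gives $H_0(N)/[(N+1)N(N-1)(N-2)]$ with $H_0(N) = 5/2 - 3/(2N) + O(1/N^2) \to 5/2$, which combines with the $W_{\mathrm{log}}$ piece into the announced $(6W_{\mathrm{log}} + H_0(N))$ numerator.

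The main obstacle, such as it is, is this last Taylor-expansion bookkeeping: because the prefactors $(N+1)N$ and $(N-1)(N-2)$ are asymmetric in $N$, one must retain both the $1/N$ and $1/N^2$ terms of $\log(1\pm 1/N)$ in order to verify both that the constant $5/2$ emerges correctly and that the residual $H_0(N)$ is actually uniformly bounded on $N\ge 3$ (and not merely convergent along a subsequence). Once this is done, everything else is a straightforward chain of algebraic simplifications inherited from the proof strategy of Proposition~\ref{prop:ddot.u.s.bounds.1}.
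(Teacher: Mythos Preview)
Your proposal is correct and follows essentially the same route as the paper: apply the two-sided bounds of Proposition~\ref{prop:log.bounds} to $u_0(N\pm1)$ and $u_0(N)$ with the appropriate constants, then split into the $W_{\mathrm{log}}$-part, the $\log$-part, and the $c/C$-part, simplifying each over the common denominator $(N+1)N(N-1)(N-2)$. The only cosmetic difference is that the paper expands the residual $\frac{\log(1-1/N)}{(N-1)(N-2)}+\frac{\log(1+1/N)}{(N+1)N}$ directly as a power series in $1/N$, whereas you organize the same computation via the symmetric/antisymmetric split $Pa+Qb=\tfrac12(P+Q)(a+b)+\tfrac12(P-Q)(a-b)$; both yield the limit $5/2$ for $H_0(N)$.
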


\begin{proof}
	First we consider the upper bound. 
	By the definition of $\ddot{u}_0(N)$ and Proposition~\ref{prop:log.bounds}
\begin{align*}
\ddot{u}_0(N) 
&\leq \frac{W_{\mathrm{log}}-\frac{1}{2}\log(N-1)-c\,(N-1)}{(N - 1)(N - 2)} - 2 \frac{W_{\mathrm{log}} - \frac{1}{2} \log N - C \, N}{N(N-1)}\\
&\phantom{=}+ \frac{W_{\mathrm{log}} - \frac{1}{2} \log( N + 1 ) - c \, ( N + 1)}{(N + 1) N} \\
&= W_{\mathrm{log}} \left( \frac{1}{(N - 1)(N - 2)} - 2 \frac{1}{N (N - 1)} + \frac{1}{(N + 1) N} \right) \\
&\phantom{=}- \frac{1}{2} \left[ \frac{\log( N - 1 )}{(N - 1)(N - 2)} - 2 \frac{\log N}{N (N - 1)} + \frac{\log( N + 1 )}{(N + 1) N} \right] \\
&\phantom{=}- \left\{ \frac{c}{N - 2} - 2 \frac{C}{N - 1} + \frac{c}{N} \right\}.
\end{align*}
	Simplification (assisted by {\sc{mathematica}}) gives for the expression in parenthesis
\begin{equation*}
\frac{6}{(N + 1) N (N - 1) (N - 2)},
\end{equation*}
for the expression in braces
\begin{equation*}
-2 \frac{C-c}{N-1} + c \left( \frac{1}{N-2} - \frac{2}{N-1} + \frac{1}{N} \right) = - 2 \frac{C-c}{N-1} + \frac{2c}{N (N - 1) (N - 2)}
\end{equation*}
and for the square-bracketed expression (using series expansion)
\begin{equation*}
\begin{split}
&\frac{6 \log N}{(N + 1) N (N - 1) (N - 2)} + \frac{\log(1-\frac{1}{N})}{(N - 1)(N - 2)} + \frac{\log(1+\frac{1}{N})}{(N + 1) N} \\
&\phantom{equals}= \frac{6 \log N}{(N + 1) N (N - 1) (N - 2)} + \frac{-\frac{5}{N^2} + \frac{8}{N^3} + \cdots}{(N - 1) (N - 2)}.
\end{split}
\end{equation*}
	Putting everything together, we arrive at
\begin{equation*}
\begin{split}
\ddot{u}_0(N) 
&\leq \frac{2(C-c)}{N-1} - \frac{2 c}{N (N - 1) (N - 2)} \\
&\phantom{=}- \frac{3 \log N}{(N + 1) N (N - 1) (N - 2)} + \frac{6 W_{\mathrm{log}} + \frac{5}{2} - \frac{3}{2N} + \dots}{(N + 1) N (N - 1) (N - 2)}.
\end{split}
\end{equation*}

	For the corresponding lower bound one has to interchange $c$ and $C$.
\end{proof}
	The investigation of the asymptotic behavior of $\pzcE_{\mathrm{log}}(\Sset^2,N)$ provides the following conjecture.
\begin{conjecture} \label{conj:log.conjecture}
	There exists a constant $C_{\mathrm{log}}$ and a function $\Omega_0( N )$ such that
\begin{equation*}
u_0( N ) 
= - \frac{1}{2} \frac{\log N}{N} + \frac{W_{\mathrm{log}} + C_{\mathrm{log}}}{N} 
- \frac{1}{2} \frac{\log N}{N^2} + \frac{W_{\mathrm{log}} + C_{\mathrm{log}}}{N^2} + \Omega_0( N ) \quad \forall N \geq 2,
\end{equation*}
where $N \Omega_0( N ) \to 0$ (or in its stronger form $N^2 \Omega_0( N ) \to c_0 \neq 0$) as $N \to \infty$.

	The constant $C_{\mathrm{log}}$ is given by
\begin{equation*}
C_{\mathrm{log}} = 2 \log 2 + \frac{1}{2} \log \frac{2}{3} + 3 \log \frac{\sqrt{\pi}}{\Gamma(1/3)} = -0.05560530494339251850\ldots < 0.
\end{equation*}
\end{conjecture}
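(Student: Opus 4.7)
The plan is to reduce Conjecture~\ref{conj:log.conjecture} to the (partly rigorous, partly conjectured) large-$N$ expansion of the minimal logarithmic energy $\mathcal{E}_0(N)$ recorded in Remark~\ref{rem:Smale7}, specifically \eqref{asympCONJlogS2}, together with the explicit conjectured value of its order-$N$ coefficient from \cite{BrHaSa2012} cited in the associated footnote.

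First I would use the identity $v_0(N) = 2\mathcal{E}_0(N)/(N(N-1))$ together with the logarithmic analogue of \eqref{eq:connecting.formula}, i.e. $u_0(N) = v_0(N) - W_{\mathrm{log}}$, to translate \eqref{asympCONJlogS2} into a statement about $u_0(N)$. With $a = \tfrac{1}{4}\log(e/4)$ and $b = -\tfrac{1}{4}$ as in \eqref{asympCONJlogS2} one checks that $2a = W_{\mathrm{log}}$, so the leading $a N^2$ term cancels against $W_{\mathrm{log}}$, leaving
\begin{equation*}
u_0(N) \;=\; \frac{W_{\mathrm{log}} - \tfrac{1}{2}\log N + 2c}{N-1} + \frac{2 R_0(N)}{N(N-1)},
\end{equation*}
where $R_0(N) = d\log N + \mathcal{O}(1)$ is the remainder after the $cN$ term.

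Next I would expand $1/(N-1) = 1/N + 1/N^2 + 1/[N^2(N-1)]$ and collect contributions of matching order; this produces exactly the advertised decomposition with $C_{\mathrm{log}} \equiv 2c$ and
\begin{equation*}
\Omega_0(N) \;=\; \frac{W_{\mathrm{log}} - \tfrac{1}{2}\log N + C_{\mathrm{log}}}{N^2(N-1)} + \frac{2 R_0(N)}{N(N-1)}.
\end{equation*}
The weaker tail condition $N\Omega_0(N)\to 0$ is then immediate from $R_0(N) = o(N)$; the stronger form $N^2\Omega_0(N)\to c_0 \neq 0$ would follow if one could show that $R_0(N)$ tends to a nonzero constant, i.e. that the ``$d\log N$'' term in \eqref{asympCONJlogS2} is absent or is in fact a genuine constant. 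Finally, to pin down $C_{\mathrm{log}}$ as the stated combination of $\log 2$, $\log(2/3)$ and $\log(\sqrt{\pi}/\Gamma(1/3))$, I would invoke the identification of $c$ with a minimum of an Epstein-type zeta (or Jacobi theta) function over unit-density planar lattices that emerges from the Sandier--Serfaty renormalized-energy analysis adapted to $\Sset^2$ by stereographic projection \cite{SandierSerfaty,Betermin}; minimality on the hexagonal lattice (Cassels--Rankin--Ennola) together with Chowla--Selberg-type closed-form evaluations of the resulting modular integrals then yield $2c$ in precisely the claimed form, matching the conjecture of \cite{BrHaSa2012}.

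The main obstacle is this last step: although recent work of B\'etermin \cite{Betermin} confirms the \emph{existence} of a true order-$N$ coefficient $c$ in \eqref{asympCONJlogS2}, pinning down its exact value rests on the (still only partially settled) lattice-extremality conjectures for the Epstein zeta function and on a careful transfer of the planar minimization back to $\Sset^2$ through the stereographic weight (the conditions in \cite{SandierSerfaty} barely missing that weight, per the footnote to \eqref{asympCONJlogS2}). The stronger tail claim $N^2\Omega_0(N)\to c_0\neq 0$ faces the additional difficulty that the asymptotic status of the $d\log N$ term is itself unclear (cf. the same footnote), so it cannot be established without first resolving that open question; the weaker form $N\Omega_0(N)\to 0$ is, by contrast, already reachable once the order-$N$ coefficient is known to exist.
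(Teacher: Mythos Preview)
Your approach is essentially the same as the paper's ``Motivation of Conjecture~\ref{conj:log.conjecture}'': both start from the (conjectured) asymptotic expansion of the minimal logarithmic energy, pass to $u_0(N)$ via $u_0(N)=\pzcE_{\mathrm{log}}(\Sset^2,N)/[N(N-1)]-W_{\mathrm{log}}$ (your route through $\mathcal{E}_0(N)$ from Remark~\ref{rem:Smale7} differs only by the factor $\pzcE_{\mathrm{log}}=2\mathcal{E}_0$, whence your $C_{\mathrm{log}}=2c$), and then expand $1/(N-1)=1/N+1/N^2+1/[N^2(N-1)]$ to read off the displayed terms and the remainder $\Omega_0(N)$. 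The paper simply imports the closed-form value of $C_{\mathrm{log}}$ from \cite{BrHaSa2012} and attributes the stronger tail claim to the hypothesis that $\mathcal{R}_{\mathrm{log}}(N)$ tends to a nonzero constant; your additional discussion of how one might justify that value (renormalized energy, hexagonal-lattice extremality) and your caveats about the $d\log N$ term go beyond what the paper attempts but are consistent with it.
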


\begin{proof}[Motivation of Conjecture~\ref{conj:log.conjecture}]
	The following conjecture for the large-$N$ behavior of $\pzcE_{\mathrm{log}}(\Sset^2,N)$ is known (cf. \cite{BrHaSa2012} 
for a most recent account)
\begin{equation*}
\pzcE_{\mathrm{log}}(\Sset^2,N) = W_{\mathrm{log}} \, N^2 - \frac{1}{2} N \log N + C_{\mathrm{log}} \, N + \mathcal{R}_{\mathrm{log}}( N ),
\end{equation*}
where $\mathcal{R}_{\mathrm{log}}( N ) \big/ N \to 0$ as $N \to \infty$. 
	A stronger form states that $\mathcal{R}_{\mathrm{log}}( N )$ converges to a non-zero constant. 
	Hence
\begin{align*}
u_0( N )
&= \frac{\pzcE_{\mathrm{log}}(\Sset^2,N)}{N(N-1)} - W_{\mathrm{log}} \\
&= \underbrace{\left( \frac{N^2}{N (N - 1)} - 1 \right)}_{1/(N-1)} W_{\mathrm{log}} - \frac{1}{2} \frac{\log N}{N-1} + \frac{C_{\mathrm{log}}}{N-1} + \frac{\mathcal{R}_{\mathrm{log}}( N )}{N(N-1)} \\
&= - \frac{1}{2} \frac{\log N}{N} + \frac{W_{\mathrm{log}} + C_{\mathrm{log}}}{N} - \frac{1}{2} \frac{\log N}{N^2} + \frac{W_{\mathrm{log}} + C_{\mathrm{log}}}{N^2} + \Omega_0( N ), 
\end{align*}
where we used that
\begin{equation*}
\frac{1}{N-1} = \frac{1}{N} + \frac{1}{N (N - 1)} = \frac{1}{N} + \frac{1}{N^2} + \frac{1}{N^2 (N - 1)} 
\end{equation*}
and
\begin{equation*}
\Omega_0( N ) 
\equiv 
\frac{1}{N (N - 1)} \left\{ \mathcal{R}_{\mathrm{log}}( N ) - \frac{1}{2} \frac{\log N}{N} + \frac{W_{\mathrm{log}}}{N}  \right\}.
\end{equation*}
	From the assumptions on $\mathcal{R}_{\mathrm{log}}( N )$ it follows that $N \Omega_0(N) \to 0$ 
(or in its stronger form $N^2 \Omega_0(N) \to c_0 \neq 0$) as $N \to \infty$. 
\end{proof}

	Conjecture~\ref{conj:log.conjecture} implies the following large-$N$ behavior for the discrete second derivative of $u_0(N)$.
\begin{corollary}
	Under the assumption that Conjecture~\ref{conj:log.conjecture} is true, there holds
\begin{equation} \label{eq:u.0.N.asymptotic}
\ddot{u}_0( N ) = - \frac{\log N}{N^3} + \frac{3/2 + W_{\mathrm{log}} + C_{\mathrm{log}}}{N^3} + \ddot{\Omega}_0( N ) + F_0( N ),
\end{equation}
with $N^{4-\eps} F_0( N ) \to 0$ as $N \to \infty$ for any $\eps > 0$.
\end{corollary}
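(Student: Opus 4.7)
The plan is to apply the discrete second--derivative operator $\ddot{f}(N) \equiv f(N-1) - 2f(N) + f(N+1)$ term by term to the conjectured asymptotic expansion of $u_0(N)$, then sort contributions by order in $1/N$ (counting $\log N$ as neutral). Linearity of $\ddot{}$ reduces the task to four basic building blocks --- $\log N/N$, $1/N$, $\log N/N^2$, $1/N^2$ --- plus the residual $\ddot{\Omega}_0(N)$. The first two blocks feed the claimed leading terms; the last two are, together with the formal-expansion remainders of the first two, absorbed into $F_0(N)$. The bound $N^{4-\eps} F_0(N)\to 0$ then follows because the worst summand is of order $(\log N)/N^4$.

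\noindent For the first block I would expand
\begin{align*}
\log(N\pm 1) &= \log N \pm \tfrac{1}{N} - \tfrac{1}{2N^2} \pm \tfrac{1}{3N^3} - \cdots, \\
\tfrac{1}{N\pm 1} &= \tfrac{1}{N}\mp \tfrac{1}{N^2} + \tfrac{1}{N^3} \mp \cdots,
\end{align*}
multiply these two series, and add the shifts $N\to N\pm 1$. The odd powers of $1/N$ cancel in $\log(N+1)/(N+1) + \log(N-1)/(N-1) - 2\log N/N$, leaving the leading behavior $(2\log N - 3)/N^3$ plus an $O((\log N)/N^5)$ remainder. Multiplying by $-1/2$ gives $-\log N/N^3 + 3/(2N^3)$, contributing both the explicit $-\log N/N^3$ in the statement and (part of) the $3/2$ constant. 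For the second block I would use the exact identity
$$\ddot{(1/N)}(N) \;=\; \frac{1}{N-1} - \frac{2}{N} + \frac{1}{N+1} \;=\; \frac{2}{N(N^2-1)} \;=\; \frac{2}{N^3} + O\!\left(\frac{1}{N^5}\right),$$
so the coefficient $W_{\mathrm{log}}+C_{\mathrm{log}}$ divided by $N$ produces the remaining $1/N^3$ contribution that completes the coefficient shown in the corollary.

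\noindent For the two $1/N^2$ blocks I would note that direct expansion gives $\ddot{(1/N^2)}(N) = 6/N^4 + O(1/N^6)$, and similarly the double-series computation for $\log N/N^2$ yields $\ddot{(\log N/N^2)}(N)$ of order $(\log N)/N^4$. Each of these is safely $o(1/N^{4-\eps})$ for every $\eps>0$; I would collect them, together with the $O((\log N)/N^5)$ tails of the first two blocks, into a single function $F_0(N)$ and verify the decay estimate $N^{4-\eps}F_0(N)\to 0$ by bounding each summand term by term. The residual $\Omega_0(N)$ from the conjecture contributes its own discrete second derivative $\ddot{\Omega}_0(N)$, which passes through unchanged to the right-hand side of \eqref{eq:u.0.N.asymptotic}.

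\noindent The proof itself involves no serious obstacle once Conjecture~\ref{conj:log.conjecture} is granted --- it is an exercise in careful series manipulation. The main bookkeeping difficulty is to make sure that when one writes each shifted factor $\log(N\pm 1)/(N\pm 1)^k$ as a convergent double series in $1/N$, the cancellations at orders $1/N$ and $1/N^2$ are exact (they are, by the parity structure of $\ddot{}$ applied to smooth $f$: $\ddot f(N) = f''(N) + \tfrac{1}{12}f^{(4)}(N) + \cdots$), so that the first nonvanishing order is $1/N^3$ rather than lower. A secondary technical point is to keep the $\log N$ factors uniform in $\eps$ inside $F_0(N)$, but this is immediate since $\log N/N^{\eps}\to 0$ for every $\eps>0$.
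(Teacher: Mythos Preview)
Your proposal is correct and follows essentially the same approach as the paper: apply the discrete second-derivative operator term by term to the four building blocks $\log N/N$, $1/N$, $\log N/N^2$, $1/N^2$, expand each via the $1/N$-series for $\log(1\pm 1/N)$ and $(1\pm 1/N)^{-\alpha}$, extract the $N^{-3}$ leading terms, and sweep all higher-order tails into $F_0(N)$. The paper's version is somewhat more explicit---it writes out the full infinite sums (using Pochhammer symbols and harmonic numbers $H_\ell$) and displays $F_0(N)$ as an explicit combination of six series---but your leading-term computations $(2\log N - 3)/N^3$ and $2/(N(N^2-1))$ match the paper's exactly, and the remainder estimate is handled the same way.
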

	Note that $3/2 + W_{\mathrm{log}} + C_{\mathrm{log}} > 0$.

\begin{proof}
	By the definition of the discrete second derivative (cf. \eqref{DDOTvs})
\begin{align*}
\ddot{u}_0( N ) 
&= - \frac{1}{2} \left( \frac{\log(N - 1)}{N-1} - 2 \frac{\log N}{N} + \frac{\log(N + 1)}{N+1} \right) \\
&\phantom{=}+ \left( W_{\mathrm{log}} + C_{\mathrm{log}} \right) \left[ \frac{1}{N-1} - 2 \frac{1}{N} + \frac{1}{N+1} \right] \\
&\phantom{=}- \frac{1}{2} \left( \frac{\log(N - 1)}{(N - 1)^2} - 2 \frac{\log N}{N^2} + \frac{\log(N + 1)}{(N + 1)^2} \right) \\
&\phantom{=}+ \left( W_{\mathrm{log}} + C_{\mathrm{log}} \right) \left[ \frac{1}{(N - 1)^2} - 2 \frac{1}{N^2} + \frac{1}{(N + 1)^2} \right]
+ \ddot{\Omega}_0( N ).
\end{align*}
	The series expansions
\begin{equation*}
\Big( 1 - \frac{1}{N} \Big)^{-\alpha} = \sum_{n=0}^{\infty} \frac{\Pochhsymb{\alpha}{n}}{n!} \frac{1}{N^{n}}
\end{equation*}
and
\begin{align*}
\log\Big( 1 - \frac{1}{N} \Big) \Big( 1 - \frac{1}{N} \Big)^{-\alpha} 
& = - \sum_{k=1}^\infty \sum_{n=0}^{\infty} \frac{1}{k} \frac{\Pochhsymb{\alpha}{n}}{n!} \frac{1}{N^{k+n}} 
\\
& = - \sum_{\ell=1}^\infty 
\left\{ \sum_{n=0}^{\ell-1} \frac{1}{\ell-n} \frac{\Pochhsymb{\alpha}{n}}{n!} \right\} \frac{1}{N^\ell}
\end{align*}
yield
\begin{align*}
\frac{1}{N-1} - \frac{2}{N} + \frac{1}{N+1} 
& = \frac{1}{N} \left( \frac{1}{1-\frac{1}{N}} - 2 + \frac{1}{1+\frac{1}{N}} \right) 
 = \frac{1}{N} \sum_{n=2}^\infty \frac{1+(-1)^n}{N^n}
\end{align*}
and thus
\begin{align*}
&\frac{\log(N - 1)}{N-1} - 2 \frac{\log N}{N} + \frac{\log(N + 1)}{N+1} \\
&\phantom{XX}= \log N\! \left[\! \frac{1}{N-1} - 2 \frac{1}{N} + \frac{1}{N+1}\! \right] 
+ \frac{1}{N} \left[\! \frac{\log(1 - \frac{1}{N})}{1-\frac{1}{N}} + \frac{\log(1 + \frac{1}{N})}{1+\frac{1}{N}}\! \right] \\
&\phantom{XX}= \frac{\log N}{N^3} \sum_{n=0}^\infty \frac{1+(-1)^n}{N^n} - \sum_{\ell=2}^\infty H_\ell \frac{1+(-1)^\ell}{N^{\ell+1}},
\end{align*}
where $H_\ell$ denotes the $\ell$-th Harmonic number $H_\ell \equiv \sum_{k=1}^\ell \frac{1}{k}$; furthermore
\begin{align*}
&\frac{1}{(N-1)^2}-2\frac{1}{N^2}+\frac{1}{(N + 1)^2} = \frac{1}{N^2} \left( \frac{1}{(1-\frac{1}{N})^2} - 2 + \frac{1}{(1+\frac{1}{N})^2} \right) \\
&\phantom{equals}= \frac{1}{N^2} \sum_{n=2}^{\infty} \frac{\Pochhsymb{2}{n}}{n!} \frac{1+(-1)^n}{N^{n}} =
 \sum_{n=2}^{\infty} (n + 1) \frac{1+(-1)^n}{N^{n+2}}
\end{align*}
and thus
\begin{align*}
&\frac{\log(N - 1)}{(N - 1)^2} - 2 \frac{\log N}{N^2} + \frac{\log(N + 1)}{(N + 1)^2} 
\\
&\phantom{eq}= \log N \left( \frac{1}{(N - 1)^2} - 2 \frac{1}{N^2} + \frac{1}{(N + 1)^2} \right)
\\
&\phantom{equa} + \frac{1}{N^2} \left( \frac{\log(1 - \frac{1}{N})}{(1-\frac{1}{N})^2} +
 \frac{\log(1 + \frac{1}{N})}{(1+\frac{1}{N})^2} \right) 
\\
&\phantom{eq} = \frac{\log N}{N^4} \sum_{n=0}^{\infty} (n + 3) \frac{1+(-1)^n}{N^{n}} 
- \sum_{\ell=2}^\infty 
\left\{ \sum_{n=0}^{\ell-1} \frac{1}{\ell-n} \frac{\Pochhsymb{2}{n}}{n!} \right\} 
\frac{1+(-1)^\ell}{N^{\ell+2}}.
\end{align*}

	Combining everything (and shifting indices of summations), we arrive at
\begin{align*}
\ddot{u}_0( N ) 
&= - \frac{1}{2} \left( \frac{\log N}{N^3} \sum_{n=0}^\infty \frac{1+(-1)^n}{N^n} -
 \frac{1}{N^3} \sum_{\ell=0}^\infty H_{\ell+2} \frac{1+(-1)^\ell}{N^{\ell}} \right) \\
&\phantom{=}+ \frac{W_{\mathrm{log}} + C_{\mathrm{log}}}{N^3} \sum_{n=0}^\infty \frac{1+(-1)^n}{N^{n}} \\
&\phantom{=}- \frac{\log N}{2N^4} \sum_{n=0}^{\infty} (n + 3) \frac{1+(-1)^n}{N^{n}} \\
&\phantom{=}
+ \frac{1}{2N^4} \sum_{\ell=0}^\infty \left\{ \sum_{n=0}^{\ell+1} \frac{n+1}{\ell+2-n}  \right\} \frac{1+(-1)^\ell}{N^{\ell}} \\
&\phantom{=}+ \frac{W_{\mathrm{log}} + C_{\mathrm{log}}}{N^4} \sum_{n=0}^{\infty} (n + 3) \frac{1+(-1)^n}{N^{n}} \\
&\phantom{=}+ \ddot{\Omega}_0( N ).
\end{align*}
	Rearranging the terms gives
\begin{equation*}
\ddot{u}_0( N ) = - \frac{\log N}{N^3} + \frac{3/2 + W_{\mathrm{log}} + C_{\mathrm{log}}}{N^3} + \ddot{\Omega}_0( N ) + F_0( N ),
\end{equation*}
where
\begin{align*}
F_0( N ) 
&\equiv - \frac{1}{2} \frac{\log N}{N^3} \sum_{n=2}^\infty \frac{1+(-1)^n}{N^n} + \frac{1}{2} \frac{1}{N^3} \sum_{\ell=2}^\infty H_{\ell+2} \frac{1+(-1)^\ell}{N^{\ell}} \\
&\phantom{=}+ \frac{W_{\mathrm{log}} + C_{\mathrm{log}}}{N^3} \sum_{n=2}^\infty \frac{1+(-1)^n}{N^{n}} \\
&\phantom{=}- \frac{\log N}{2N^4} \sum_{n=0}^{\infty} (n + 3) \frac{1+(-1)^n}{N^{n}} \\
&\phantom{=} + \frac{1}{2N^4} \sum_{\ell=0}^\infty \left\{ \sum_{n=0}^{\ell+1} \frac{n+1}{\ell+2-n} 
 \right\} \frac{1+(-1)^\ell}{N^{\ell}} \\
&\phantom{=}+ \frac{W_{\mathrm{log}} + C_{\mathrm{log}}}{N^4} \sum_{n=0}^{\infty} (n + 3) \frac{1+(-1)^n}{N^{n}} 
\end{align*}
with $N^{4-\eps} F_0( N ) \to 0$ as $N \to \infty$ for any $\eps > 0$.
\end{proof}
\begin{remark}
	The dominant term in \eqref{eq:u.0.N.asymptotic} is the negative $- ( \log N ) / N^3$ term or possibly $\ddot{\Omega}_0(N)$. 
	An increasing infinite sequence of magic numbers $(N_k)$ with $\ddot{u}_0(N_k) \geq 0$ would be caused by higher-order terms 
in the conjectured asymptotic expansion of $u_0(N)$ and thus would, for example, exclude the hypothetical expansion 
$\Omega_0( N ) = c_0 + \widetilde{\Omega}_0( N )$ with $\widetilde{\Omega}_0( N ) / \big[ ( \log N ) / N  \big] \to 0$ as $N \to \infty$. 
\end{remark}

\subsection{The hypersingular regime $s > 2$\label{sec:hyper.sing}} 
%
	For $s > 2$ it is proved in Hardin and Saff~\cite{HardinSaffTWO} that
\begin{equation*}
\lim_{N \to \infty} \frac{\pzcE_s(\Sset^2, N ) }{N^{1+s/2}} = \frac{C_{s}}{(4 \pi)^{s/2}}
\end{equation*}
for some constant $C_{s}$ depending on $s$.
	In \cite{KuijlaarsSaff} it is shown that, for $s > 2$,
\begin{equation*}
\limsup_{N \to \infty} \frac{\pzcE_s(\Sset^2, N ) }{N^{1+s/2}} 
\leq \frac{\left( \sqrt{3} / 2 \right)^{s/2} \zetafcn_{\Lambda}(s)}{\left( 4 \pi \right)^{s/2}},
\end{equation*}
and it is conjectured in \cite{KuijlaarsSaff} that for $s > 2$, 
\begin{equation} \label{eq:C.s}
C_{s} = \left( \sqrt{3} / 2 \right)^{s/2} \zetafcn_{\Lambda}(s),
\end{equation}
where $\zetafcn_\Lambda$ is the zeta function associated with the hexagonal lattice.
	The zeta function associated with the hexagonal lattice admits the factorization
\begin{equation*} 
\textstyle
\zetafcn_\Lambda(s) = 6 \zetafcn\big(\frac{s}{2}\big) \DirichletL_{-3}\big(\frac{s}{2}), \qquad \re s > 2,
\end{equation*}
where $\zeta(s)$ is the Riemann zeta function and  $\DirichletL_{-3}(s)$ a Dirichlet $\DirichletL$-Series, viz.
\begin{align*}
	\zetafcn(s) 
&=
 1 + \frac{1}{2^s} + \frac{1}{3^s} + \frac{1}{4^s} + \frac{1}{5^s} + \frac{1}{6^s} + \cdots, \qquad \re s > 1, \\
	\DirichletL_{-3}(s) 
&= 1 - \frac{1}{2^s} + \frac{1}{4^s} - \frac{1}{5^s} + \frac{1}{7^s} - \frac{1}{8^s} + \cdots, \qquad \re s > 1.
\end{align*}
	For computational purposes it is more convenient to 
express this Dirichlet $\DirichletL$-series in terms of the Hurwitz zeta function
\begin{equation*}
	\zeta(s,a) \equiv \sum_{k=0}^\infty \frac{1}{\left( k + a \right)^s}, \quad \re s > 1,
\end{equation*} 
by means of
\begin{equation*} 
	\DirichletL_{-3}(s) = 3^{-s} \textstyle{\left[ \zeta\big(s,\frac{1}{3}\big) - \zeta\big(s,\frac{2}{3}\big) \right].} 
\end{equation*} 
	Using these representations we computed the graph of r.h.s.\Ref{eq:C.s}, see Fig.~\ref{fig:Cs}.
\begin{figure}[H]
\centering
\includegraphics[scale=.8]{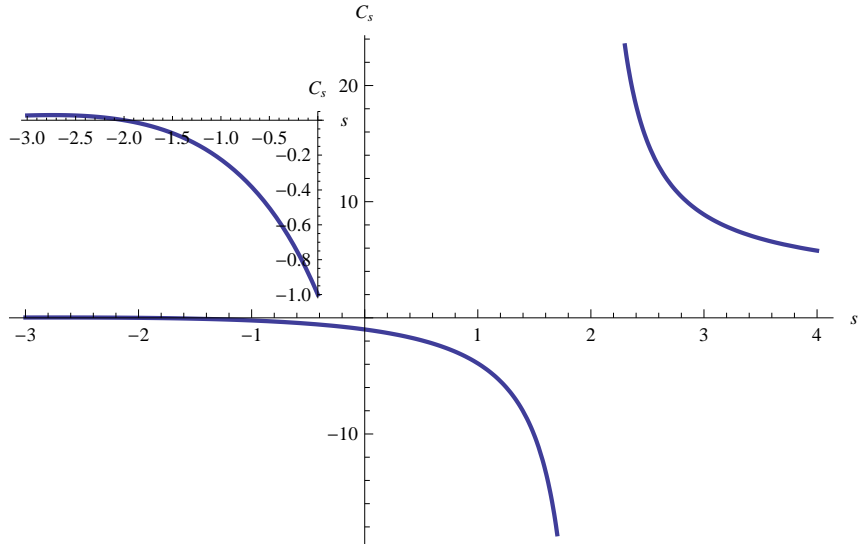}
\caption{\footnotesize{The graph of the 
			right-hand side of \eqref{eq:C.s}.}} \label{fig:Cs}
\end{figure}

	The fundamental conjecture for the asymptotic expansion of $\pzcE_s(\Sset^2, N )$, $s\in(2,4)$, as $N$ becomes large states that
\begin{conjecture}
	For $s\in(2,4)$ the asymptotic expansion of $\pzcE_s(\Sset^2, N )$ reads
\begin{equation*}
\pzcE_s(\Sset^2, N )  
\sim 
{\big( \sqrt{3} /8 \pi  \big)^{s/2} \zetafcn_{\Lambda}(s)} \, N^{1+s/2} + W_s^{} \, N^2 + o(N^2)
\quad \text{as $N \to \infty$.}
\end{equation*}
\end{conjecture}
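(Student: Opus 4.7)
The plan is to establish matching upper and lower asymptotic bounds on $\pzcE_s(\Sset^2,N) - W_s N^2$ for $s\in(2,4)$, showing both are $\bigl(\sqrt{3}/8\pi\bigr)^{s/2}\zetafcn_\Lambda(s)\,N^{1+s/2}+o(N^2)$. The architecture follows the Hardin--Saff template for the hypersingular regime \cite{HardinSaffTWO}, but sharpened to track a \emph{subdominant} correction of order $N^2$. Throughout, I will exploit the decomposition
\begin{equation*}
\pzcE_s(\Sset^2,N) = \mathop{\sum\sum}_{i\neq j}|\qV_i-\qV_j|^{-s} = \Sigma^{\mathrm{near}}_{r_N}(\omega_N) + \Sigma^{\mathrm{far}}_{r_N}(\omega_N)
\end{equation*}
at a cut-off $r_N$ chosen so that $r_N\to 0$ yet $N r_N^2\to\infty$ (for instance $r_N\asymp N^{-1/2+\delta}$ with $\delta$ small). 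The near-part will carry the leading singular $N^{1+s/2}$ contribution; the far-part will contribute the $W_s N^2$ piece via analytic continuation.

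First I would construct the upper bound by specifying a configuration $\omega^\flat_N$ obtained by pulling back, via area-preserving diffeomorphisms onto a Voronoi-type equal-area partition of $\Sset^2$ into $N$ cells of area $4\pi/N$, a rescaled piece of the planar hexagonal lattice $\Lambda$ inside each cell. The within-cell and between-near-neighbor-cell sums are asymptotically governed, after rescaling by $(4\pi/N)^{1/2}$, by the Epstein--zeta sum of $\Lambda$, yielding
\begin{equation*}
\Sigma^{\mathrm{near}}_{r_N}(\omega^\flat_N) = \bigl(\sqrt{3}/8\pi\bigr)^{s/2}\zetafcn_\Lambda(s)\,N^{1+s/2}\bigl(1+o(1)\bigr).
\end{equation*}
For the far-part I would replace empirical sums by integrals against the normalized surface area $\sigma$ via a Riemann-sum estimate with error controlled by the gradient of $|\cdot|^{-s}$ truncated at distance $r_N$. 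This produces a divergent continuum term; subtracting and adding the principal part and using the meromorphic continuation $W_s = 2^{1-s}/(2-s)$ (which is what the analytically-continued $\iint|\pV-\qV|^{-s}\,d\sigma\,d\sigma$ yields in the range $s\in(2,4)$, as discussed at the top of Section~\ref{sec:hyper.sing}) gives $\Sigma^{\mathrm{far}}_{r_N}(\omega^\flat_N) = W_s N^2 + o(N^2)$ provided $r_N$ is chosen so the residual short-range mass from the subtraction is itself $o(N^2)$.

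For the lower bound I would proceed in the spirit of Bétermin--Sandier and Hardin--Saff, replacing each configuration $\omega_N$ by a ``smeared'' measure $\mu_N = \uli\Delta^{(1)}_{\omega_N}\star\chi_{r_N}$ (convolution with a small radial bump of radius $r_N$), writing
\begin{equation*}
\Sigma^{\mathrm{far}}_{r_N}(\omega_N) \geq \mathcal{I}_s^{\,\mathrm{reg}}[\mu_N]\,N^2 - \mathcal{O}(N^2 r_N^{2-s} / N)
\end{equation*}
where $\mathcal{I}_s^{\,\mathrm{reg}}$ is the regularized (analytically continued) Riesz $s$-energy; uniqueness and smoothness of the continuum minimizer $\sigma$ together with a second-variation estimate then force $\mathcal{I}_s^{\,\mathrm{reg}}[\mu_N]\geq W_s + o(1)$. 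For $\Sigma^{\mathrm{near}}_{r_N}$, a local packing argument within geodesic balls of radius $r_N$ — comparing the local pair count to the planar hexagonal-sphere-packing density via Fejes-T\'oth — yields the matching lower constant $(\sqrt{3}/8\pi)^{s/2}\zetafcn_\Lambda(s)$.

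The main obstacle will be the lower bound on the \emph{second-order} coefficient: the leading-order lower bound is essentially the hexagonal packing estimate already present in Kuijlaars--Saff, but extracting $W_s N^2$ rigorously requires a renormalization argument that handles the sign change caused by $W_s<0$ on $(2,4)$ and shows that the $o(1)$ error in $\mathcal{I}_s^{\,\mathrm{reg}}[\mu_N]\geq W_s+o(1)$ can be made uniform enough to beat the $N^2$ factor. Equivalently, one must close the gap in the cut-off scheme so that the cross-terms between $\Sigma^{\mathrm{near}}_{r_N}$ and $\Sigma^{\mathrm{far}}_{r_N}$ do not contaminate either scale; this delicate balance, together with the fact that the hexagonal-packing lower bound for $C_s$ itself is still open, is what makes the full statement a conjecture rather than a theorem.
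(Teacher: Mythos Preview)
The statement you are attempting to prove is stated in the paper as a \emph{conjecture}, not as a proposition or theorem, and the paper offers no proof or proof sketch for it whatsoever. The paper simply records the conjectured asymptotic expansion (citing the Hardin--Saff leading-order result and the Kuijlaars--Saff upper bound for the leading coefficient), notes the appearance of the analytically continued $W_s$ as the subleading coefficient, and then proceeds to \emph{use} the conjecture as motivation for defining the re-adjusted pair-energy $\widetilde{U}_s$. There is nothing on the paper's side to compare your argument against.

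Your outline is a reasonable heuristic roadmap for how such a result might eventually be attacked, and you yourself correctly identify the genuine obstruction at the end: the lower bound on the second-order term requires (i) knowing that the leading constant $C_s$ is exactly $(\sqrt{3}/8\pi)^{s/2}\zetafcn_\Lambda(s)$, which is itself open (only the upper bound is known, from \cite{KuijlaarsSaff}), and (ii) a renormalized-energy argument sharp enough to isolate $W_s N^2$ with $o(N^2)$ error uniformly over minimizers. Neither ingredient is currently available. So your proposal is not a proof but a plausibility argument --- which is appropriate, since the statement is a conjecture, but you should be clear that you have not proved it and that the paper does not claim to either.
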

	Note the appearance of (the analytically continued) $W_s^{}$ as coefficient of the $\mathcal{O}(N^2)$-term, which now is the
next-to-leading-order term.
\newpage

	The fundamental conjecture motivates the introduction of the following \emph{\recentered pair-energy}
\begin{equation}  \label{eq:tildeU.s}
\widetilde{U}_s( r )
\equiv 
s^{-1} \left( r^{-s} - W_s - \big(\sqrt{3} / 8\pi \big)^{s/2} \zetafcn_{\Lambda}(s) \, N^{s/2-1} \right), \quad s \in(2,4),
\end{equation}
the associated average \recentered pair-energy of a configuration $\omega_N^{}$,
\begin{equation}\label{eq:mean.tildeU.s.OF.omega}
\langle \widetilde{U}_s \rangle( \omega_N^{} ) 
\equiv 
\frac{2}{N \left( N - 1 \right)} \mathop{\sum \sum}_{1 \leq j < k \leq N} \widetilde{U}_s( \left| \qV_j - \qV_k \right| ),\quad s\in (2,4),
\end{equation}
and the \emph{minimal average \recentered Riesz pair-energy}, given by
\begin{equation}\label{eq:minimal.mean.tildeU.s.OF.omega}
\tilde{u}_s^{}( N ) 
\equiv 
\inf_{\omega_N \subset \Sset^2} \langle \widetilde{U}_s \rangle( \omega_N ),\quad s\in (2,4).
\end{equation}
	According to the fundamental conjecture, $\tilde{u}_s^{}(N)$ would be bounded above and
tend to $0$ when $N\to\infty$, for all $2 < s < 4$. 

	However, since $N\mapsto N^{s/2-1}$ is a concave, increasing function for $s\in(2,4)$, it is neither clear whether
$N\mapsto \tilde{u}_s^{}( N ) = {u}_s^{}( N ) - \big(\sqrt{3}/8\pi \big){}^{\!\!s/2} \zetafcn_{\Lambda}(s)N^{s/2-1}$ 
is increasing (whereas $N\mapsto {u}_s^{}( N )$ is), nor whether $N\mapsto \tilde{u}_s^{}( N )$ is concave whenever 
$N\!\mapsto\! {u}_s^{}( N )$ is.
	These are interesting open problems for future study.

\subsection{The singular case $s = 2$ \label{sec:sing}} 
	Kuijlaars and Saff~\cite{KuijlaarsSaff} showed that
\begin{equation*}
\lim_{N \to \infty} \frac{{E}_2( N )}{N^2 \log N} = \frac{1}{4}.
\end{equation*}
	A conjecture of Brauchart, Hardin and Saff \cite{BrHaSa2012} is that
\begin{equation*}
\textstyle {E}_2( N) \sim \frac{1}{4} \, N^2 \log N + C_{2} N^2 + o(N^2) \quad \text{as $N \to \infty$,}
\end{equation*}
where
\begin{equation*}
\hskip-5pt
\textstyle
C_{2} 
= \frac{1}{4} \left[ \gamma - \log ( 2 \sqrt{3} \pi ) \right] 
	+ \frac{\sqrt{3}}{4 \pi} \left[ \gamma_1\big(\frac{2}{3}\big) - \gamma_1\big(\frac{1}{3}\big) \right] 
= -0.08576841030090... < 0.
\end{equation*}
	This motivates the introduction of the following \emph{\recentered pair-energy},
\begin{equation}  \label{eq:tildeU.TWO}
\widetilde{U}_2( r )
\equiv 
\textstyle\frac{1}{2} \left( r^{-2} - C_2 - \frac{1}{4} \log N\right),
\end{equation}
its associated average \recentered pair-energy of a configuration $\omega_N^{}$,
\begin{equation}\label{eq:mean.tildeU.TWO.OF.omega}
\langle \widetilde{U}_2 \rangle( \omega_N^{} ) 
\equiv 
\frac{2}{N \left( N - 1 \right)} \mathop{\sum \sum}_{1 \leq j < k \leq N} \widetilde{U}_2( \left| \qV_j - \qV_k \right| ),
\end{equation}
and the \emph{minimal average \recentered Riesz pair-energy} at $s=2$, given by
\begin{equation}\label{eq:minimal.mean.tildeU.TWO.OF.omega}
\tilde{u}_2^{}( N ) 
\equiv 
\inf_{\omega_N \subset \Sset^2} \langle \widetilde{U}_2 \rangle( \omega_N ).
\end{equation}
	By the above conjecture, $\tilde{u}_2^{}(N)$ would be bounded and
tend to $0$ when $N\to\infty$. 

	However, since $N\mapsto \log N$ is a concave, increasing function, it is neither clear whether
$N\mapsto \tilde{u}_2^{}( N ) = {u}_2^{}( N )  - C_2 - \frac{1}{4} \log N$ is increasing (whereas $N\mapsto {u}_2^{}( N )$ is) 
nor whether $N\mapsto \tilde{u}_2^{}( N )$ is concave whenever $N\!\mapsto\! {u}_2^{}( N )$ is.
	Also these are interesting open problems for future study.

\vskip-.5truecm
	\section{Summary and Outlook} \label{sec:summary}

	In this paper we have inquired into the local concavity properties of the map $N\mapsto v_s^{}(N)$,
where $v_s^{}(N)$ is the minimal average standardized Riesz pair-energy for $N$-point configurations
on the unit 2-sphere $\mathbb{S}^{2}\subset\mathbb{R}^{3}$.
	By ``standardized'' Riesz pair-energy we mean $V_{s}(r)=s^{-1}\left(r^{-s}-1\right)$, with $s\in\mathbb{R}$,
where $r$ is the chordal distance between the points of the pair.
	The map $s\mapsto V_{s}(r)$ defines a real analytical family of increasing pair-energies; in particular, 
it includes the logarithmic interaction $-\ln r =\lim_{s\to 0}V_s(r)$.

	Given the limited amount of knowledge about true minimizers (see Appendix~\ref{sec:appdx.A}), we have studied 
mostly the $N$-dependence 
of putatively minimal average standardized Riesz pair-energies $v^{x}_{s}(N)$, obtained numerically in computer experiments.
	Our empirical findings indicate that for $s=-1$ the minimal average standardized Riesz pair-energy could 
be locally  strictly concave function of $N$, without any ``convex anomalies.''
	However, when $s\in\{0,1,2,3\}$ we have found that $N\mapsto v^{x}_{s}(N)$ is not strictly concave. 
	Based on our empirical findings we have conjectured that there exists an $s_*^{}\in(-1,0)$ such that $N\mapsto v^{x}_{s}(N)$ 
is locally strictly concave for all $s<s_*^{}$, while local strict concavity is violated at some $N$-values whenever $s\geq s_*^{}$.
	We presented some rigorous (but rough), and some quasi-rigorous (yet more promising) upper bounds on $s_*^{}$.
	Also, in Appendix \ref{sec:appdx.A}, we shall readily see analytically that $N\mapsto v_{-2}^{}(N)$ is strictly locally 
concave, and restricted to even numbers $N=2n$, the results of \cite{Bjorck} imply the strict local 
concavity of $N\mapsto v_s^{}(N)$ also for $s<-2$. 

	We also presented various rigorous bounds on the second discrete derivative, $\ddot{v}^{}_{s}(N)$,
of $N\mapsto v^{}_{s}(N)$, and we related concavity to an asymptotic analysis of the large-$N$ regime.
	Our control of $\ddot{v}_s^{}(N)$ is not good enough to prove strict local concavity for any $s$ other
than $s=-2$, yet we expect that our analysis will serve as a stepping stone along the way to a concavity proof for $s<s_*^{}$.
        In any event, our rigorous bounds can serve as test criteria for optimality.

	We emphasize that an a-priori knowledge of any concavity properties of the
actual maps $N\mapsto v^{}_{s}(N)$ will furnish valuable test criteria for the accuracy of empirical maps $N\mapsto v^{x}_{s}(N)$. 
	In particular, given the practical importance of the $s=-1$ problem in numerical integration schemes as
mentioned in Appendix \ref{sec:appdx.A}, it is of special interest to prove, or disprove, the empirically suggested local strict concavity 
of $N\mapsto v^{}_{-1}(N)$. 
	Since the optimizers of the $s=-1$ Fekete problem are related to spherical digital nets, it is of interest to raise our
concavity questions also for these average standardized Riesz pair-energies computed for such nets; see our Appendix~\ref{sec:appdx.D}.

	For each studied $s$-value, the $N$-values at which the map $N\mapsto v^{x}_{s}(N)$ is strictly convex
were collected into a set $\cC_{+}^{x}(s)$.  
	Inspection of the empirical map $s\mapsto \cC_{+}^{x}(s)$ suggested to us 
the conjecture that the actual map $s\mapsto \cC_{+}^{}(s)$ is set-theoretically monotonic increasing.
	Inspired by this conjecture we suspected, and then verified (by finding lower-energy configuations), that the $N=177$ and $N=197$ 
data points in the computer-experimental tables of putatively minimal Riesz pair-energies for $s\in\{2,3\}$ in \cite{Ca2009} are non-optimal.
	This makes it plain that a-priori knowledge of any monotonicity property of the map $s\mapsto \cC_{+}^{}(s)$ will also furnish 
valuable test criteria for the accuracy of empirical maps $N\mapsto v^{x}_{s}(N)$. 

        We have also discovered yet another empirical monotonicity: the percentage of odd numbers in $\cC_{+}^{x}(s)$ increases monotonically 
with $s\in\{0,1,2,3\}$. 
        Based on this finding it is reasonable to conjecture that the percentage of odd numbers in $\cC_{+}^{}(s)$ increases monotonically
with $s>s_*^{}$, if such $s_*^{}$ exists.

	We have not been able to detect any algebraic generating rule of $\cC_{+}^{x}(s)$ for any $s\in\{0,1,2,3\}$,
and with increasing $s$ it seems increasingly unlikely (another monotonicity property!) to detect any such rule, because
the sets $\cC_{+}^{x}(s)$ appear more and more random as $s$ increases from $0$ to $3$.
	On the other hand, the set $\cC_{+}^{x}(0)$ exhibits some intriguing quasi-regular patterns which reminded us\footnote{By 
          this we only mean a vague qualitative reminiscence. 
        Of course, Thomson \cite{Thomson} may have hoped to find the actual quantitative pattern of the periodic table of the 
        chemists; for a most recent inquiry in this spirit, see \cite{LaFave}.}
of the periodic table of the chemists, or the ``magic'' numbers in nuclear physics; incidentally, note that
the $s=0$ problem can be viewed as the electric ground state energy problem for a system of $N$ ``two-dimensional'' 
point charges on $\mathbb{S}^{2}$.
	Thus we decided to call the $N$-values in $\cC_{+}^{x}(0)$ the ``Magic Numbers of Smale's $7$th Problem.'' 
	We have speculated that those ``magic'' numbers are perhaps associated with ``optimally symmetric'' endpoints of 
families of more-and-more symmetric configurations, based on our observation that the first few 
configurations associated with $\cC_{+}^{x}(0)$ are highly symmetric.
        In order to switch from a ``magic-$N$'' optimal configuration to an optimal $N+1$ configuration by adding a particle 
would then require an unusually large amount of energy to break up the ``optimally symmetric (magic-$N$)'' configuration ---
hence the convexity. 

	We hope that our paper triggers further research into the concavity properties of the minimial average standardized 
Riesz pair-energies on $\Sset^2$, in particular the structure of its sets of convexity defects as functions of $s$. 
	As a preliminary guide into such future inquiries, we have formulated a list of 14 interesting questions, yet surely there
are many more!
	It would also be interesting to try to answer analogues of our questions formulated
for the $s$-Fekete problem on other compact manifolds $\cM$.
	In particular, we expect that the analogue of our list of questions Q~1 -- Q14 for 
$\cM=\Sset^1$ can be answered \emph{in complete detail}; cf. our Remark~\ref{rem:S1concavity}. 

	Furthermore, even though the paper \cite{MeKnSm1977} contains many numerically computed bifurcation diagrams, we noticed 
the absence of any proper bifurcation analyses in the literature on the $s$-Fekete problems.
	To assist our inquiries we have begun such a bifurcation analysis, so far numerically, for a few small $N$-values.
	In the process we discovered a previously undetected bifurcation in the $N=7$-point problem at $s=0$,         
where the local minimality is exchanged between the pentagonal bi-pyramid and a $C_2(1^12^3)(f=5)$ configuration.
        We have \emph{not} confirmed that this family is truly globally minimizing for the interval $-2<s<0$, \emph{neither}
did we confirm that the pentagonal bi-pyramid is globally minimizing for $0<s<1$; yet we strongly suspect that all this is true.
	We are planning a more detailed bifurcation analysis, with special attention given to the $7$-point problem, 
to be reported on in a future publication.
        We emphasize that a bifurcation analysis will be practically feasible only for moderate $N$ values because the
eigenvalue problems involve nontrivial $2N\times2N$ matrices which will become difficult to handle when $N$ gets too large.

        We mentioned that  computer-experimental evidence suggests that the number of non-globally 
minimizing configurations (modulo rotations on $\Sset^2$), is growing exponentially with $N$; cf. \cite{ErberHockneyTWO}.
        It would be good to have a rigorous proof, together with a determination of the growth rate.
        But to the best of our knowledge, there is no rigorous estimate even of their number being finite!
        This reminds one of Smale's 6-th problem, viz. the celestial-mechanical counterpart of such a  finiteness proof.

        An exponential growth rate of the number of locally minimizing configurations
(or perhaps of the number of equilibrium configurations) is reminiscent of ``the \emph{complexity} of the energy landscape,'' 
see \cite{WalesBOOK}.
        As far as we know, not much is known about the Riesz $s$-energy landscape for $N$-point configurations on $\Sset^2$.
        With the help of catalogs of non-globally minimizing configurations and their energies it should be 
feasible to determine the experimental number counts of the local minimizers \emph{below a certain energy $E$};
see \cite{CalefETal2013} for a most recent study and additional references.

\noindent
{\bf Acknowledgement:} J.B. and M.K thank Ed Saff for his role as ``matchmaker'' that started our collaboration
at the conference OPTIMAL 2010 at Vanderbilt University, Nashville, Tennessee, and for interesting comments.
  R.N. and M.K. thank Michael Kastner for providing the opportunity to get our collaboration started at the 2011 STIAS workshop on 
``Equilibrium and Equilibration'' in Stellenbosch, South Africa, and Lapo Casetti for his support of our collaboration and his 
interesting comments on energy landscapes.
	We would like to thank Rob Womersley (UNSW) for numerical data and helpful discussions regarding the numerics.
	J.B. greatfully acknowledges partial support by an APART-Fellowship of the Austrian Academy of Sciences and the 
hospitality of the School of Mathematics and Statistics at UNSW and the support of the Australian Research Council;
	R.N. and M.K. greatfully acknowledge partial financial support by the NSF through grant DMS 0807705, and by 
the INFN, Sezione di Firenze.
	Lastly, we thank the referee for the careful reading of the original manuscript and for constructive criticism. 


\newpage

\begin{appendices}
\section{}\label{sec:appdx.A}
\subsection*{Optimal Riesz energy configurations on $\Sset^2$: A brief survey} 
\noindent
	The problem to determine $v_s^{}(N)$ together with the minimizing configuration(s) $\omega_N^s$ 
has been solved completely for all $N\geq 2$ only at the distinguished value $s=-2$, by explicit calculation.
        Namely, $s=-2$ yields the energy law for the completely integrable Newtonian $N$-body problem with 
repulsive harmonic forces.
	Any $N$-point configuration satisfying $\sum_{i=1}^N \qV_i = \mathbf{0}$ is a 
minimizing configuration of $\langle V_s\rangle(\omega_N)$, and only such are. 
	The minimal energy~reads

\vskip-.4truecm
\begin{equation} 
	v_{-2}(N)
=\label{vSUBminusTWO}
	-\frac{1}{2}\frac{N+1}{N-1}.
\end{equation}		

	The (presumably) next-simplest parameter regime is $s<-2$.
	Here one is confronted with the possibly startling observation that for large $N$ the $N$-tuple 
Fekete points accumulate around two opposite points, and the localization sharpens as $N$ is getting larger; 
this is a consequence of Theorem 7 in \cite{Bjorck}. 
	In particular, it follows right away from Theorem 7 in \cite{Bjorck} that for even $N$ the infimum 
$v_s^{}(N),\, s<-2$, is achieved\footnote{By Theorem 7 of \cite{Bjorck}, the infimum is 
		\emph{not} achieved by a proper $N$-point configuration.}
if and only if half of the particles each are placed at two antipodal points,\footnote{This was already noted by
	          Rachmanov, Saff, and Zhou \cite{RSZa}.}
yielding
\begin{equation} 
	v_s^{}(N)
=\label{vsFORsBELOWminusTWO}
-\frac{1}{|s|}\,\frac{\left( 2^{|s|-1}-1 \right) N+1}{N-1},\qquad s< -2,\qquad N=2n,
\end{equation}
which converges to $v_{-2}(N)$ when taking the limit $s\uparrow -2$ of \Ref{vsFORsBELOWminusTWO}.
        When $N$ is odd the situation is already more tricky, and more interesting!
	For instance, for the smallest allowed odd $N=3$ it is suggestive to conjecture that the
minimizing configuration consists of the corners of an equilateral triangle in an arbitrary equatorial 
plane; yet comparison with an antipodal ``configuration'' (arrangement) with two 
labeled points in the North and one in the South Pole 
reveals that the equilateral configuration yields a lower average standardized Riesz pair-energy only for 
$s_3 < s < -2$, where $s_3 \equiv \ln(4/9)/\ln(4/3)$, while for $s < s_3$
the antipodal arrangement yields the lower average standardized Riesz pair-energy; in this case one can easily show
rigorously that the antipodal arrangement is in fact optimal: namely, the equilateral triangle and
the antipodal arrangement are the only equilibrium arrangements of 3 labeled points.
	When comparing the average standardized Riesz pair-energy for antipodal and equilateral arrangements for other odd $N$, 
this changeover happens only if $N$ is a multiple of $3$.
	The critical $s_{3(2n-1)}$ tends monotonically to $-2$ as $N =3(2n-1)\to \infty$.
	Of course, this does not prove that either arrangement is optimal in the respective range of~$s$. 
	To the best of our knowledge, the optimal arrangement of odd-$N$ points as a function of $s<-2$ 
is far from being settled.

	The concentration of the minimizing ``$N$-point configuration'' for $s< -2$
at a few distinct points indicates that the optimization problem  is incorrectly posed 
in the set of proper $N$-point configurations.
	(The deeper reason is that the Riesz pair-energy ceases to be positive definite in the sense of 
Schoenberg~\cite{Sch1938} for $s < -2$.)
	Interestingly, the sum of distance problem for $s < -2$ plays a central role in the theory of 
Quasi-Monte Carlo integration schemes\footnote{An important special case of Quasi-Monte Carlo schemes 
  are the so-called $t$-designs, which can be characterized by polynomial energy functionals \cite{SlWo2009}.} 
for functions in smooth enough function spaces over $\Sset^2$;
	we refer the interested reader to \cite{BrSaSlWo14,BrDi2013a,BrDi2013b} and papers cited therein.

        When $s> -2$ the problem becomes drastically more complicated.
	One needs to distinguish the cases $-2<s<2$, $s=2$, $s>2$, and the limit ${s\to\infty}$. 

The interval $-2<s<2$ is known as the \emph{potential-theoretical regime},
since concepts and methods of potential theory can be applied to study both the discrete and the continuous (i.e. $N\to\infty$) optimization 
problems.\footnote{The potential function $\qV \mapsto V_{s}( \left| \qV \right| )$ is \emph{strictly superharmonic} 
		(i.e. $\Delta V_{s}( \left| \qV \right| ) < 0$ in $\dot\Rset^3$) for $s\in(-2,1)$, \emph{harmonic} 
  ($\Delta V_{s}( \left| \qV \right| ) = 0$) for $s = 1$, and \emph{strictly subharmonic} 
  ($\Delta V_{s}( \left| \qV \right| ) > 0$) for $s\in(1,2)$; here, $\Delta$ is the Laplacian in the ambient 
		space $\Rset^3$, and $\dot{\Rset}^3$ is $\Rset^3$ with its origin removed.
			One consequence is that $N$-point configurations with minimal average standardized Riesz pair-energy 
		in the closed unit ball in $\Rset^3$ live on its  boundary $\Sset^2$ 
		in the superharmonic case, but extend ``into the solid'' in the subharmonic case as 
                some points need to move into the volume to lower the energy (``charge injection'') (cf. \cite{Landkof,Berezin}).} 
Within this regime the integer values $s=-1$, $s=0$, and $s=1$ are of particular interest.
	When $s=-1$ the minimal average standardized Riesz pair-energy problem is equivalent to the maximal 
average pairwise chordal distance problem; see \cite{FejToth,Stol,Beck}. 
        In \cite{BrDi2013b} it is shown that maximum-sum-of-distance configurations are ideal integration nodes 
for a certain optimal-order Quasi-Monte Carlo integration scheme on $\Sset^2$; we will come back to this in Appendix~\ref{sec:appdx.D}.
        The case ``$s=0$,'' i.e. the limit $s\to 0$, which yields the logarithmic pair-energy~\Ref{limiteNULL}
(also known as the Coulomb energy for a pair of ``two-dimensional unit point charges'' on $\Sset^2$,
respectively the Kirchhoff energy \cite{KirchhoffBOOK} of a pair of unit point vortices on $\Sset^2$),
occurs in a stunning variety of problems (on $\Sset^2$ and other manifolds) in the sciences and mathematics;
see, e.g.~\cite{ForresterJancoviciMadore,ShubSmale,ChaKieCMP,SaffTotik,BCNTlett,KBCDKNV,NeCh2009,ForresterBOOK,KieWang}.
	Originally Smale's 7th problem for the 21st century \cite{Smale} was formulated for the logarithmic energy, see below.
	 Lastly, the value $s=1$ yields the Coulomb pair-energy of ``three-dimensional unit point charges''
associated with the so-called \emph{Thomson problem} (see \cite{Thomson,Whyte,ErberHockneyTWO,AWRTSDW,PerezGetal,BCM,LaFave}).

	Amongst the values $s\geq 2$, the borderline value $s=2$ is special in the sense that the finite-$N$ behavior 
is qualitatively different from both, the regime $-2<s<2$, and the regime $s>2$.
	Yet it can be understood by considering a certain limit process $s \to 2$; cf. \cite{CaHa2009}
for the limit process $s\to d$ in analogous optimization problems formulated on $d$-dimensional manifolds.
	The Riesz pair interaction for $s=2$, in physics considered as correction term to Newton's gravity \cite{Manev},
is also special in the sense that it yields a Newtonian $N$-body problem in $\Rset^3$ with additional isolating integrals of 
motion \cite{Boby,LBLB,CalogeroLeyvraz}, besides those associated with Galilei invariance.
	Restricted to $\Rset$ the motion is even completely integrable for all $N$ \cite{Calogero,Moser}. 

	The large-$s$ behavior of $\langle V_s\rangle (\omega_N)$ ($N$ fixed) is intimately connected 
with the classical \emph{Tammes's problem} (\cite{Tammes}) or \emph{hard sphere (best-packing) problem} (cf. \cite{CoSl99}); 
that is, to find a configuration of $N$ points on the sphere with the minimal pairwise (chordal) distance 
between the points being as large as possible.\footnote{From \cite{BoHaSa07} it readily follows that a 
		certain limit of the leading coefficient in the asymptotic expansion of $v_s (N)$ for large $N$ (and the $d$-sphere) 
		is closely related to the \emph{largest sphere packing density in $\Rset^{\mathrm{d}}$}. 
		Only the densities for $d = 1$, $2$ and $3$ are known, and only quite recently Hales \cite{Ha05} could settle the last 
		case by proving the famous \emph{Kepler Conjecture}, which states that no packing of congruent balls in Euclidean space 
                has density greater than the density of the face-centered cubic packing 
		(which is identical to the density of the hexagonal close packing).}  
	It is not too difficult to see (cf. Appendix~\ref{sec:appdx.B}) that for any $N$-point configuration 
$\omega_N = \{ \qV_1, \dots, \qV_N \} \subset \Sset^2$ the following limit relation holds:
\begin{equation} \label{eq:s.to.infty.A}
\lim_{s \to \infty} \left[ \langle V_s\rangle (\omega_N) +{\textstyle\frac{1}{s}}\right]^{-1/s} 
= \min_{1 \leq i < j \leq N} \left| \qV_i - \qV_j \right|
\equiv \varrho( \omega_N ).
\end{equation}
	Moreover, whenever a family of minimizing configurations $\omega_N^s$ converges to a limit
configuration $\omega_N^\infty$ one has the relation (cf. Appendix~\ref{sec:appdx.B})
\begin{equation} \label{eq:s.to.infty.B}
\lim_{s \to \infty} \left[ v_s (N) +{\textstyle\frac{1}{s}}\right]^{-1/s} = \varrho( \omega_N^\infty) \equiv \rho(N),
\end{equation}
where $\rho(N)$ is the \emph{best-packing (chordal) distance},  which 
maximizes the \emph{least distance} $\varrho(\omega_N)$ among all $N$-point configurations on $\Sset^2$,
and $\omega_N^\infty$ is the \emph{best-packing configuration}.\footnote{See \cite{ClKe1986,MeKnSm1977} for tables of 
		of $\omega_N^\infty$ and numerical values of $\rho(N)$.}
	The best-packing distance $\rho(N)$ is only known for $N = 2,\ 3,\ \dots,\ 12$ and $24$ 
(cf.\! \cite{FejesTothL,SchvdW1951,Rob1961,Bor1983,Da1986}).\footnote{Very recently, a proof for $N=13$ has been 
		proposed in \cite{MuTa2012arXiv}.} 

	Another way of obtaining a nontrivial limit problem is to let $s\to\infty$ in $V_s(r)$, which gives
\begin{equation} \label{limiteINFINITY}
	V_{\infty}(r) \equiv \lim_{s\rightarrow \infty} V_{s}(r)
= \begin{cases}
      \infty & \text{if $r < 1$,} \\
      0 & \text{if $r \geq 1$.}
  \end{cases}
\end{equation}
	In that case $v_\infty(N) = 0$ for $N \leq N_*$, while $v_\infty(N) = \infty$ for $N > N_*$, 
viz. $N_*$ is the maximum number of non-overlapping calottes with \emph{spherical radius} $\pi/6$
which can be placed on the unit sphere.
        By picking any ball, $B$, from a hexagonal close packing (hcp) of $\Rset^3$ with unit balls, 
then projecting its 12 nearest neighbors (unit balls) radially onto the surface of the central ball $B$, one sees that $N_*\geq 12$.
	And dividing the surface area  of the unit sphere, $4\pi$, by the area of a calotte, $\pi(2-\sqrt{3})$, yields $\approx 14.92820323$, 
giving the upper bound $N_*< 15$.
	But how large is $N_*$, exactly?

Interestingly, the sharp value for $N_*$ is found by studying the related Tammes problem.
	For $2 \leq N \leq 12$ one has\footnote{In particular, $\rho(12)\approx 1.051462225$
		implies that the 12 calotte arrangement $\omega_{hcp}$ obtained from the hexagonal close packing of
		$\Rset^3$, which has $\varrho(\omega_{hcp})=1$, is not the optimizer of the Tammes problem with $N=12$, 
		which is $\omega_{12}^\infty$: the regular icosahedron.}
$\rho(N) > 1$. 
	L. Fejes T{\'o}th's famous inequality (\cite{FejesTothL}),
\begin{equation*}
\left[ \rho(N) \right]^2 \leq 4 - \Bigl[ \cosec\Big( \frac{\pi}{6} \frac{N}{N-2} \Big) \Bigr]^2,
\end{equation*}
where equality holds only for $N = 3$, $4$, $6$ and $12$, gives $\rho(N) < 1$ for $N \geq 14$. 
	From \cite{BaVa2008} follows $\rho(13) < 1$. 
	Hence, $N_* = 12$.

        To our best knowledge, the following point sets are the only ones for which 
one can rigorously prove that they have minimal average standardized Riesz pair-energy for \emph{all} $s > -2$. 
	One can \emph{easily} characterize the minimizing configuration explicitly only when $N = 2$ or $3$ 
(as the antipodal and equilateral configuration, respectively).
	The minimizing configuration has been characterized explicitly also
for $N = 4$, $6$, and $12$ as the vertices of Platonic solids\footnote{Surprisingly, perhaps, 
	the vertices of the Platonic cube ($N = 8$) have a higher average 
	pair-energy than the square-antiprism derived from the cube by twisting (angle of 45 degrees) and 
	squeezing together two opposite faces of the cube.
	Similarly, the dodecahedron (N=20) is not a minimizing configuration either, for any $s>-2$.}
(tetrahedron, octahedron, and icosahedron), which are known to be {\em universally optimal} (see \cite{CoKu2007}); 
such configurations minimize the potential energy of \emph{completely monotonic} pair-energy functions.
	The standardized Riesz pair-energies for $s > -2$ (including the logarithmic pair-energy at $s=0$)
fall into this category.
	The listed configurations for $N = 2$, $3$, $4$, $6$, and $12$ exhaust the 
possibilities for universally optimal configurations on $\Sset^2$; cf. \cite{Le1957,CoKu2007}. 

	The surprisingly difficult task of finding a proof of minimality can, perhaps, be best illustrated with 
the only partly resolved \emph{five point problem on $\Sset^2$}. 
	It is clear from \cite[Prop.~14]{CoKu2007} that there is no universally optimal $5$-point configuration 
on $\Sset^2$. 
	Indeed, computational optimization reveals that the minimal-energy arrangement of five labeled points on 
$\Sset^2$ changes many times as $s$ varies over the real line.\footnote{The numerical study for $s\leq -2$ is our own.
	For $1\leq s\leq 400$, and $s\to\infty$, cf. \cite{MeKnSm1977}.} 
	Thus, for $s \leq -2.368335...$ an antipodal arrangement with two labeled points in the South, and three 
in the North Pole (say) is the optimizer; at $s=-2.368335...$ a crossover takes place, and for 
$-2.368335... \leq s \leq  -2$ the energy-minimizing arrangement of five labeled points is an isosceles 
triangle on a great circle, with one point in the North Pole and two labeled points each in the other two corners,
with (numerically) optimized height.
	At $s=-2$ the isosceles arrangement bifurcates off of a continuous family of rectangular pyramids with height
$h=5/4$, all of which have the same energy $-3/4$ at $s=-2$, and of which the isosceles arrangement is the degenerate 
limit.
	At $s=-2$ also another crossover happens, and for $-2\leq s\leq 15.048077392\dots$ the regular triangular 
bi-pyramid is the putative energy-minimizing configuration. 
	At $s = 15.048077392\dots$, yet another crossover happens, at which the triangular bi-pyramid and 
a square pyramid with height $h\approx 1.1385$ have the same average (standardized) Riesz pair-energy.
	A square pyramid with (numerically) optimized height\footnote{Curiously,
	for $-2\leq s\leq 0$ the optimal height of the square-pyramidal configuration is constant, equal to $5/4$. 
	Only for $s>0$ does the optimized height depend on $s$.}
as function of $s$ appears to have lower (standardized) Riesz pair-energy for $s\in[15.04807...,\infty)$.
	Lastly, it is well-known that the triangular bi-pyramid and the square pyramid with 
height $1$ both are particular best-packing configurations, with $\varrho(\omega_5^\infty)=\sqrt{2}$,
so that ``at $s=\infty$'' an ``asymptotic crossover, or degenerate bifurcation,'' happens.

	How much of this has been proved rigorously? 
	By traditional methods (see \cite{DrLeTo2002}) it can be shown
that the triangular bi-pyramid consisting of two antipodal points 
at, say, the North and the South Pole, and three equally spaced points on the Equator, 
is the unique (up to orthogonal transformation) minimizer of the logarithmic average pair-energy.
	The proof that the same configuration maximizes the sum of distances (that is: assumes $v_{-1}(5)$) 
is computer-aided, exploiting interval methods and related techniques (see \cite{HouSh2011}).
        In \cite{Sch2010arXiv} a computer-aided approach is proposed to show optimality of the triangular 
		bi-pyramid for $s = 1$ and $s = 2$.
        The optimality of both the triangular bi-pyramid and the family of rectangular pyramids with height $5/4$ 
at $s = -2$ can be shown with elementary techniques.
        The rest of the $s$-parameter regime still awaits its rigorous treatment.\footnote{The 
five point problem on the sphere can be also studied as (unconstrained) external field problem in the plane 
(J.B., manuscript in preparation).}

	Numerical results in \cite{MeKnSm1977}, carried out with varying $s\in[1,400]$ for
$N\in\{2,...,16\}$ fixed, suggest that also for other values of $N\not\in\{2,3,4,6,12\}$ the minimizing 
configuration $\omega_N^s$ may generally change as $s$ passes through critical values, and their number 
seem to depend on $N$.
	In particular, for $N=7$ there seem to be three(!) critical $s$-values in $[1,6]$ at which the minimizing 
configuration changes, and presumably a few more when $s<1$, cf.~\cite{BermanHanes} for $s=-1$.
	The general dependence on $s$ of the optimal $N$-point configurations $\omega_N^s$ is 
one of the intriguing features of this minimization problem.
        All the same, it makes it plain why the rigorous determination of the optimizers is a highly nontrivial task 
even for moderate $N$-values other than the special ones $2,3,4,6,12$, becoming hopelessly complicated when $N$ increases.

	Yet, the large-$N$ asymptotics of the minimal average standardized Riesz pair-energy $v_s^{}(N)$ can be
determined without seeking the exact Fekete points, see \cite{RSZa}.
        In particular, $\lim_{N\to\infty}v_s^{}(N)$ is for all $s$ determined by the variational principle\footnote{Our usage here of 
		both $\pV$ and $\qV$ as points in space (i.e., on $\Sset^2$) should not be confused with the usage in 
		Hamiltonian dynamics of $(\pV,\qV)$ as a pair of canonical variables.}
(see \cite{Szego1924,Bjorck,Landkof,KieSpoCMP})
\begin{equation}\label{eq:limit.VP}
\lim_{N\to\infty}v_s^{}(N)
=
\inf_{\mu\in \Psp(\Sset^2)} \iint_{\!\!\!\Sset^2\times\Sset^2} V_s(|\pV-\qV|)\mu(d\pV)\mu(d\qV);
\end{equation}
here, $\Psp(\Sset^2)$ is the set of all Borel probability measures supported on $\Sset^2$. 
        For $s\leq -2$ the minimizer is not unique, but all minimizers are known; in particular, 
for $s<-2$ the minimizer, after factoring out $SO(3)$, is a symmetric measure which is concentrated 
on two antipodal points, see \cite{Bjorck}.
	From classical potential theory
(cf. \cite{Landkof} for $s\in [0,2)$ and \cite{Bjorck} for $s\in(-2,0)$)\footnote{The investigation of the 
	``energy integral'' $\int\!\!\int_{\Sset^2\times\Sset^2} | \pV - \qV|^\lambda \mu(d\pV)\mu(d\qV)$, 
	$\lambda$ real, can be traced back to \cite{PoSz1931}.}
it is well-known that the uniform normalized (Lebesgue) surface area measure on $\Sset^2$, denoted by $\sigma$, 
uniquely minimizes the right-hand side in (\ref{eq:limit.VP}) for $-2 < s < 2$.
	For $s\geq 2$ the l.h.s. and r.h.s. of \Ref{eq:limit.VP} are both $\infty$; in this case the rate of 
divergence of $v_s^{}(N)$ can be determined.
	It ``suffices'' to know that for large $N$ the Voronoi cells around the charges are mostly
hexagons of a certain size; see \cite{SaffKuijlaars} for an enlightening discussion.\footnote{This also 
works for $s<2$; cf. \cite{Berezin}, where a ``semi-continuous approach'' is used to model the large-$N$ behavior.}
        The picture one should have in mind, when $N$ is large, is a vast sea of hexagonal Voronoi cells around most 
of the points. 
        Thus, the dual structure of the Voronoi cell decomposition, the Delaunay triangulation, is a network of mostly 
six-fold coordinated sites. 
The reason for the qualification ``mostly'' lies in the topology of the sphere, which gives rise to 
{\em geometric frustration} (see \cite{SaMo1999} for a thorough exposition of this notion). 
	Certain points ``pick up'' a \emph{topological charge} that measures the departure from the ideal 
coordination number, $6$, of the planar triangular lattice. 
	The celebrated Euler theorem of topology yields that the total topological charge on $\Sset^2$ is always 12. 
	This accounts, for example,  for the appearance of 12 (isolated) pentagons in the common soccer ball design. 
	For large $N$ one observes ``scars'' emerging from these isolated centers that attract pentagon-heptagon pairs 
(having total topological charge $0$). 
       Scars and other topology-induced defects of the hexagonal lattice become important when pushing the
asymptotic analysis to higher order, and are not well understood. 
       For instance, to the best of our knowledge it is an unresolved question if there are $n$-gon Voronoi cells with $n \geq 8$ in a 
minimizing configuration. 
	See \cite{BoGi2009} for an approach using elastic continuum formalism.

	The truly hard regime is the vast intermediate range of $N$ which are generically too large to allow 
for an explicit determination of the minimizing configuration, but not large enough for the asymptotic formulas
to yield sufficiently accurate results.

        This concludes our brief survey of this fascinating field.
	Further information can be found in the survey articles \cite{ErberHockneyTWO}, \cite{SaffKuijlaars}, 
\cite{HardinSaffONE}, and on the websites \cite{BCM} and \cite{Womersley}.
See also the delightful article \cite{AtiyahSutcliffe} where, based on numerical evidence, the first dozen minimizers are discussed
mostly for Thomson's problem ($s=1$).
\newpage

\section{} \label{sec:appdx.B}
In this appendix we supply the proofs of relations \eqref{eq:s.to.infty.A} and \eqref{eq:s.to.infty.B} which control the 
limit $s\to\infty$.

\subsection*{Proof of Relation \eqref{eq:s.to.infty.A}} 
\label{sec:appdx.s.to.infty.A}

	Suppose $\omega_N = \{ \qV_1, \dots, \qV_N \} \subset \Sset^2$ is a fixed $N$-point set, 
with separation distance $\varrho( \omega_N ) = \min_{1 \leq i < j \leq N} | \qV_i - \qV_j |$. 
	Then using that the function $f(x) \equiv x^{-1/s}$ is strictly decreasing  for $s > 0$,
we find that
\begin{align*}
\left[ \langle V_s\rangle (\omega_N) +{\frac{1}{s}}\right]^{-1/s} 
&= \left[ \frac{1}{s} \frac{2}{N(N-1)} \mathop{\sum\sum}_{1 \leq i < j \leq N}
 \frac{1}{\left| \qV_i - \qV_j \right|^{s}} \right]^{-1/s} \\
&= \varrho( \omega_N ) 
\left[ \frac{1}{s} \frac{2}{N(N-1)}\mathop{\sum\sum}_{1 \leq i < j \leq N}
\left(\frac{\varrho(\omega_N)}{\left|\qV_i - \qV_j\right|}\right)^{s}\right]^{-1/s}\\
&\geq \varrho( \omega_N ) \left( \frac{1}{s} \right)^{-1/s} \\
&\geq \varrho( \omega_N ).
\end{align*}
	On the other hand, retaining only one of the least distance pairs in the double sum yields
\begin{align*}
\left[ \langle V_s\rangle (\omega_N) +{\frac{1}{s}}\right]^{-1/s} 
&\leq 
 \left( \frac{1}{s} \frac{2}{N(N-2)} \frac{1}{\varrho(\omega_N)^s}\right)^{-1/s}\\
&=
\varrho( \omega_N ) \left( \frac{1}{s} \right)^{-1/s} \left( \frac{2}{N(N-2)} \right)^{-1/s} \\
& \to \varrho( \omega_N )\qquad \mbox{as}\quad s\to\infty.
\end{align*}
This completes the proof of \eqref{eq:s.to.infty.A}.

\subsection*{Proof of Relation \eqref{eq:s.to.infty.B}}
\label{sec:appdx.s.to.infty.B}

	Let $\omega_N^s = \{ \qV_1^s, \dots, \qV_N^s \} \subset \Sset^2$ denote a minimizing $N$-point set, and
suppose $\omega_N^\infty$ is a best-packing configuration with $\varrho( \omega_N^\infty ) = \rho(N)$. 

	Then, first of all,
\begin{equation*} 
\langle V_s \rangle(\omega_N^s ) 
\leq 
\langle V_s \rangle( \omega_N^\infty );
\end{equation*}
but $\langle V_s \rangle(\omega_N^s ) = v_s^{}(N)$, and so, by \eqref{eq:s.to.infty.A}, we have
\begin{equation} \label{eq:s.to.infty.Bb}
\liminf_{s \to \infty} \left[ v_s (N) +{\textstyle\frac{1}{s}}\right]^{-1/s} \geq \varrho( \omega_N^\infty) \equiv \rho(N).
\end{equation}

	On the other hand, using that $| \qV_i^s - \qV_j^s | = \varrho( \omega_N^s )$ for at least one pair $(i,j)$,
and furthermore that $\varrho( \omega_N^s ) \leq \rho(N)$, we obtain
\begin{align*}
\langle V_s \rangle(\omega_N^s ) + {\frac{1}{s}}
& = 
{\frac{1}{s} \frac{2}{N(N-1)}} \mathop{\sum\sum}_{1 \leq i < j \leq N} \left|\qV_i^s -\qV_j^s\right|^{-s} \\
&= \left[ \rho(N) \right]^{-s} \frac{1}{s} \frac{2}{N(N-1)}
 \mathop{\sum\sum}_{1 \leq i < j \leq N} \left[ \frac{\rho(N)}{\left| \qV_i^s - \qV_j^s \right|}\right]^{s}  \\
& \geq \left[ \rho(N) \right]^{-s} \frac{1}{s} \frac{2}{N(N-1)} > 0.
\end{align*}
	Hence, 
\begin{align*}
\left[ \langle V_s \rangle( \omega_N^s ) + {\frac{1}{s}}\right]^{-1/s} 
& \leq 
\rho(N) s^{1/s} \left( N(N-1)/2 \right)^{1/s} \\
& \to \rho(N)\qquad \mbox{as}\quad s\to\infty;
\end{align*}
but again, $\langle V_s \rangle(\omega_N^s ) = v_s^{}(N)$, and so we get
\begin{equation} \label{eq:s.to.infty.Bc}
\limsup_{s \to \infty} \left[ v_s (N) +{\textstyle\frac{1}{s}}\right]^{-1/s} \leq \rho(N).
\end{equation}
	By \eqref{eq:s.to.infty.Bb} and \eqref{eq:s.to.infty.Bc}
\begin{equation*}
\lim_{s \to \infty} \left[ v_s (N) +{\textstyle\frac{1}{s}}\right]^{-1/s} = \rho(N).
\end{equation*}
	This completes the proof of \eqref{eq:s.to.infty.B}.
\newpage

\section{} \label{sec:appdx.C}

We now prove the strict monotonic increase of $s\mapsto v_s^{}(N)$.

\subsection*{Proof of Relation \eqref{eq:vOfsISincreasing}} \label{sec:sTOvISmonotoneUP}
 
We begin with the observation that the map $s\mapsto V_s^{}(r)$ is monotone increasing for all
$r\in(0,2]$, in fact strictly so except when $r = 1$. 
To see this, take the first partial derivative of $V_s^{}(r)$ w.r.t. $s$ to get 
\begin{equation}\label{eq:sDERofVsOFr}
\partial_s^{} V_s^{}(r) 
=    
 -s^{-2}\left(r^{-s}-1\right) -s^{-1} r^{-s} \ln r.
\end{equation}
Clearly, r.h.s.(\ref{eq:sDERofVsOFr}) $=0$ if $r=1$. 
We now show that r.h.s.(\ref{eq:sDERofVsOFr}) $>0$ if $r\neq 1$. 

To this end, now take  the first partial derivative of $\partial_s^{} V_s^{}(r)$ w.r.t. $r$ to get
\begin{equation}\label{eq:srDERofVsOFr}
\partial_{r}^{} \partial_{s}^{} V_s^{}(r) 
= 
r^{-s-1} \ln r.
\end{equation}
Clearly, r.h.s.(\ref{eq:srDERofVsOFr}) $=0$ iff $r=1$; thus, $r\mapsto \partial_s^{} V_s^{}(r)$ has a
critical point at $r=1$, and only this one. 
Finally, take the second partial derivative of $\partial_s^{} V_s^{}(r)$ w.r.t. $r$ to get
\begin{equation}\label{eq:srrDERofVsOFr}
\partial_{r}^{2} \partial_{s}^{} V_s^{}(r) 
=  
r^{-s-2}\left(1- (1+s) \ln r \right).
\end{equation}
Evaluating r.h.s.(\ref{eq:srrDERofVsOFr}) at $r=1$ yields $\partial_{r}^{2} \partial_{s}^{} V_s^{}(1)=1$; thus, 
 for each $s$ the map $r\mapsto \partial_s^{} V_s^{}(r)$ has a non-degenerate minimum at $r=1$ with value $0$, and no
other critical point. 
Therefore, $\partial_s^{} V_s^{}(r)\geq 0$, with ``$=0$'' iff $r=1$.

With the help of this calculus result we now conclude that whenever $s>t$, then
\begin{equation}\label{eq:vOfsISupINs}
v_s^{}(N) 
=  
\langle V_s\rangle(\omega_N^s)
>
\langle V_t\rangle(\omega_N^s)
\geq
\langle V_t\rangle(\omega_N^t)
=
v_t^{}(N);
\end{equation}
here, the strict inequality holds because there is no optimizing $N$-point configuration on $\Sset^2$ with
all pairs having distance~1.

This completes the proof. 

\newpage
\section{} \label{sec:appdx.D}
\subsection*{Spherical digital nets} \label{sec:sph.digital.nets}

	Maximal sum-of-distance points (i.e. optimal configurations $\omega_N^s$ for $s = -1$)
provide optimal integration nodes for equal-weight numerical integration rules on the sphere; 
see \cite{BrDi2013b} and \cite{BrSaSlWo14,BrWoManuscript}. 
	In general, such configurations are obtained by solving a highly non-linear optimization problem which 
makes them impractical for large number of points.

	Digital nets and sequences introduced in \cite{Ni1987} are efficiently computable so-called 
\emph{low-discrepancy} point systems in the unit square that define effective Quasi-Monte Carlo rules for
integrating functions on the unit square
\begin{equation*}
\frac{1}{N} \sum_{j=1}^N f( \xV_j ) \approx \iint_{\!\![0,1]^2} f( \xV ) \, d \xV.
\end{equation*}
	Informally speaking, the points of a digital net in $[0,1]^2$ are distributed in such a way that a large 
number of elementary rectangles contain precisely the fraction of all points that corresponds to their area; cf. \cite{DiPi2010}
and Fig.~\ref{fig:Sobol.pts}.
\begin{figure}[H]
\centering
\includegraphics[scale=.75]{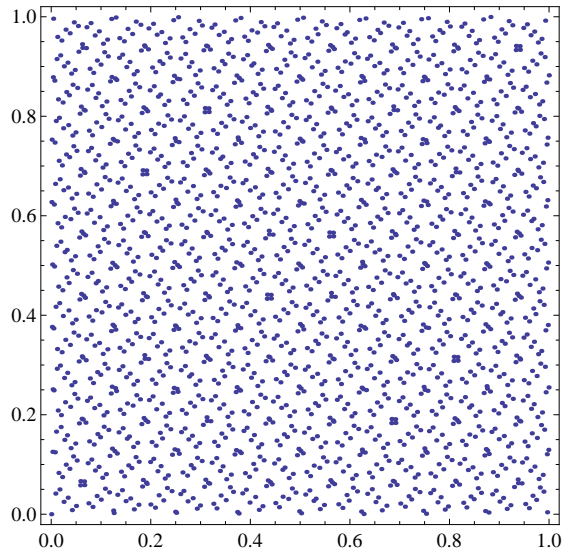} 
\caption{\footnotesize{$2048$ Sobol' points in the unit square generated using a method by \cite{JoKu2003}.}}\label{fig:Sobol.pts}
\end{figure}
	This distribution property 
remains unchanged when lifting a digital net and elementary rectangles (now called spherical rectangles) 
to $\Sset^2$ using the area-preserving Lambert transformation of the map makers.
	These spherical digital nets are studied in \cite{BrDi2012} and \cite{AiBrDi2012}. 
	Of particular interest is that numerically they are comparable with maximal sum-of-distance points. 
	It is conjectured\footnote{Numerical results based on the Sobol' points implemented in Matlab for $N$ up
to $1$ Million points support this conjecture (cf. \cite{BrDi2012}).}
 that $\langle V_{-1} \rangle( \omega_N^{\mathrm{sphDN}} )$ will approach the same limit as $v_{-1}(N)$ 
with the same rate of convergence as $N\to\infty$. 

	We tested for local concavity a sequence of $N$-point spherical digital nets formed by the first $N$ points of a 
Sobol' sequence lifted to the sphere. 
	For the implementation of the Sobol' points we used \cite{JoKu2003}; cf. Fig.~\ref{fig:Sobol.pts}.
	The graph of $N \mapsto \langle V_{-1} \rangle( \omega_N^{\mathrm{sphDN}} )$ (Fig.~\ref{fig:Sobol.vs}) shows an overall 
concave shape. 
\begin{figure}[H]
\centering
\includegraphics[scale=.75]{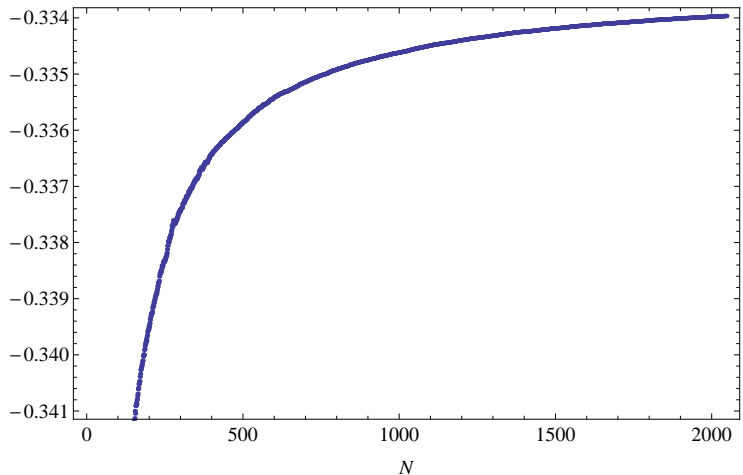} 
\caption{\footnotesize{$\langle V_{-1} \rangle( \omega_N^{\mathrm{sphDN}} )$ based on the $2048$ Sobol' points given in 
			Fig.~\ref{fig:Sobol.pts}.}}\label{fig:Sobol.vs}
\end{figure}
	Yet, some irregularities are clearly discernible in Fig.~\ref{fig:Sobol.vs}; in fact,
the discrete second derivative of $N\mapsto \langle V_{-1} \rangle( \omega_N^{\mathrm{sphDN}})$  reveals
that $N\mapsto\langle V_{-1} \rangle( \omega_N^{\mathrm{sphDN}})$
for this sequence of spherical digital nets is locally highly non-concave; see Fig.~\ref{fig:Sobol.ddot}.
\begin{figure}[H]
\centering
\includegraphics[scale=.75]{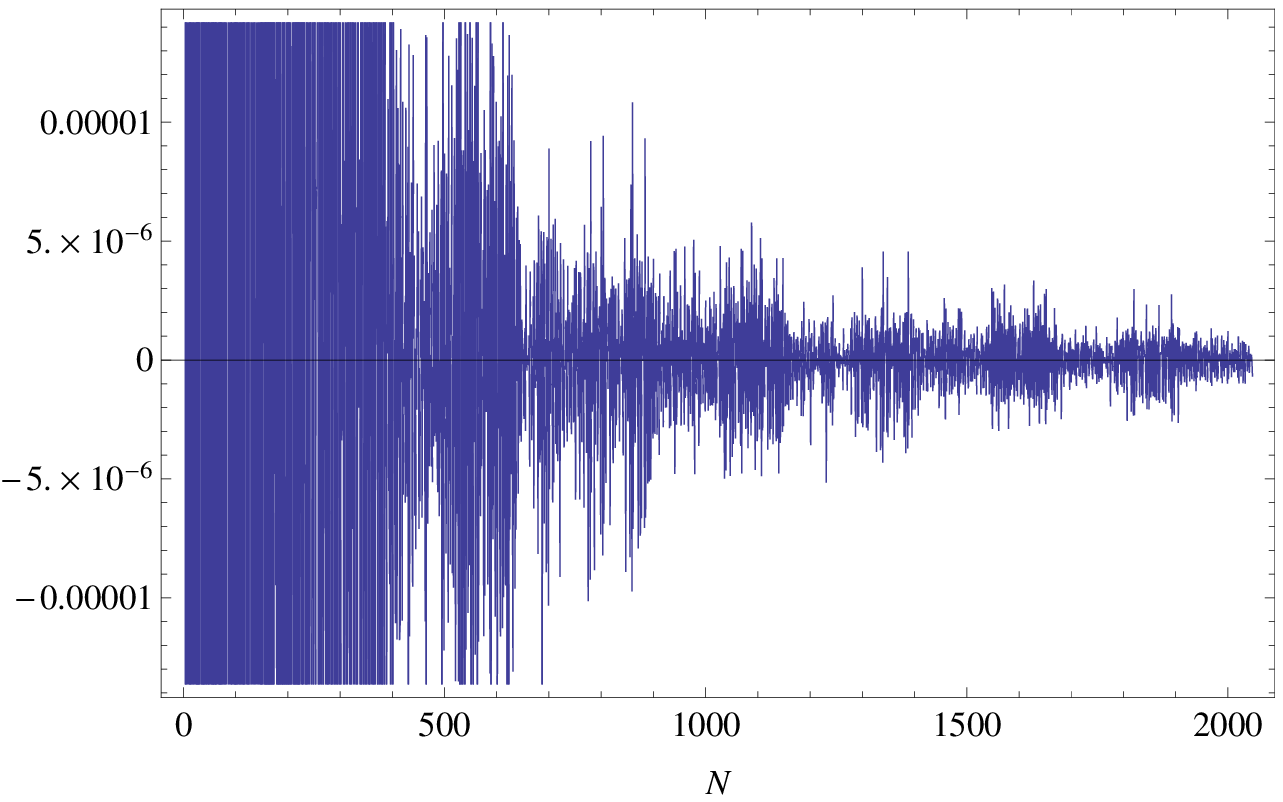} 
\caption{\footnotesize{Discrete second derivative of $\langle V_{-1} \rangle( \omega_N^{\mathrm{sphDN}} )$ based on the $2048$ 
			Sobol' points given in Fig.~\ref{fig:Sobol.pts}}. 
	The discrete points are joint by lines to guide the eye.}\label{fig:Sobol.ddot}
\end{figure}

\end{appendices}

\newpage

\bibliographystyle{modamsplain}


\vfill
\vfill
\hrule
\bigskip

nerattini@fi.infn.it

brauchart@math.tugraz.at

miki@math.rutgers.edu



\newpage

\section*{ ``Magic'' numbers in Smale's 7th problem:\\
           \centerline{Supplement}}

This supplementary section lists the standardized Riesz $s$-energy data used in the main part of our paper.

	\subsection{Tables for $s=-1$}
\def\ifundefined#1{\expandafter\ifx\csname#1\endcsname\relax}


\ifundefined{inputGnumericTable}

	\def\gnumericTableEnd{\end{document}}


\else

   \def\gnumericTableEnd{}


\fi


\providecommand{\gnumericmathit}[1]{#1} 
\providecommand{\gnumericPB}[1]%
{\let\gnumericTemp=\\#1\let\\=\gnumericTemp\hspace{0pt}}
 \ifundefined{gnumericTableWidthDefined}
        \newlength{\gnumericTableWidth}
        \newlength{\gnumericTableWidthComplete}
        \newlength{\gnumericMultiRowLength}
        \global\def\gnumericTableWidthDefined{}
 \fi
 \ifthenelse{\isundefined{\languageshorthands}}{}{\languageshorthands{english}}
\providecommand\gnumbox{\makebox[0pt]}

\setlength{\bigstrutjot}{\jot}
\setlength{\extrarowheight}{\doublerulesep}

\setlongtables

\setlength\gnumericTableWidth{%
	53pt+%
	135pt+%
	135pt+%
0pt}
\def\gumericNumCols{3}
\setlength\gnumericTableWidthComplete{\gnumericTableWidth+%
         \tabcolsep*\gumericNumCols*2+\arrayrulewidth*\gumericNumCols}
\ifthenelse{\lengthtest{\gnumericTableWidthComplete > \linewidth}}%
         {\def\gnumericScale{\ratio{\linewidth-%
                        \tabcolsep*\gumericNumCols*2-%
                        \arrayrulewidth*\gumericNumCols}%
{\gnumericTableWidth}}}%
{\def\gnumericScale{1}}


\ifthenelse{\isundefined{\gnumericColA}}{\newlength{\gnumericColA}}{}\settowidth{\gnumericColA}{


\ifthenelse{\isundefined{\languageshorthands}}{}{\languageshorthands{\languagename}}
\gnumericTableEnd

\end{document}